\documentclass[11pt]{article}

\usepackage{fullpage}

\usepackage{dsfont}
\usepackage{latexsym}
\usepackage{natbib}
\usepackage{amsmath}
\usepackage{amsthm}
\usepackage{amssymb}
\usepackage{amsfonts}
\usepackage{mathrsfs}
\usepackage{tikz}
\usepackage{aliascnt}
\usepackage{pstricks}
\usepackage{pst-all}
\usepackage{pstricks-add}
\usepackage{pst-plot}
\usepackage{graphicx}
\usepackage{subfig}
\usepackage{enumerate}

\usepackage[pdfpagelabels,pdfpagemode=None]{hyperref}
\usepackage{algorithmic}
\usepackage{algorithm}

\newcommand{\STOC}[1]{}
\newcommand{\NOTSTOC}[1]{#1}
\newcommand{\IFSTOCELSE}[2]{#2}

\delimiterfactor=1100

\setlength{\delimitershortfall}{0pt}

\newcommand{\argmax}{\operatorname{arg\,max}}

\newtheorem{theorem}{Theorem}%

\newaliascnt{lemma}{theorem}
\newtheorem{lemma}[lemma]{Lemma}%
\aliascntresetthe{lemma}

\newaliascnt{claim}{theorem}
\aliascntresetthe{claim}

\newaliascnt{corollary}{theorem}
\newtheorem{corollary}[corollary]{Corollary}%
\aliascntresetthe{corollary}

\newaliascnt{proposition}{theorem}
\newtheorem{proposition}[proposition]{Proposition}%
\aliascntresetthe{proposition}

\newaliascnt{remark}{theorem}

\aliascntresetthe{remark}

\newaliascnt{algo}{procedure}
\aliascntresetthe{algo}

\theoremstyle{definition}
\newtheorem{definition}{Definition}

\newtheorem{example}{Example}

%
%



\newcommand{\AutoAdjust}[3]{\mathchoice{ \left #1 #2  \right #3}{#1 #2 #3}{#1 #2 #3}{#1 #2 #3} }
\newcommand{\Xcomment}[1]{{}}

\newcommand{\inteval}[1]{\Big[#1\Big]}
\newcommand{\InParentheses}[1]{\AutoAdjust{(}{#1}{)}}
\newcommand{\InBrackets}[1]{\AutoAdjust{[}{#1}{]}}
\newcommand{\Ex}[2][]{\operatorname{\mathbf E}_{#1}\InBrackets{#2}}
\newcommand{\Prx}[2][]{\operatorname{\mathbf{Pr}}_{#1}\InBrackets{#2}}
\def\prob{\Prx}
\def\expect{\Ex}

\newcommand{\super}[1]{^{(#1)}}
\newcommand{\dd}{\mathrm{d}}  
\newcommand{\given}{\;\mid\;}

\newcommand{\expost}[1]{\tilde{#1}}

\newcommand{\dominated}{\preceq}
\newcommand{\Dist}{\dist}
\newcommand{\outcomespace}{W}
\newcommand{\payment}{p}
\newcommand{\distover}[1]{\Delta\InParentheses{#1}}
\DeclareMathOperator{\ALLOC}{Alloc}
\DeclareMathOperator{\PAYMENT}{Payment}
\DeclareMathOperator{\REV}{Rev}
\newcommand{\Rev}[1]{\REV[#1]}
\DeclareMathOperator{\UB}{UB}
\DeclareMathOperator{\MARGREV}{MR}
\newcommand{\MargRev}[1]{\MARGREV[#1]}
\DeclareMathOperator{\PSEUDOMARGREV}{PMR}
\newcommand{\PSEUDOMargRev}[1]{\PSEUDOMARGREV[#1]}
\DeclareMathOperator{\RULE}{Outcome}

\newcommand{\ttoq}{\operatorname{Quant}}

\newcommand{\mech}{{\cal M}}

\newcommand{\toutcomestepopt}[1][\exquant]{\toutcome^{#1}}

\newcommand{\allocstepopt}[1][\exquant]{\talloc^{#1}}

\newcommand{\qallocmr}{\qalloc^{\text{\it MR}}}
\newcommand{\callocmr}{\calloc^{\text{\it MR}}}
\newcommand{\callocsmr}{\callocs^{\text{\it MR}}}
\newcommand{\cumcallocmr}{\cumcalloc^{\text{\it MR}}}
\newcommand{\cumqallocmr}{\cumalloc^{\text{\it MR}}}
\newcommand{\toutcomemr}{\toutcome^{\text{\it MR}}}

\newcommand{\typespacesize}{m}
\newcommand{\distmr}{G^{\text{\it MR}}}

\newcommand{\pralloc}{\pi}

\newcommand{\optj}{{j^*}}

\newcommand{\qallocalt}{z}
\newcommand{\revscalar}{R^*}
\newcommand{\lotteryset}{L}


\newcommand{\maxvv}{\psi_{\max}}
\newcommand{\sos}{\nu_{(2)}}
\newcommand{\threshutil}{\mu}
\newcommand{\vv}{\psi}
\newcommand{\vvi}[1][i]{\psi_{#1}}
\DeclareMathOperator{\ORR}{ORR}
\newcommand{\eventi}{{\cal E}_i}



\newcommand{\ironed}{\bar}
\newcommand{\constrained}{\hat}
\newcommand{\optconstrained}{\composed{\optimized}{\constrained}}
\newcommand{\optimized}{\starred}
\newcommand{\differentiated}[1]{#1'}
\newcommand{\fortype}{\tilde}

\newcommand{\starred}[1]{#1^\star}
\newcommand{\noaccents}[1]{#1}
\newcommand{\composed}[3]{#1{#2{#3}}}

\newcommand{\forexquant}[1]{#1^{\exquant}}

\newcommand{\newagentvar}[3][\noaccents]{%
\expandafter\newcommand\expandafter{\csname #2\endcsname}{#1{#3}}%
\expandafter\newcommand\expandafter{\csname #2s\endcsname}{#1{\boldsymbol{#3}}}%
\expandafter\newcommand\expandafter{\csname #2smi\endcsname}[1][i]{#1{\boldsymbol{#3}}_{-##1}}%
\expandafter\newcommand\expandafter{\csname #2i\endcsname}[1][i]{#1{#3}_{##1}}%
\expandafter\newcommand\expandafter{\csname #2ith\endcsname}[1][i]{#1{#3}_{(##1)}}%
}

%
%
\newagentvar{alloc}{x}
\newagentvar[\tilde]{talloc}{\alloc}

%
%
\newagentvar{quant}{q}
\newagentvar[\constrained]{exquant}{\quant}
\newagentvar[\constrained]{critquant}{\quant}  
\newagentvar[\optconstrained]{monoq}{\quant}


\newagentvar{qprice}{\price}
\newagentvar{qrev}{R}
\newagentvar[\ironed]{iqrev}{\qrev}

\newagentvar{qalloc}{\alloc}
\newagentvar{cumalloc}{X}
\newagentvar[\constrained]{calloc}{\qalloc}
\newagentvar[\composed{\forexquant}{\constrained}]{callocstep}{\qalloc}
\newagentvar[\forexquant]{qallocstep}{\qalloc}
\newagentvar[\constrained]{cumcalloc}{\cumalloc}
\newagentvar[\ironed]{ialloc}{\qalloc}
\newagentvar[\ironed]{icumalloc}{\cumalloc}

\newagentvar{outcome}{w}
\newagentvar[\fortype]{toutcome}{\outcome}
\newagentvar[\composed{\forexquant}{\fortype}]{toutcomestep}{\outcome}

\newagentvar{rev}{R}
\newagentvar[\differentiated]{marg}{\rev}
\newagentvar{rawrev}{P}
\newagentvar[\differentiated]{rawmarg}{\rawrev}

\newagentvar[\tilde]{pseudorev}{\rev}
\newagentvar[\tilde]{pseudorawrev}{\rawrev}

%
%

\newagentvar{typespace}{T}
\newagentvar{typesubspace}{S}

\newagentvar{type}{t}
\newagentvar{val}{v}

\newagentvar{Val}{V}

\newagentvar{price}{p}

\newagentvar[\constrained]{critval}{\val}
\newagentvar[\constrained]{reserve}{\val} 
\newagentvar[\optconstrained]{monop}{v}

\newagentvar{util}{u}

\newagentvar{strat}{s}
\newagentvar{bid}{b}

\newagentvar{virt}{\phi}
\newagentvar{qvirt}{\phi}
\newagentvar[\ironed]{ivirt}{\virt}

\newagentvar{dist}{F}
\newagentvar{dens}{f}
\newagentvar{hazard}{h}

\newagentvar[\ironed]{iprice}{\price}  
\newagentvar[\ironed]{ival}{\val}  

\newagentvar{ints}{{\cal I}}

\newcommand{\reals}{{\mathbb R}}

\newcommand{\stepdist}[1]{G^{#1}}

\newcommand{\optpricing}{ex ante optimal pricing}
\newcommand{\optpricings}{ex ante optimal pricings}
\newcommand{\pseudopricing}{ex ante pseudo pricing}
\newcommand{\pseudopricings}{ex ante pseudo pricings}

\newcommand{\patht}{{r}}

\let\paragraphwithoutperiod\paragraph
\renewcommand{\paragraph}[1]{\paragraphwithoutperiod{#1.}}

\title{The Simple Economics of Approximately Optimal Auctions\thanks{This work was done in part while all authors were at Northwestern
University.  The second author was supported by NSF grants CCF-0643934
and AF-0910940 at Cornell University, the remaining by NSF CCF-0830773 at
Northwestern University.}}

\author{Saeed Alaei \\
Cornell University \\
Dept.\@ of Computer Science \\
{\tt saeed.a@gmail.com} 
\and Hu Fu \\
Microsoft Research \\
New England Lab \\
{\tt hufu@microsoft.com} \\
\and Nima Haghpanah \\
Northwestern University \\
EECS Department \\
{\tt nima.haghpanah@gmail.com}
\and Jason Hartline \\
Northwestern University \\
EECS Department \\
{\tt hartline@northwestern.edu}
}


\begin{document}

\begin{titlepage}
\maketitle
\begin{abstract}
The intuition that profit is optimized by maximizing marginal revenue
is a guiding principle in microeconomics.  In the classical auction
theory for agents with linear utility and single-dimensional
preferences, \citet{BR89} show that the optimal auction of \citet{M81}
is in fact optimizing marginal revenue.  In particular Myerson's
virtual values are exactly the derivative of an appropriate revenue
curve.

This paper considers mechanism design in environments where the agents
have multi-dimensional and non-linear preferences.  Understanding good
auctions for these environments is considered to be the main challenge
in Bayesian optimal mechanism design.  In these environments
maximizing marginal revenue may not be optimal, and furthermore, there
is sometimes no direct way to implement the marginal revenue
maximization.  Our contributions are three fold: we characterize the
settings for which marginal revenue maximization is optimal (by
identifying an important condition that we call {\em revenue
  linearity}), we give simple procedures for implementing marginal
revenue maximization in general, and we show that marginal revenue
maximization is approximately optimal.  Our approximation factor
smoothly degrades in a term that quantifies how far the environment is
from an ideal one (i.e., where marginal revenue maximization is
optimal).  Because the marginal revenue mechanism is optimal for
well-studied single-dimensional agents, our generalization immediately
approximately extends many results for single-dimensional agents to
more general preferences.

Finally, one of the biggest open questions in Bayesian algorithmic
mechanism design is in developing methodologies that are not
brute-force in the size of the agent type space (usually exponential
in the dimension for multi-dimensional agents).  Our methods identify
a subproblem that, e.g., for unit-demand agents with values drawn from
product distributions, enables approximation mechanisms that are
polynomial in the dimension.
\end{abstract}

\thispagestyle{empty}
\end{titlepage}

\newpage

\section{Introduction}
\label{s:intro}

{\em Marginal revenue} plays a fundamental role in microeconomic
theory.  For example, a monopolist providing a commodity to two
markets each with its own concave revenue (as a function of the supply
provided to that market) optimizes her profit by dividing her total
supply to equate the marginal revenues across the two markets.
Moreover this central economic principle also governs classical
auction theory.  \citet{M81} characterizes profit maximizing
single-item auction as formulaically optimizing the {\em virtual
  value} of the winner; \citet{BR89} reinterpret Myerson's virtual
value as the marginal revenue of a certain concave revenue curve.

Because it is simple and intuitive, the Myerson-Bulow-Roberts approach
provides the basis for most of Bayesian auction theory.  Unfortunately
though, this theory has been limited to settings where agents have linear
single-dimensional preferences, i.e., where an agent's utility is
given by her value for service less her payment.  Consequently,
Bayesian auction theory is often similarly limited.  With more general
forms of agent preferences; especially multi-dimensionality, e.g., for
multi-item auctions, or non-linearity, e.g., risk aversion or budgets;
auction theory is complex, less versatile, and often not well
understood.

Our main result is to show that hidden under the complexity of optimal
mechanism design problems for agents with multi-dimensional and
non-linear (henceforth:\@ general) preferences is marginal revenue
maximization.  The approach of marginal revenue maximization
decomposes a multi-agent mechanism design problem as a composition of
simple single-agent mechanism design problems, specifically, from the
construction of the appropriate notion of revenue curves.  This new
approach for general preferences uncovers a condition we refer to as
{\em revenue linearity} that is satisfied by all linear
single-dimensional preferences and governs the performance of the
marginal revenue mechanism more generally.  When the single-agent
problems are revenue linear, marginal revenue maximization is optimal
and the Myerson-Bulow-Roberts mechanism generalizes exactly.  When the
single-agent problems are approximately revenue linear, marginal
revenue maximization is approximately optimal (though the composition
of the single-agent mechanisms to implement marginal revenue
maximization requires new techniques).  Finally, because the marginal
revenue approach is structurally similar to the classical approach,
many results from classical auction theory approximately and
automatically extend to general preferences.

A central result in classical auction theory is derived from an
interpretation the Myerson-Bulow-Roberts mechanism (i.e., for
maximizing marginal revenue) in the special case of symmetric agents.
Our generalization admits a similar interpretation.  In the classical
setting there is a single item for sale and agents with i.i.d.\@
values for it; in our setting there is a single item for sale which
the seller can configure in one of several ways and agents have
i.i.d.\@ values for each configuration, e.g., a car that can be
painted red or blue (importantly, the seller sets the configuration
and the buyer cannot change it).\footnote{The red-or-blue car example
  is slightly unnatural as a forward auction (i.e., when the
  auctioneer is selling); however, the analogous reverse auction
  (i.e., the auctioneer is buying) is an important problem in
  procurement.  For instance the government may wish to hire a
  contractor to build a bridge.  Contractors can build different kinds
  of bridges.  From bids of the contractors over the different bridges
  the auctioneer selects a kind of bridge to procure, which contractor
  to procure it from, and how much is to be paid.  Our results for
  reverse auctions are analogous to those for forward auctions;
  interested readers can find the details in
  \autoref{app:procurement}.}
\begin{description}
\item[Selling a car.] Classical auction theory says that (a) the
  optimal way to sell an object (henceforth: a car) to a single agent
  with value drawn from a uniform distribution on $[0,1]$ is to post a
  take-it-or-leave-it price of 1/2, (b) the optimal way to sell a car
  to one of multiple agents with uniformly distributed values is to
  run a second-price auction with reserve price 1/2, and (c) more
  generally the optimal way to sell the car to multiple agents with
  i.i.d.\@ values is to run the second price auction with the same
  reserve price that would be offered as a take-it-or-leave-it price
  to one agent (assuming the distribution satisfies a mild
  assumption).

\item[Selling a red-or-blue car.]  Consider selling a car that, on
  sale, can be painted one of two colors, red or blue.\footnote{In
    this example we give the reserve price for $m=2$ colors; however,
    with the appropriate reserve price these results hold for any
    number of colors.}  Our theory says that (a) the optimal way to
  sell a red-or-blue car to a single agent with values for the
  different colors each drawn independently and uniformly from $[0,1]$
  is to post a take-it-or-leave-it price of $\sqrt{1/3}$ for either
  color, (b) the optimal way to sell a red-or-blue car to one of
  multiple agents each with i.i.d.\@ uniform values for each color is
  to run the second-price auction with reserve $\sqrt{1/3}$ and allow
  the winning agent to choose her favorite color on sale, and (c) more
  generally to sell a red-or-blue car to one of multiple agents each
  with values drawn i.i.d.\@ (from a distribution that satisfies the
  same mild assumption as above) for each color, the second price
  auction with the reserve price equal to the same price that would be
  offered to a single agent is (at worst) a 4-approximation to the
  optimal auction.
\end{description}
It should be noted that reducing a multi-dimensional preference to a
single-dimensional preference by always selling the winning agent her
favorite color is very natural and practical; however, it is not
generally optimal.  For example, when the agent's values for each
color is distributed uniformly on $[5,6]$, the analysis of
\citet{Tha04} shows that the optimal auction does not sell the agent
her favorite color subject to a reserve (in fact, it is not even
deterministic).  However, many relevant distributions, including the
uniform distribution on $[5,6]$, satisfy the mild assumption
sufficient for the four approximation, above.

\paragraph{Approach}

We focus on {\em service constrained environments} where, in any
outcome the mechanism produces, each agent is either considered served
or unserved.  The designer has a feasibility constraint that governs
which subset of agents can be simultaneously served, but the other aspects
of the outcome, e.g., payments, are unconstrained.  This model allows
additional unconstrained attributes of the service (e.g., the color of
the car in the previous red-or-blue car example, or the grade or quality of a service).  We assume that the
space of mechanisms is closed under convex combination, which allows
for randomized mechanisms.

The agents in the mechanism have independently but not necessarily
identically distributed preferences (a.k.a., types).  We do not place
any assumption on the agent preferences other than they are expected
utility maximizers.  This includes the most challenging preference
models in Bayesian mechanism design such as multi-dimensionality, public
or private budgets, and risk-aversion (e.g., as given by a concave
utility function).

{\em Revenue curves} result from the following single-agent mechanism
design problem.  Consider a single agent with private type drawn from
a known distribution.  Via the taxation principle \citep[see
  e.g.,][]{W97} the outcomes of a mechanism, for all possible reports
the agent might make in the mechanism, can be viewed as a menu where
the agent selects her favorite outcome by making the appropriate
report.  This menu may contain outcomes that are randomized and for
this reason we refer to it as a {\em lottery pricing}.  Ex ante, i.e.,
in expectation over the distribution of the agent's type, a
lottery pricing induces a probability with which the agent receives an
outcome that corresponds to service, and an expected payment, i.e.,
revenue.

As every lottery pricing induces an ex ante service probability and
expected revenue, we can ask the optimization question of identifying
the lottery pricing with a given ex ante service probability that has
the highest expected revenue.  As a function of the ex ante service
probability this optimal revenue induces a {\em revenue curve}.
Important in the construction of revenue curves are the lottery
pricings, i.e., single-agent mechanisms, that give the optimal revenue
for each ex ante service probability.  As the space of (mechanisms and
hence) lottery pricings is closed under convex combination, the
revenue curves are always concave.  The marginal revenue curve is
given by the derivative of the revenue curve with respect to ex ante
service probability.

As discussed in the opening paragraph, the standard economic intuition
suggests that a monopolist splitting the sale of a commodity between
two markets should do so to equate marginal revenue.  There is an
intuitive algorithmic reinterpretation of this fact.  If we break the
allocation to each market into tiny pieces ordered by willingness to
pay and attribute to each piece the change in revenue from adding that
piece (i.e., the marginal revenue), then the total revenue of an
allocation is the sum of the marginal revenues of each piece.  A
simple algorithm for optimizing this \emph{surplus of marginal
  revenue} is to repeatedly allocate a tiny amount to the market that
has the highest marginal revenue at its current allocation (until the
good is totally allocated or marginal revenues are non-positive).
Clearly this results in a final allocation where the markets marginal
revenues are roughly equal as in the microeconomic interpretation.
This allocation is optimal.

The main contribution of this paper is a methodology for constructing
multi-agent mechanisms from the simple single-agent lottery pricings
that define the revenue curve.  The main task of such a construction
is to specify a method for combining the single-agent mechanisms into
a multi-agent mechanism that is both feasible with respect to the
service constraint and obtains good revenue.  

\begin{definition}
\label{d:marginal-revenue-family}
The family of {\em marginal revenue mechanisms} take the following
form:
\begin{enumerate}
\item \label{step:quantile} Map each agent's private type (which may
  lie in an arbitrary type space) to a {\em quantile} in $[0,1]$.
\item \label{step:marg-rev} Calculate the marginal revenue of each
  agent as the derivative of the revenue curve at her quantile.
\item \label{step:feasible} Select for service the set of agents that
  maximize the {\em surplus of marginal revenue}, i.e., the total
  marginal revenue of agents served, subject to feasibility.
\item \label{step:outcome} Calculate for each agent the appropriate
  non-service aspects of the outcome, e.g., payments.
\end{enumerate}
\end{definition}

Thus far in the discussion only Steps~\ref{step:marg-rev} and
\ref{step:feasible} should be clear.  The remaining steps are
non-trivial in general and a main issue that we will be resolving.
For the special case of the selling-a-car example; where the agents'
values are independently, identically, and uniformly drawn from the
$[0,1]$ interval; the marginal revenue mechanism is instantiated as
follows.

For an agent with value drawn uniformly from the $[0,1]$ interval, the
optimal lottery pricing for ex ante service probability $\exquant$ is
to post a take-it-or-leave-it price of $\critval = 1-\exquant$.  The
revenue from such pricing is the price times the probability that it
is accepted.  Therefore, the revenue curve is $\rev(\exquant) =
(1-\exquant) \times \exquant$ and the marginal revenue curve is its
derivative $\marg(\exquant) = 1 - 2\exquant$.

The optimal lottery for ex ante probability $\exquant$ serves the
agent if her value $\val$ is on interval $[1-\exquant,1]$.  This is the
strongest $\exquant$ measure of the values from the distribution.
This motivates, in Step~\ref{step:quantile}, mapping value $\val$ to
quantile $\quant = 1-\val$.  Composing this mapping from value to
quantile with the above mapping from quantile to marginal revenue
gives a mapping from value $\val$ to marginal revenue as $2\val - 1$.

For a single-item auction, in Step~\ref{step:feasible} the surplus of
marginal revenue is maximized by serving nobody if all have negative
marginal revenues and, otherwise, by serving the agent with the
highest marginal revenue.  As the agents are symmetric and marginal
revenue is monotone in value, equivalently, the highest-valued agent
wins as long as her value is at least $1/2$ (the value for
which marginal revenue is zero, i.e., solving $2\val-1=0$).

The appropriate calculation of payments for Step~\ref{step:outcome} is
the following.  All losers have payments equal to zero.  The payment
of a winner is the minimum value she could declare and still win in
Step~\ref{step:feasible}, i.e., it is the maximum of the the second
highest agent value and 1/2.

This auction, as claimed in the earlier discussion of the
selling-a-car example, is the second-price auction with reserve $1/2$.
Moreover, the mapping from value to marginal revenue is identical to
the virtual values in the derivation of \citet{M81}.

\paragraph{Results}
This paper generalizes the marginal-revenue approach for agents with
single-dimensional linear preferences \citep{BR89} to general
preferences.  Our main algorithmic contribution is to generalize
Steps~\ref{step:quantile} and~\ref{step:outcome} thereby allowing the
construction of service constrained multi-agent mechanisms from
single-agent ex ante lottery pricings.  There are a number of
challenges in this endeavor.  First, revenue equivalance does not hold
for general preferences (which is used in the proof of optimality for
single-dimensional preferences).\footnote{For single-dimensional
  linear preference agents, the revenue equivalence theorem states
  that any two auctions with the same allocation in expectation have
  the same expected revenue.} Second, there is not a natural ordering
on types for general preferences (making it difficult to map types to
quantiles).  Third, the set of agents served by the marginal revenue
mechanism may be randomized.  None of these issues are present for
single-dimensional linear preferences.

%

Orthogonal to the question of implementing the marginal revenue
mechanism for general preferences are questions of quantifying its
performance.  Via the Myerson-Bulow-Roberts analysis it is known that
for single-dimensional linear preferences, the marginal revenue
mechanism is optimal.  As a first step in understanding the
performance of the mechanism more generally we give a new derivation
of the optimality for single-dimensinal agents.  Our derivation
exposes a previously unobserved property of single-dimensional linear
preferences which we refer to as {\em revenue linearity}.  Generally,
i.e., beyond single-dimensional linear preferences, the optimality of
the marginal revenue mechanism is implied by revenue linearity.
Moreover, if the single-agent problems are $\alpha$-approximately
revenue linear,
then the marginal revenue
mechanism is an $\alpha$ approximation to the optimal mechanism.

Revisiting our red-or-blue car example above, (a) is a description of
the optimal unconstrained lottery pricing, (b) is a consequence of the
revenue-linearity of unit-demand preferences that are uniformly
distributed on a multi-dimensional hypercube, and (c) is a consequence
of 4-approximate revenue linearity for agents with unit-demand
preferences drawn from any product distribution.

One of the main benefits of considering the marginal revenue mechanism
for approximately optimal mechanism design is that, as its structure
is similar to optimal mechanisms for single-dimensional environments,
many results from the extensive single-dimensional mechanism design
literature can be easily generalized.  The following are some of the
most important consequences.
\begin{description}
\item[Algorithmic mechanism design.] When weighted optimization is
  hard we can replace an exact algorithm for weighted maximization
  (Step~\ref{step:feasible} of \autoref{d:marginal-revenue-family})
  with any approximation algorithm using either of the
  single-dimensional black-box reductions of \citet{HL10} and
  \citet{HKM11}.
\item[Sequential posted pricing.] Sequential posted pricing mechanisms
  of \citet{CHMS10} and \citet{Y11} that are approximately optimal for
  single-dimensional agents are approximately optimal for general
  agents (in the same service constrained environment) and the same
  approximation factor is guaranteed.  Moreover, these sequential
  posted pricing bounds give another bound on the approximation factor
  of the marginal revenue mechanism.  The marginal revenue mechanism
  is in fact optimal within a class of mechanisms that contains the
  sequential posted pricing mechanisms; therefore, its approximation
  factor is no worse.  As an example, for the single-item service
  constraint, a sequential posted pricing bound implies an
  $e/(e-1)$-approximation regardless of approximate revenue linearity
  of the single-agent problems.
\item[Simple versus optimal.] While our marginal revenue mechanism is
  already generally much simpler than the optimal mechanism, we can
  get even simpler approximation mechanisms by applying methods
  developed for single-dimensional preferences to prove that simple
  mechanisms approximate the marginal revenue mechanism.  In
  particular, in single-dimensional environments maximizing marginal
  revenue is more complex than simple reserve-price-based mechanisms,
  i.e., mechanisms that maximize welfare subject to a reserve price.
  Nonetheless, \citet{HR09} show that reserve-price-based mechanisms
  are often approximately optimal.  When uniform pricing is
  approximately optimal, e.g., in generalizations of the red-or-blue
  car example, these mechanisms extend to general preferences.
\item[Single-sample mechanisms.] Approaches above have been for
  Bayesian optimal mechanism design where the designer optimizes a
  mechanism given a distribution of preferences.  \citet{DRY10} relax
  the assumption that the distribution is known and show that a
  mechanism based on drawing a single sample from the distribution
  gives a good approximation to the Bayesian optimal mechanism.
  Again, the single-sample framework extends to general preferences
  for which uniform pricing is approximately optimal.
\end{description}

It is important to contrast the simplicity of the marginal revenue
approach with recent algorithmic results in Bayesian mechanism design
for general agent preferences.  Recently, \citet{AFHHM12} and
\citet{CDW12,CDW12b,CDW13} gave polynomial-time mechanisms for large
important classes of Bayesian mechanism deign problems; the former
considers general preferences in service constrained settings (as does
this paper) and the latter considers multi-dimensional additive
preferences.  The two main conclusions of these works is that (a)
optimal mechanisms continue to have weighted maximization at their
core, and (b) the appropriate weights (i.e., virtual values) are
stochastic and can be solved for as a convex optimization problem,
e.g., via the ellipsoid method, that takes into account the
feasibility constraint and the distribution over types of all agents.
(This latter result is simply because the space of mechanisms is
convex, any point in the interior of a convex set can be implemented
by a convex combination of vertices, and vertices correspond to
linear, a.k.a., weighted, optimization.)  There are a number of
important distinctions between our work and these algorithmic results.
First, the weights in our derivation have a natural economic
interpretation as marginal revenues.  Second, the weights in our
derivation can be found easily from solutions to the single-agent
lottery pricing problems and are not derived from the solution to an
additional multi-agent optimization problem.  Third, in most cases,
the weights in our derivation depend only on the single-agent problem
and not on the multi-agent feasibility constraint or presence of other
agents.  Therefore, our approach affords significant structural
simplification and interpretation that enables the consequences
previously enumerated.  Finally, one of the biggest open questions in
the above algorithmic work is in developing approaches that are not
brute-force in each agent's type space.  As an example that breaks
this barrier, our approach gives approximately optimal mechanisms for
multi-dimensional unit-demand agents with values from product
distributions; these mechanisms are easy to compute with a
computational complexity that scales linearly with the dimensionality
of the type space (i.e., logarithmicly in the size of the type space).

\paragraph{Organization}
In \autoref{sec:single-dimensional} we review the
Myerson-Bulow-Roberts single-dimensional linear agent model and give a
new proof that the marginal revenue mechanism is revenue optimal.  The
proof follows from an argument that for single-dimensional linear
agents a class of single-agent lottery pricing problems satisfies a
natural revenue-linearity property.  
In \autoref{sec:general} we formalize the service constrained model
for general preferences and generalize the marginal revenue derivation
to general preferences that satisfy the previously identified
revenue-linearity property.
In \autoref{sec:implementation} we give general methods for implementing the
marginal revenue mechanism (e.g.\@ Steps~\ref{step:quantile}
and~\ref{step:outcome}) for general preferences regardless of revenue
linearity,
and in \autoref{sec:approx} we show that approximate revenue
linearity, properly defined, implies approximate optimality.
In \autoref{sec:simple} we suggest numerous extensions
of results in the single-dimensional mechanism design literature to
general preferences that are direct consequences of the marginal
revenue mechanism framework.

\section{Warmup: Single-dimensional Linear Preference}
\label{sec:single-dimensional}

In this section we warm up by giving a new proof that the marginal
revenue mechanism is revenue optimal for agents with
single-dimensional linear preferences.  In this proof we will
introduce many concepts that make our generalization possible (which
were not present in previous proofs).  The basic approach is as
follows.  We formulate an important class of lottery pricing problems,
the solution to which define a revenue curve.  We show that
single-dimensional linear agents are {\em revenue linear} in the sense
that it is optimal to decompose the allocation to any agent as a
convex combination of the solutions to these lottery pricing problems.
Finally, we observe that this decomposition implies that the optimal
revenue can be expressed in terms of the \emph{surplus of marginal
  revenue}\/: the sum of derivatives of the revenue curves of agents
served evaluated at points corresponding to the agents' types.  The
marginal revenue mechanism optimizes this latter term pointwise and,
therefore, also in expectation.  In the interest of brevity we will
keep the discussion informal; many of the proofs in this section are
subsumed by generalizations in \autoref{sec:general} which are given
formally.

\paragraph{Model}
A single-dimensional linear agent has a private type (a.k.a.\@
valuation) $\val \in \reals_+$ drawn at random from a probability
distribution with cumulative distribution function $\dist$ and density
function $\dens$.  Let $(\qalloc,\price)$ denote the outcome of
receiving a good or service with probability $\qalloc$ and making
expected payment $\price$.  For such an outcome, an agent with type
$\val$ has a linear utility $\util = \val \qalloc - \price$.

The geometry of single-dimensional auction theory is more readily
apparent when we index an agent's private type by its strength
relative to the distribution.  Let $\Val(\quant) =
\dist^{-1}(1-\quant)$ be the {\em inverse demand curve}, i.e.,
$\Val(\exquant)$ is the posted price that would be accepted by the
$\exquant$ measure of highest-valued agents (and rejected by all
others).  The {\em quantile} of an agent is the measure of
higher-valued types, i.e., an agent with type $\val$ has quantile
$\quant = 1-\dist(\val) = \Val^{-1}(\val)$.  Importantly, for $\val$
drawn at random from the distribution~$\dist$, $\quant =
\Val^{-1}(\val)$ is uniform on $[0,1]$ (therefore, expectations of
functions of $\quant$ are given by integrals with probability density
one).

A multi-agent mechanism design problem is given by $n$ such
single-dimensional agents, each with her respective inverse demand
curve (which may be distinct), and a feasibility constraint governing
the subsets of agents that can be simultaneously served.  E.g., for a
single-item auction, the feasibility constraint says that at most one
agent can be served; more generally, the feasibility constraint could
be given by a set system.  In the \emph{interim} stage, i.e., when an agent
knows her own value but not the values of other agents, the mechanism
looks to the agent like a single-agent mechanism.  It will thus be
sufficient for most of the analysis of optimal multi-agent mechanisms
to consider the appropriate single-agent problems.

From the perspective of an agent in a single-agent mechanism and as a
function of the agent's report, the agent is served with some
probability and makes some expected payment.  We can view this
function as a menu of service probabilities and expected payments
where the agent selects her favorite outcome by submitting the
corresponding report.  Notice that, depending on the agent's type, she
may choose different outcomes.  We may as well index the outcomes in
the menu by the quantile corresponding to the type for which the agent
would select the outcome, i.e., the agent with quantile $\quant$
chooses outcome $(\qalloc(\quant),\qprice(\quant))$.  We assume that
outcome $(\qalloc,\price) = (0,0)$ is in the menu.  This relabeling
and assumption imply {\em incentive compatibility} and {\em individual
  rationality}, respectively, i.e.,
\begin{align}
\label{eq:sd-IC}
\Val(\quant) \qalloc(\quant) - \qprice(\quant)
     &\geq \Val(\quant) \qalloc(\quant') - \qprice(\quant'),
     &\forall \quant,\quant' \in [0,1].
\tag{IC} \\
\label{eq:sd-IR}
\Val(\quant) \qalloc(\quant) - \price(\quant)
     &\geq 0,
     &\forall \quant \in [0,1].  \tag{IR}
\end{align}
We call such a menu a {\em lottery pricing}.  When the lottery pricing
is induced in the interim stage of a multi-agent mechanism, the
constraints above are {\em Bayesian incentive compatiblily} (BIC) and {\em
  interim individual rationality} (IIR).

The \citet{M81} characterization of Bayesian incentive compatible
mechanisms applies to lottery pricings and implies that the {\em
  allocation rule} $\qalloc(\cdot)$ is monotone non-increasing and the
{\em payment rule} $\price(\cdot)$ is given precisely as a function of
$\qalloc(\cdot)$.\footnote{Notice that quantiles are ordered in the
  opposite direction as types. Higher-valued types have low quantile
  and lower-valued types have high quantile.  Thus, the allocation
  rule should be non-increasing in quantile.}  An important
consequence of the latter part of this characterization is {\em
  revenue equivalence}.  We will make strong usage of both
monotonicity and revenue equivalence below, though the specific form
of the payment rule will not be important.

\paragraph{Constrained Lottery Pricings}
Given a lottery pricing and a distribution over the agent's value,
an ex ante expected payment $\expect[\quant]{\qprice(\quant)}$ and ex
ante probability of service $\expect[\quant]{\qalloc(\quant)}$ are
induced.  The single-agent lottery pricing problem that forms the
basis for the marginal revenue mechanism is the following.  Given an
ex ante constraint $\exquant$,
find the lottery pricing that serves the agent with
probability $\exquant$ and maximizes revenue. 

\begin{definition}
\label{def:revenue-curve}
The {\em revenue curve} $\rev(\exquant)$ is defined for all $\exquant
\in [0,1]$ as the revenue of the {\em ex ante optimal lottery pricing}
with allocation probability $\exquant$.
\end{definition}

To show that optimal mechanisms are convex combinations of ex ante
optimal lottery pricings, we consider a more general lottery pricing
problem.  Notice that the ex ante lottery pricing problem gives an
(equality) constraint on the total probability that the agent is
served in expectation over all quantiles she may have.  To get more
fine-grained control over the lottery pricing we additionally allow
upper bounds to be specified on the total probability of allocation to
subsets of quantiles.  Consider the following lottery pricing problem:
Given a monotone concave function $\cumcalloc(\quant)$, find the
optimal lottery pricing where the ex ante probability of allocating to
any $\exquant$ measure of quantiles is at most $\cumcalloc(\exquant)$
for all $\exquant \in [0,1)$ and exactly equal to $\cumcalloc(1)$ at
  $\exquant=1$.

To see why this constrained lottery pricing problem is the right one
to consider, notice the following.  First, because any allocation rule
is monotone, meaning stronger quantiles receive no lower probability
of service than weaker quantiles, the only set of measure $\exquant$
for which the constraint $\cumcalloc(\exquant)$ on service probability
may be tight is the strongest $\exquant$ measure of quantiles, i.e.,
$[0,\exquant]$.  For allocation rule $\qalloc(\cdot)$ the probability
of service to the strongest $\exquant$ measure of agents is exactly
$\cumalloc(\exquant) = \int_0^{\exquant} \qalloc(\quant)\,\dd\quant$.
We refer to $\cumalloc(\cdot)$ as the {\em cumulative allocation
  rule}.  Thus, the allocation constraint is exactly,
$\cumalloc(\exquant) \leq \cumcalloc(\exquant)$ for all $\exquant \in
[0,1]$ (with equality for $\exquant = 1$).

Of course we can view the cumulative allocation rule $\cumalloc$ of
$\qalloc$ as a constraint and observe that $\qalloc$ satisfies the
constraint with equality.  Moreover, among allocation rules that
satisfy $\cumalloc$ as a constraint, $\qalloc$ has the highest
probability on stronger (i.e., lower) quantiles.  Conversely, the
allocation constraint~$\cumcalloc$ (with corresponding
$\calloc(\quant) = \tfrac{\dd}{\dd\quant} \cumcalloc(\quant)$) is met
by any allocation rule~$\qalloc$ that relatively has allocation
probability shifted from stronger quantiles to weaker quantiles.
Specifically, $\calloc$ {\em majorizes} $\qalloc$.

\begin{definition}
\label{def:rev}
We say an allocation rule $\qalloc$ is \emph{weaker} than another allocation
rule~$\calloc$ if $\calloc$ majorizes $\qalloc$ in the sense discussed above.
$\Rev{\calloc}$ is defined for all allocation constraints $\calloc$ as
the revenue of the {\em interim optimal lottery pricing} with
allocation rule weaker than $\calloc$.\footnote{From the agent's
  perspective in a multi-agent mechanism, the allocation constraint
  $\calloc$ is applied at the interim stage of the mechanism, i.e.,
  when the agent knows her own type but considers the types of other
  agents to be drawn from their respective distributions.}
\end{definition}

Recall the ex ante lottery pricing problem of optimally serving the
agent with ex ante probability $\exquant$.  A {\em posted price} is
parameterized by a single price and is a simple example of a lottery
pricing (i.e., one that is deterministic): the two menu items are to
be served and pay the price or not to be served and pay nothing.  The
agent prefers service when her value exceeds the price and, otherwise,
she prefers no service.  For an agent with inverse demand curve
$\Val(\cdot)$, the posted price that serves with probability
$\exquant$ is $\Val(\exquant)$.  It gives expected revenue $\exquant
\, \Val(\exquant)$ which is a lower bound on $\rev(\exquant)$.  Its
allocation rule $\callocstep$ is the reverse step function that is one
on quantiles $[0,\exquant]$ and then zero on $(\exquant,1]$.  This
rule has the most service probability on strong quantiles among all
allocation rules that satisfy the ex ante allocation constraint
$\exquant$.  Of course, the revenue it generates $\exquant \,
\Val(\exquant)$ may not be a concave function of $\exquant$ whereas it
must be that the revenue curve $\rev(\cdot)$ is concave.  It can be
shown, in fact, that $\rev(\cdot)$ is exactly the concave hull of
$\exquant \, \Val(\exquant)$ and the optimal lottery for any
$\exquant$ is given by a posted pricing or, if $\rev(\cdot)$ is linear
at $\exquant$, the convex combination of two posted pricings
(corresponding to the end points of the interval containing $\exquant$
on which $\rev(\cdot)$ is linear).  The allocation rule of this convex
combination is a convex combination of two reverse step functions and,
in the sense described above, relative to posting price
$\Val(\exquant)$ has service probability shifted from stronger
quantiles to weaker quantiles.  This specific form (which is not
obvious) is not important for our rederivation of the optimal
mechanism; what is important is the following proposition (which is
obvious from the above discussion).

\begin{proposition}
\label{prop:optimal=weaker+better}
For single-dimensional linear agents, the ex ante optimal lottery pricings
have weaker allocation rules than posted prices and higher revenue.
\end{proposition}

\paragraph{Revenue Linearity}

We are now ready to give the new derivation of the marginal revenue
mechanism and its optimality.  

\begin{definition}
\label{def:revenue-linear}
An agent is {\em revenue linear} if $\Rev{\cdot}$ is a linear
functional, i.e., if the optimal revenue for allocation constraints
$\calloc = \calloc^A + \calloc^B$ is $\Rev{\calloc} = \Rev{\calloc^A}
+ \Rev{\calloc^B}$.
\end{definition}

We can derive a lower bound on the optimal revenue for any allocation
constraint~$\calloc$ as follows.  The constraint $\calloc$ is a
monotone non-increasing function. As reverse step functions provide a
basis for such functions, we can view $\calloc$ as a convex
combination of reverse step functions.  This convex combination can be
sampled from by drawing $\exquant$ at random from the distribution
$\stepdist{\calloc}$ with density $-\calloc'(\quant) = -
\frac{\dd}{\dd\quant}\calloc(\quant)$ and then posting price
$\Val(\exquant)$ (with allocation rule $\calloc^{\exquant}$).  The
allocation rule of the convex combination is exactly~$\calloc$; its
expected revenue is a lower bound on $\Rev{\calloc}$.

We can derive a better lower bound by, for ex ante constraint
$\exquant \sim \stepdist{\calloc}$, offering the ex ante optimal
lottery pricing (instead of posting price $\Val(\exquant)$).  As the
allocation rule for each of these lottery pricings is weaker than the
corresponding posted pricing allocation rule, the convex combination
of the allocation rules (denote it by $\qalloc$) is weaker than the
allocation constraint $\calloc$.  Therefore, $\qalloc$ is feasible for
$\calloc$ and its revenue gives a lower bound on $\Rev{\calloc}$.
Formally, with $\quant \sim U[0,1]$,
\begin{align*}
\Rev{\calloc} 
&\geq \expect[\exquant\sim\stepdist{\exquant}]{\rev(\exquant)}
\\&\geq \expect[\quant]{-\calloc'(\quant)\,\rev(\quant)}
\\&
= \big[-\calloc(\exquant)\,\rev(\exquant)\big]_0^1 + \expect[\quant]{\rev'(\quant)\,\calloc(\quant)}
\\&
= \expect[\quant]{\rev'(\quant)\,\calloc(\quant)}.
\end{align*}
The second equality follows from integration by parts and the third
equality from $\rev(1) = \rev(0)=0$. (Minor assumption: if the agent
is always served or never served then no revenue is obtained.)  This
construction motivates the following definition.

\begin{definition}
\label{def:marginal-revenue} The {\em marginal revenue} for an agent with quantile 
$\quant$ is $\rev'(\quant) = \frac{\dd}{\dd\quant} \rev(\quant)$; the
marginal revenue for an allocation constraint $\calloc$ is
$\MargRev{\calloc} = \expect[\quant]{\rev'(\quant)\,\calloc(\quant)}$.
\end{definition}

The definition of revenue linearity and the definition of the revenue
curve (as the optimal revenue subject to the ex ante constraint
$\exquant$) immediately imply the following theorem.

\begin{theorem}
\label{t:lin=>rev=margrev}
For a revenue-linear agent, the optimal revenue for an allocation
constraint is equal to its marginal revenue, i.e., for all $\calloc$,
$\Rev{\calloc} = \MargRev{\calloc}$.
\end{theorem}

The revenue linearity of single-dimensional linear agents is a simple
consequence of revenue equivalence \citep{M81} and the fact that the
optimal revenue for ex ante constraint $\exquant$ exceeds the posted
pricing revenue from $\Val(\exquant)$ but has a weaker allocation rule
(\autoref{prop:optimal=weaker+better}).

\begin{theorem}
\label{t:sd=>lin}
An agent with single-dimensional linear utility is revenue linear.
\end{theorem}

\begin{proof}
As we have seen above, the marginal revenue of an allocation
constraint is a lower bound on its optimal revenue. To show revenue
linearity, then, it suffices to upper bound the optimal revenue by the
marginal revenue.

For any allocation rule $\qalloc$ (or constraint) marginal revenue can
be written as
\begin{align}
\label{eq:MR-R}
\MargRev{\qalloc} &= \expect{\smash{-\qalloc'(\quant)\,\rev(\quant)}}\\
\label{eq:MR-R'}
                  &= \expect{\smash{\marg(\quant)\, \qalloc(\quant)}},\ \text{and}\\
\label{eq:MR-R''}
                  &= \marg(1) \cumalloc(1) + \expect{\smash{-\marg'(\quant) \, \cumalloc(\quant)}}.
\end{align}
We already saw the derivation of equation~\eqref{eq:MR-R'}
from~\eqref{eq:MR-R}, which follows from integration by parts and
$\rev(0) = \rev(1) = 0$.  Equation~\eqref{eq:MR-R''} follows from
integrating by parts again and $\cumalloc(0) = 0$ (by definition).
From equation~\eqref{eq:MR-R}, it is apparent that higher revenue
curves give higher revenue (as ``$-\qalloc'(\cdot)$'' is non-negative
for monotone allocation rule $\qalloc(\cdot)$).  From
equation~\eqref{eq:MR-R''}, it is apparent that higher allocation
rules, in the sense of majorization, give higher revenue (as
``$-\marg'(\cdot)$'' is non-negative for concave revenue curve
$\rev(\cdot)$ and majorization requires equality of $\cumalloc(1)$).
 
Let $\rawrev(\exquant)$ denote the expected revenue from posting
price $\Val(\exquant)$, i.e., $\rawrev(\exquant) = \exquant \,
\Val(\exquant)$.  Suppose we optimize for $\calloc$ and get some
(possibly less restrictive) allocation rule $\qalloc$, then optimizing
for $\qalloc$ as a constraint gives the same revenue,
\begin{align*}
\Rev{\calloc} &=\Rev{\qalloc}.
\\ \intertext{By revenue equivalence, the revenue of any allocation
  rule is given by its price-posting revenue curve $\rawrev(\cdot)$.  Therefore,}
\Rev{\qalloc} &= \expect{\smash{-\qalloc'(\quant)\,\rawrev(\quant)}}.
\\ \intertext{As $\rawrev(\quant) \leq \qrev(\quant)$ for all $\quant$, equation~\eqref{eq:MR-R} implies that the marginal revenue from $\rawrev(\cdot)$ is at most that of $\rev(\cdot)$ for allocation rule $\qalloc(\cdot)$:}
\expect{\smash{-\qalloc'(\quant)\,\rawrev(\quant)}} &\leq
\expect{\smash{-\qalloc'(\quant)\,\qrev(\quant)}} = \MargRev{\qalloc}.
\\ \intertext{As $\qalloc$ is majorized by $\calloc$, equation~\eqref{eq:MR-R''} implies that the marginal revenue of $\qalloc(\cdot)$ is at most that of $\calloc(\cdot)$ for revenue curve $\rev(\cdot)$:}
\MargRev{\qalloc} = \marg(1)\, \cumalloc(1) + \expect{\smash{-\marg'(\quant)\,\cumalloc(\quant)}} 
&\leq
\marg(1)\, \cumcalloc(1) + \expect{\smash{-\marg'(\quant)\,\cumcalloc(\quant)}} = \MargRev{\calloc}. \qedhere
\end{align*}
\end{proof}

\begin{corollary}
\label{t:sd-rev=margrev}
For a single-dimensional linear agent, the optimal revenue for an
allocation constraint is equal to its marginal revenue, i.e., for all
$\calloc$, $\Rev{\calloc} = \MargRev{\calloc}$.
\end{corollary}

\paragraph{Multi-agent Mechanisms}

The conclusion of the preceding discussion is that the optimal revenue
for any allocation constraint is equal to its marginal revenue.  

\begin{definition}
Any mechanism and distribution over types induces a profile $\allocs =
(\alloc_1,\ldots,\alloc_n)$ of interim allocation rules.  The {\em
  surplus of marginal revenue} is the sum of the marginal revenues of
interim allocation rules of each agent $\sum_i \MargRev{\alloc_i}$.
\end{definition}

Multi-agent mechanism design problems reduce to single-agent lottery
pricing problems as follows.  The following argument is the standard
in auction theory.  For an agent in the optimal mechanism, her
contribution to the revenue is equal to the marginal revenue of her
allocation rule (\autoref{t:sd-rev=margrev}).  We thus look for the
mechanism that optimizes the surplus of marginal revenue.  Consider
relaxing the incentive constraints (namely: monotonicity of the
allocation rule) and optimizing marginal revenue pointwise.
Specifically, when the agent quantiles are $\quants =
(\quanti[1],\ldots,\quanti[n])$ select the allocation $\allocs =
(\alloci[1],\ldots,\alloci[n])$ to maximize the surplus of
  marginal revenue $\sum_i \margi(\quanti) \, \alloci$ subject to
feasibility of $\allocs$ (e.g., for a single-item auction, serve the
agent with the highest positive marginal revenue, or none if the
marginal revenues are all negative).  Now check that the previously
relaxed incentive constraints are not violated.  Notice that since
revenue curves are concave, the marginal revenues are monotone
non-increasing in quantile, for any agent a stronger (lower) quantile
corresponds to a weakly higher marginal revenue, and so the induced
allocation rule is monotone.  Furthermore, as these allocations
optimize marginal revenue pointwise for all profiles of agent
quantiles, they certainly also maximize marginal revenue in
expectation over the agent quantiles.

Comparing the above construction with the marginal revenue mechanism
framework described in the introduction, the missing
Steps~\ref{step:quantile} and~\ref{step:outcome} are simple.  For
Step~\ref{step:quantile}, the mapping from value to quantile is given
by $\Vali^{-1}(\cdot)$ for each agent $i$ as described above.  For
Step~\ref{step:outcome}, the appropriate payments can be calculated
pointwise as follows:  Agents that are not served pay nothing and an
agent $i$ that is served pays the value $\Vali(\exquanti)$
corresponding to her critical quantile $\exquanti$, i.e., the quantile
after which she would no longer be served (via the payment identity).

\begin{theorem}
\label{t:sd-opt}
The marginal revenue mechanism is revenue optimal for
single-dimensional linear agents.
\end{theorem}

\begin{proof}
The optimal mechanism induces some profile $\allocs$ of interm
allocation rules.  By revenue linearity, the expected revenue of this
profile of interim allocation rules is equal to its surplus of
marginal revenue.  The marginal revenue mechanism selects its outcome
to optimize surplus of marginal revenue pointwise for the feasibility
constraint.  Its expected surplus of marginal revenue is, thus, at
least that of the optimal mechanism.
\end{proof}

\section{Multi-dimensional and Nonlinear Preferences}
\label{sec:general}\label{sec:prelim}

\paragraph{Bayesian mechanism design}

An agent has a private type~$\type$ from type space~$\typespace$ drawn
from distribution $\dist$ with density function $\dens$.  The agent
may be assigned outcome~$\outcome$ from outcome space~$\outcomespace$.
This outcome encodes what kind of service the agent receives and any
payments she must make for the service.  In particular the payment
specified by an outcome $\outcome$ is denoted by $\PAYMENT(\outcome)$.
The agent has a von Neumann--Morgenstern utility function: for
type~$\type$ and deterministic outcome $\outcome$ her utility is
$\util(\type, \outcome)$, and when $\outcome$ is drawn from a
distribution her utility is $\expect[\outcome]{\util(\type,
  \outcome)}$.\footnote{This form of utility function allows for
  encoding of budgets and risk aversion; we do not require
  quasi-linearity.}  We will extend the definition of the utility
function to distributions over outcomes $\distover{\outcomespace}$
linearly.  For a random outcome $\outcome$ from a distribution,
$\PAYMENT(\outcome)$ will denote the expected payment.

\begin{example}[A unit-demand quasi-linear-utility agent]
The preferences of a \emph{unit-demand quasi-linear-utility} agent are
as follows.  There are $m$~\emph{alternatives} and the agent's type is
given by a vector $(\val^1, \ldots, \val^m)$ representing her value
for each alternative.  An outcome is of the form $(\payment,
\pralloc^1, \ldots, \pralloc^m)$, where $\payment$ denotes the
payment, and each $\pralloc^j \in \{0, 1\}$ indicates whether the
agent gets the alternative~$j$, with $\sum_{j} \pralloc^j \leq 1$.
The agent's utility at such an outcome is then given by the linear
form $\sum_{j} \val^j \pralloc^j - \payment$.  When randomizing over
such outcomes, we relax the $\pralloc^j$'s to be in $[0, 1]$, still
with $\sum_{j} \pralloc^j \leq 1$.  Such a distribution with a
price~$\payment$ is called a \emph{lottery}.
\end{example}

\begin{example}[A single-dimensional public-budget agent]
The preferences of a {\em single-dimensional public-budget} agent are
as follows.  The agent has a publicly known budget~$B$, and her
type is given by her private value~$\val$ for an item being auctioned.
An outcome $\outcome = (\alloc, \price)$ indicates by $\alloc \in \{0,
1\}$ whether the agent gets the item, and by $\price$ the amount of
payment she makes.  In contrast to the single-dimensional
linear-utility agents of \autoref{sec:single-dimensional}, this
agent's utility is $\val \cdot \alloc - \price$ only if $\price \leq
B$, and negative infinity otherwise.
\end{example}

There are $n$ agents indexed $\{1,\ldots,n\}$ and each agent $i$ may
have her own distinct type space $\typespace_i$, utility function
$\util_i$, etc.  The agents types are indepently distributed.  A {\em
  direct revelation} mechanism takes as its input a profile of types
$\types = (\type_1, \ldots, \type_n) \in \typespace_1\times
\cdots \times \typespace_n$ and outputs ex post outcome
$\expost{\outcomes}(\types) \in \distover{\outcomespace_1 \times
  \cdots \times \outcomespace_n}$.  
Agent $i$'s {\em ex post outcome rule} is denoted by
$\expost{\outcome}_i(\types)$ and, with the other agents' types drawn
from the distribution, her \emph{interim outcome rule}
$\toutcome_i(\type_i)$ is distributed as
$\expost{\outcome}_i(\type_i,\types_{-i})$ with $\type_j\sim \dist_j$
for each $j\neq i$.  We say that a mechanism is \emph{Bayesian
  incentive compatible} if
\begin{align}
\label{constr:BIC}
\util_i(\type_i,\toutcome_i(\type_i)) \geq \util_i(\type_i,\toutcome_i(\type'_i)), \quad \forall i, \forall \type_i,\type'_i \in \typespace_i.
\tag{BIC}
\end{align}
A mechanism is \emph{interim individually rational} if
\begin{align}
\label{constr:BIR}
\util_i(\type_i,\toutcome_i(\type_i)) \geq 0, \quad \forall i, \forall
\type_i \in \typespace_i.  \tag{IIR}
\end{align}

The mechanism designer seeks to optimize an objective subject to
BIC, IIR, and ex post feasibility.  We
consider the objective of expected revenue, i.e.,
$\expect[\types]{\sum_i \PAYMENT(\toutcome_i (\type_i))}$; however, any
objective that separates linearly across the agents can be considered.
Below we discuss the mechanism's feasibility constraint.

\paragraph{Service constrained environments}
In a {\em service constrained environment} the outcome $\outcome$
provided to an agent is distinguished as being either a {\em service} or a {\em
  non-service} outcome, respectively, with $\ALLOC(\outcome) = 1$ or
$\ALLOC(\outcome) = 0$.  There is a feasibility constraint restricting
the set of agents that may be simultaneously served; there is no
feasibility constraint on how an agent is served.  With respect to the
feasibility constraint any outcome $\outcome \in \outcomespace$ with
$\ALLOC(\outcome) = 1$ is the same.  For example, payments are part of
the outcome but are not constrained by the environment.  An agent may
have multi-dimensional and non-linear preferences over distinct
service and non-service outcomes.

From least rich to most rich, standard service constrained
environments are {\em single-unit environments} where at most one
agent can be served, {\em multi-unit environments} where at most a
fixed number of agents can be served, {\em matroid environments} where
the set of agents served must be an independent set of a given
matroid, {\em downward-closed environments} where the set of agents
served can be specified by an arbitrary set systems for which all
subsets of a feasible set are feasible, and {\em general environments}
where the feasible subsets of agents can be given by an arbitrary set
system that may not even be downward closed.



\paragraph{Ex Ante Lottery Pricings and Revenue Curves}

The only aspect of the marginal revenue approach that translates
identically from single-dimensional preferences to general preferences
is the definition of the \optpricing\@ for allocation probabilities
$\exquant \in [0,1]$.  This is the lottery pricing (i.e., collection
of outcomes where the agent is permitted to choose her type-dependent
favorite) denoted $\toutcomestep(\cdot)$ that optimizes revenue
subject to the constraint that
$\expect[\type]{\ALLOC(\toutcomestep(\type))} = \exquant$.  The
revenue curve for the agent is then given by $\rev(\exquant) =
\expect[\type]{\PAYMENT(\toutcomestep(\type))}$ as per
\autoref{def:revenue-curve}.

\paragraph{Allocation rules}
The first challenge in generalizing the marginal revenue approach to
general preferences is determining the mapping from types to
quantiles.  This challenge arises as there is no explicit ordering of
an agent's type space $\typespace$ by strength.  E.g., if the type is
multi-dimensional then it is unclear which is stronger, a higher value
in one dimension and lower in another or vice versa.  In fact, which
is stronger often depends on the context, e.g., the competition from
other agents.

Our approach is based on two observations.  First, relative to a
mechanism and for a particular agent, the relevant part of the
mechanism is the (interim) outcome rule $\toutcome(\cdot)$. For a
given outcome rule $\toutcome(\cdot)$ an ordering on types by strength
can be defined.  Simply, a type that is more likely to be served is
stronger than a type that is less likely to be served.  I.e., $\type$
is stronger than $\type'$ relative to $\toutcome(\cdot)$ if
$\ALLOC(\toutcome(\type)) \geq \ALLOC(\toutcome(\type'))$.  This
definition induces a mapping from the type space to quantile space;
moreover, the distribution of quantiles induced by this mapping and
the distribution on types is uniform.\footnote{Quantiles are uniformly
  distributed when ties in allocation probability are measure zero;
  when there is a measureable probability of ties, quantiles can be
  defined by drawing uniformly from the interval containing the tie.}
Second, (by the above mapping) any outcome rule $\toutcome(\cdot)$
induces an allocation rule $\qalloc(\cdot)$ that maps quantile to
service probability.  This allocation rule has a simple intuition in
discrete type spaces: For each type $\type \in \typespace$ make a
rectangle of width equal to the probability of the type $\dens(\type)$
and height equal to the service probability of the type
$\ALLOC(\toutcome(\type))$.  Sort the types in decreasing order of
heights; the resulting monotone non-increasing piecewise constant
function from $[0,1]$ to $[0,1]$ is the allocation rule.  This is
generalized for continuous distributions as follows.

\begin{definition}
\label{def:allocation-rule}
For an agent with $\type \in \typespace$ drawn from distribution
$\dist$ and outcome rule $\toutcome(\cdot)$, the {\em allocation rule}
mapping quantiles to service probabilities is given by
$\qalloc(\exquant) = \sup\{ y :
  \prob[\type\sim\dist]{\ALLOC(\toutcome(\type)) \geq y} \leq \exquant\}$.
\end{definition}

\paragraph{Optimal Lottery Pricing}
With the definition of allocation rules for any lottery pricing above,
allocation constrained lottery pricings generalize naturally.  Even
though the order on types may change from one lottery pricing to
another, we can still ask for the lottery pricing with the optimal
revenue subject to a constraint on its allocation rule.  The optimal
lottery pricing for allocation constraint $\calloc$ with cumulative
allocation constraint $\cumcalloc$ is given by the outcome rule
$\toutcome(\cdot)$ that optimizes expected revenue subject to its
corresponding allocation rule $\qalloc$ with cumulative allocation
rule $\cumalloc$ satisfying $\cumalloc(\exquant) \leq
\cumcalloc(\exquant)$ for $\exquant \in [0,1]$ with equality at
$\exquant=1$.  As per \autoref{def:rev} the optimal revenue for
allocation constraint $\calloc$ is denoted $\Rev{\calloc}$.

We will generally denote by $\qalloc$ the optimal allocation rule for
constraint $\calloc$.  The ex ante constraint on total service
probability by $\exquant$ is given by the reverse step function at
$\exquant$ denoted $\callocstep$; the corresponding allocation rule of
the \emph{$\exquant$ \optpricing} is denoted $\qallocstep$.

\paragraph{Revenue Linearity and Marginal Revenue}
Revenue linearity and marginal revenue have the same definitions
(\autoref{def:revenue-linear} and \autoref{def:marginal-revenue}) as
for single-dimensional preferences. The marginal revenue of an
allocation constraint is $\MargRev{\calloc} =
\expect[\quant]{\marg(\quant)\,\calloc(\quant)}$.  By its construction
as the revenue of the appropriate convex combination of
\optpricings\ it is a lower bound on the optimal revenue, i.e.,
$\Rev{\calloc} \geq \MargRev{\calloc}$.  Again by its construction,
revenue linearity would imply that its revenue is equal to the optimal
revenue (\autoref{t:lin=>rev=margrev}).  We will desribe the
marginal-revenue approach for non-revenue-linear agents by analogy to
the single-dimensional case.



\begin{definition}
\label{def:optimal-marginal-revenue}
The {\em single-dimensional analog} of a service constrained
environment for general agents is the environment with
single-dimensional linear agents with the same revenue curves.  The
{\em optimal marginal revenue} for a service constrained environment
for general agents is the optimal revenue of the single-dimensional
analog (which is equal to its surplus of marginal revenue).
\end{definition}

Our approach to multi-agent mechanism design via the single-dimensional
analog is to look at the profile of interim allocation rules induced
by maximization of surplus of marginal revenue and then to construct a
mechanism for general agents that looks to each agent like the convex
combination of the \optpricings\@ for her allocation rule.  For
revenue curves $\rev_1,\ldots,\rev_n$, draw quantiles $\quants =
(\quant_1,\ldots,\quant_n)$ uniformly from $[0,1]^n$, serve to
maximize surplus of marginal revenue pointwise as $\sum_i
\margi(\quanti)\, \alloci$ for feasible $\allocs =
(\alloci[1],\ldots,\alloci[n])$.  We can interpret the allocation
rules induced by this process as allocation constraints for the
general environment and denote them by $\callocsmr =
(\callocmr_1,\ldots,\callocmr_n)$.  As for single-dimensional linear
agents (see \autoref{sec:single-dimensional}), one way to serve an
agent subject to allocation constraint $\calloc$ is to draw a quantile
$\exquant$ from the distribution $\stepdist{\calloc}$ with density
$-\frac{\dd}{\dd\quant}\,\calloc(\quant)$ and run the \optpricing\@
for ex ante constraint $\exquant$.  This approach suggests attempting
to implement the general mechanism with outcome rules that correspond
to allocation rules of the single-dimensional analog.  Denoting the
outcome rule for the $\exquant$ \optpricing\@ for agent $i$ by
$\toutcomestepopt[\exquant]_i(\type_i)$.  The agent's outcome rule
corresponding to constraint $\callocmr_i$ is $\toutcomemr_i(\type_i) =
\int_0^1
\toutcomestepopt[\exquant](\type_i)\,(-\dd\callocmr_i(\quant))$.
There may be multiple ways to implement this profile of outcome rules
ex post; however, the direct approach employed for single-dimensional
linear agents in \autoref{sec:single-dimensional} does not always
generalize.

\begin{definition}
\label{def:mrm}
The {\em marginal revenue outcome rule} of an allocation rule $\alloc$
is $\toutcome(\type) = \int_0^1
\toutcomestepopt[\exquant](\type)\,(-\dd\alloc(\quant))$.  A
\emph{marginal revenue mechanism} is one with interim outcome rules
equal to the marginal revenue outcome rules corresponding to the
optimal marginal revenue.
\end{definition}

\paragraph{Implementation with Revenue Linearity}

We show now that the marginal revenue mechanism generalizes exactly
for general preferences that satisfy revenue linearity. Moreover, we
show that in this case the marginal revenue mechanism inherits all of
the nice properties of the marginal revenue mechanism for
single-dimensional preferences.  Namely, it deterministically selects
the set of agents to serve, it is dominant strategy incentive
compatible (truthful reporting is a best response for any actions of
the other agents), and the mapping from types to quantiles to marginal
revenues is deterministic and {\em context free}\footnote{Note that
  this contrasts with recent algorithmic work in multi-dimensional
  optimal mechanism design where the optimal mechanism is
  characterized by mapping types stochastically to ``virtual values''
  and this mapping is solved for from the feasibility constraint and
  the distributions of all agents types.  See \citet{AFHHM12} and
  \citet{CDW12,CDW12b}.}  in that it does not depend on the
feasibility constraint or other agents in the mechanism.  The
mechanism, however, is optimal among the larger class of randomized
and Bayesian incentive compatible mechanisms.  As motivation for this
result, we will show subsequently that there are multi-dimensional
preferences that are revenue linear, e.g., when multi-dimensional
values are uniformly distributed on a hypercube.

The main challenge of implementing the marginal revenue mechanism is
in specifying Step~\ref{step:quantile}, i.e., the mapping from types
to quantiles, and Step~\ref{step:outcome}, i.e., selecting the
appropriate outcomes for the set of agents that are served.  If,
however, each agent's types are orderable by the following definition,
then both steps are essentially identical to the single-dimensional
case. 

\begin{definition}
\label{def:orderable}
A single-agent problem is {\em orderable} if there is an equivalence
relation on the types, and there is an ordering on the equivalence
classes, such that for any allocation constraint~$\calloc$, the
optimal outcome rule~$\toutcome$ induces an allocation rule that is
greedy by this ordering with ties between types in a same equivalence
class broken uniformly at random.\footnote{By greedy by the given
  ordering, we mean process each equivalance class in order and serve
  the corresponding types with as much probability as possible subject
  to the allocation constraint.  
  If all equivalance classes are measure zero,
  then the resulting allocation rule is equal to the allocation
  constraint.}
\end{definition}

Orderability may look like a stringent and unlikely condition to hold generally.  We note that it holds for
single-dimensional agents and we show now, more generally, that it is a consequence of revenue linearity.

\begin{theorem}
\label{thm:linear=orderable}
For any single-agent problem, revenue linearity implies orderability.
\end{theorem}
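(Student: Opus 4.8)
The plan is to extract the equivalence relation and ordering directly from the family of optimal outcome rules $\{\RULE(\qallocstep[\quant])\}_{\quant \in [0,1]}$ for the step constraints, and then leverage revenue linearity to show that every $\RULE(\calloc)$ must act greedily with respect to that ordering. First I would set up the equivalence classes: using \autoref{remark:linear-opt-marrev}, revenue linearity gives $\REV(\calloc) = \MARGREV(\calloc) = -\int_0^1 \rev(\quant)\,\dd\calloc(\quant)$, so the concave revenue curve $\rev(\cdot)$ and its derivative $\rev'(\cdot)$ (the marginal revenue) are well defined. I would declare two types equivalent if, for the step constraints in which both are served with positive probability, they receive the same marginal revenue — equivalently, types get placed by the $\quant$-step mechanisms into a common "ironed interval" where $\rev$ is linear. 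The ordering on classes is then by marginal revenue value (breaking remaining ties by the tie-break order $>_{\talloc}$ from \autoref{def:NIAR} applied to step constraints). The ironing discussion just before the theorem statement is exactly what justifies that types with equal marginal revenue should be lumped together and served with equal probability.

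Next I would argue that for a single step constraint $\qallocstep[\quant]$ the optimal outcome rule $\toutcomestepopt[\quant]$ does serve greedily by this ordering: its induced normalized allocation rule $\qallocstepopt$ is monotone non-increasing in quantile and has cumulative allocation bounded by (and equal at $1$ to) $\cumcalloc^{\quant}$, and concavity of $\rev$ together with revenue linearity forces it to be, up to the ironing ambiguity, the canonical greedy rule. The key leverage is that revenue linearity says $\REV(\calloc)$ is a \emph{linear functional} of $\calloc$ over the cone of monotone allocation constraints, whose extreme rays are precisely the step constraints. So the optimal revenue for a general $\calloc$ is achieved by the convex combination $-\int_0^1 \toutcomestepopt[\quant]\,\dd\calloc(\quant)$ of the step-optimal rules; I would verify this convex combination is feasible for \eqref{prog:lottery-pricing} (it satisfies the cumulative-allocation domination constraint by linearity of the integral and satisfies IC/IR since the outcome space and hence the IC/IR constraints are convex), and that it achieves revenue $\MARGREV(\calloc) = \REV(\calloc)$, hence is optimal. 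Because each $\toutcomestepopt[\quant]$ is greedy by the fixed ordering and the ordering is the same across all $\quant$, the mixture is also greedy by that ordering: a type in a stronger class is served with weakly higher probability at every $\quant$, so it is served with weakly higher probability in the mixture, and types in the same class are served with equal probability throughout.

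The main obstacle I anticipate is pinning down the equivalence relation so that it is genuinely \emph{context-free}, i.e., a single relation that works simultaneously for all allocation constraints, rather than one that could a priori depend on $\calloc$. The subtlety is that $\RULE(\calloc)$ is only determined up to the freedom of how to split probability within an ironed interval, so "greedy by this ordering" has to be stated (as the definition does) modulo uniform tie-breaking inside equivalence classes; I need to check that the ironing intervals induced by different step constraints are nested/consistent — which follows from concavity of $\rev$, since the set of quantiles where $\rev$ agrees with a supporting line is an interval and these intervals form a laminar family as the line varies — and that no step constraint is forced to "cut" an equivalence class non-uniformly, which again is exactly the ironing argument. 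A secondary technical point is handling discrete type spaces, where $\calloc$ and $\rev$ are piecewise linear and the integrals above become finite sums over breakpoints; there the extreme-ray decomposition is cleaner and the feasibility check for the mixture is immediate, so I would likely present the discrete case first and remark that the continuous case follows by the same argument with Stieltjes integrals.
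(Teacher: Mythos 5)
Your high-level plan — extract the ordering from the family of step mechanisms, then use revenue linearity to argue that the optimal rule for a general constraint $\calloc$ is the convex combination $-\int_0^1 \toutcomestepopt[\quant]\,\dd\calloc(\quant)$, which is greedy if each step mechanism is greedy by the same fixed ordering — matches the paper's strategy. But you identify and then inadequately dispose of the crux: showing that the step mechanisms all respect one fixed ordering on types. You write that nestedness/consistency of the ironing intervals ``follows from concavity of $\rev$, since the set of quantiles where $\rev$ agrees with a supporting line is an interval and these intervals form a laminar family.'' That statement is about the \emph{quantile} space, not the \emph{type} space, and it is vacuously true (any concave curve has laminar linearity intervals). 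What orderability requires is that the mapping from \emph{types} to quantiles be consistent across step constraints, i.e.\ that for $\quant_1 > \quant_2$ (outside ironed regions) the set of types served by the $\quant_2$-step mechanism is contained in the set served by the $\quant_1$-step mechanism. Concavity of $\rev$ alone does not give this: for a generic multi-dimensional agent the type-to-quantile map induced by $\toutcomestepopt[\quant_1]$ and that induced by $\toutcomestepopt[\quant_2]$ could rank types completely differently while still yielding the same concave $\rev(\cdot)$. Revenue linearity is needed precisely to rule this out, and the argument is not just ``ironing.''

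The paper fills this gap with two lemmas. First (\autoref{lem:opt-saturate}), an integration-by-parts argument shows that under revenue linearity the optimal cumulative allocation must saturate the constraint ($\cumalloc(\quant) = \cumcalloc(\quant)$) at every $\quant$ with $\rev''(\quant)\neq 0$; the proof exploits that $\cumalloc - \cumcalloc$ and $\rev''$ are both non-positive while their integral vanishes. In particular each $\quant$-step mechanism (for such $\quant$) uses only full lotteries, making the served set a well-defined subset $\typespace_\quant$ of measure $\quant$. Second (the nestedness lemma), for $\quant_1 > \quant_2$ one supposes $\typespace_{\quant_2}\setminus\typespace_{\quant_1}\neq\emptyset$, constructs the three-level constraint that is $1$ below $\quant_2$, $1/2$ between $\quant_2$ and $\quant_1$, and $0$ above $\quant_1$, and observes that the uniform mixture of the two step mechanisms is optimal for it (by linearity) yet fails the saturation conclusion at $\quant_1$ — a contradiction. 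Your proposal has no analogue of this contradiction argument; without it, the ordering you posit is not shown to exist. The rest of your outline (the convex-combination step, the greedy mixture, the discrete-vs-continuous remark) is fine once nestedness is in hand, but as written the proof is incomplete at exactly the point you flagged as the main obstacle.
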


The theorem is proved by the following two lemmas which characterize
the structure of optimal lottery pricings.

\begin{lemma}
\label{lem:X=Y}
For a revenue-linear single-agent problem, let $\qalloc$ be the
optimal allocation rule subject to some constraint $\calloc$. Then,
for any $\exquant$ such that $\rev''(\exquant)\neq 0$ we have
$\cumalloc(\exquant) = \cumcalloc(\exquant).$
\label{greedylemma}
\end{lemma}

\begin{proof}
Since $\qalloc$ is the optimal allocation rule subject to $\calloc$, we have
$\Rev{\qalloc} = \Rev{\calloc}$.
Linearity implies that
\begin{equation*}
\MargRev{\calloc} = \int_{0}^1 \qalloc(\quant)\rev'(\quant)\, \dd\quant = \int_{0}^1
\calloc(\quant)\rev'(\quant)\, \dd\quant = \MargRev{\qalloc}.
\end{equation*}

Integrating by parts, we have
\begin{equation}
\inteval{\cumalloc(\quant)\rev'(\quant)}_{0}^1 - \int_{0}^1
\cumalloc(\quant)\rev''(\quant)\,\dd\quant =
\inteval{\cumcalloc(\quant)\rev'(\quant)}_{0}^1- \int_{0}^1
\cumcalloc(\quant)\rev''(\quant)\,\dd\quant.
\label{firsteqingreedy}\end{equation}

Note that $\calloc$ and $\qalloc$ have the same ex ante probability of allocation $\cumcalloc(1) = \cumalloc(1)$;
also by definition $\cumalloc(0) = \cumcalloc(0) = 0$.  Combining these observations with
\eqref{firsteqingreedy} we have
\begin{equation*}
\int_{0}^1 \cumalloc(\quant)\rev''(\quant)\,\dd\quant = \int_{0}^1
\cumcalloc(\quant)\rev''(\quant)\,\dd\quant,
\end{equation*}
and therefore,
\begin{equation}
\int_{0}^1 [\cumalloc(\quant)-\cumcalloc(\quant)]\rev''(\quant)\,\dd\quant = 0. \label{secondeqingreedy}
\end{equation}

Notice that for any $\quant$, $\cumalloc(\quant) - \cumcalloc(\quant)$ and $\rev''(\quant)$ are non-positive
(by domination and concavity, respectively) so their product is non-negative.  Therefore,
\eqref{secondeqingreedy} can be satisfied only if $[\cumalloc(\quant)-\cumcalloc(\quant)]\rev''(\quant)= 0$
for all $\quant$. This implies that if $\rev''(\quant)<0$, then we must have $\cumalloc(\quant) =
\cumcalloc(\quant)$, which completes the proof.
\end{proof}

\autoref{lem:X=Y} in particular implies that for $\exquant$ with
$\rev''(\exquant) \neq 0$ the $\exquant$ \optpricing\ 
(i.e., with allocation constraint given by the reverse step function
$\callocstep$) has allocation rule $\qallocstep = \callocstep$.  I.e.,
the $\exquant$ \optpricing\ has only full lotteries (all types are
served with either probability one or zero).

For any such $\exquant$, define $\typespace_{\exquant}$ to be the set of
types allocated (with full lotteries) in the optimal allocation
subject to $\callocstep$. The following lemma shows that these sets
are nested.

\begin{lemma}
\label{l:nested}
For a revenue-linear single-agent problem, for any $\exquant_1>\exquant_2$ and
$\rev''(\exquant_1),\rev''(\exquant_2)\neq 0$, we must have
$\typespace_{\exquant_1} \supseteq \typespace_{\exquant_2}$.
\end{lemma}

\begin{proof}
Assume for contradiction that $\typespace_{\exquant_2}\backslash \typespace_{\exquant_1} \neq \emptyset$. Let
$\alpha = \dist(\typespace_{\exquant_2}\backslash \typespace_{\exquant_1})>0$. Consider the following allocation
constraint
\begin{align*}
    \calloc(\quant) &=
        \begin{cases}
            1 & \quant\leq \exquant_2 \\
            1/2 & \exquant_2 < \quant \leq \exquant_1 \\
            0 & \exquant_1 < \quant.
        \end{cases}
\end{align*}
By revenue linearity, the revenue of the optimal auction subject to $\calloc$ is
$[\rev(\exquant_1)+\rev(\exquant_2)]/2$. Notice that the mechanism that runs
$\rev(\exquant_1)$ and $\rev(\exquant_2)$ each with probability 1/2 achieves this revenue. The allocation rule $\qalloc$ of this
mechanism is
\begin{align*}
    \qalloc(\quant) &=
        \begin{cases}
            1 & \quant\leq \quant_2-\alpha \\
            1/2 & \quant_2-\alpha \leq \quant \leq \quant_1 + \alpha \\
            0 & \quant_1 + \alpha \leq \quant.
        \end{cases}
\end{align*}
Notice that this allocation rule is dominated by $\calloc$, and achieves the optimal revenue. Yet, we have
\begin{align*}
\cumcalloc(\exquant_1) = \int_{\quant = 0}^{\exquant_1} \calloc(\quant)\:\dd\quant > \int_{\quant=0}^{\exquant_1} \qalloc(\quant)\:\dd\quant = \cumalloc(\exquant_1).
\end{align*}
This contradicts \autoref{greedylemma}.
\end{proof}

\begin{proof}[\NOTSTOC{Proof }of \autoref{thm:linear=orderable}]
By \autoref{l:nested}, all $\exquant$ \optpricings\ order
the types by the same equivalance classes.  By revenue linearity the
optimal lottery pricing for an allocation constraint $\calloc$ is a
convex combination of the $\exquant$ \optpricings.
Therefore, it allocates greedily to types by the same equivalance
classes.
\end{proof}

Given orderability and the fact that (by \autoref{greedylemma}) the
optimal $\exquant$ \optpricings\ are full lotteries for
$\exquant$ for which $\rev(\exquant)$ is locally non-linear, the marginal
revenue mechanism is easy to define.

\begin{definition}
\label{def:context-free-marg-rev-mech}
The {\em marginal revenue mechanism for orderable agents} works as follows.
\begin{enumerate}
\item \label{step:map-type-to-quantile} Map reported types $\types =
  (\type_1,\ldots,\type_n)$ of agents to quantiles $\quants =
  (\quant_1,\ldots,\quant_n)$ via the implied ordering.\footnote{This
  ordering can be found by calculating the optimal single-agent
  mechanism for allocation constraint $\calloc(\quant) = 1-\quant$.}
\item \label{step:max-marg-rev} Calculate the marginal revenue of each
  agent $i$ as $\rev'_i(\quant_i)$.
\item \label{step:critical-quantiles} For each agent $i$, calculate
  the maximum quantile $\exquanti$ that she could possess and be in
  the marginal revenue maximizing feasible set (breaking ties
  consistently).
\item Offer each agent $i$ the $\exquanti$ \optpricing.
\end{enumerate}
\end{definition}

\begin{proposition}
The marginal revenue mechanism deterministically selects a feasible
set of agents to serve and is dominant strategy incentive compatible.
\end{proposition}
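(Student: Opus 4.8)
The plan is to verify the three claimed properties of the marginal revenue mechanism --- determinism, feasibility, and dominant-strategy incentive compatibility --- in turn, leaning on orderability (\autoref{thm:linear=orderable}) to give a \emph{context-free} map from types to quantiles so that each step of \autoref{def:context-free-marg-rev-mech} is well-defined. First I would observe that by orderability the ordering computed in Step~1 (via the allocation constraint $\calloc(\quant) = 1-\quant$) is the same ordering induced by the optimal outcome rule $\RULE(\calloc)$ for \emph{every} allocation constraint $\calloc$; in particular it does not depend on the reports of the other agents or on the feasibility constraint. Hence each agent $i$'s quantile $\quant_i$ in Step~1, and therefore her marginal revenue $\rev'_i(\quant_i)$ in Step~2, is a function of her own report alone. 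Since we selected a single feasible set maximizing the cumulative marginal revenue (with ties broken by a fixed rule), the set of agents served is deterministic --- this gives the first claim, modulo noting that the subsequent call to $\stepmechi[\quant_i^*]$ in Step~4 serves $i$ with probability one exactly when $\quant_i \le \quant_i^*$, i.e., exactly when $i$ is in the maximizing set.

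For feasibility, the point is to reconcile Step~3--4 with Step~1--2. In Step~3 we compute, for each $i$, the largest quantile $\quant_i^*$ such that $i$ would still be in the marginal-revenue-maximizing set; then in Step~4 we run the $\quant_i^*$-step mechanism $\stepmechi[\quant_i^*]$, which by definition of $\qallocstepopt$ serves $i$ with probability one if $\quant_i \le \quant_i^*$ and with probability zero otherwise (after the downward-closure expansion of the outcome space). So the set of agents actually served is precisely $\{i : \quant_i \le \quant_i^*\}$, which is exactly the marginal-revenue-maximizing feasible set chosen in Step~2; feasibility is then immediate from the fact that that set was chosen to be feasible. The one thing to be careful about here is the tie-breaking: ``breaking ties consistently'' in Step~3 must agree with the tie-breaking used in Step~2 so that the critical-quantile computation reproduces the same set, and I would state this consistency requirement explicitly.

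For dominant-strategy incentive compatibility, fix agent $i$, fix the reports $\types_{-i}$ of the others, and consider $i$'s interim-from-$i$'s-perspective (really \emph{ex post in} $\types_{-i}$) outcome as a function of her report $\type_i$. Because the ordering on $\typespace_i$ is context-free, misreporting $\type_i' $ in place of $\type_i$ only changes $i$'s quantile to $\quant_i' = \ttoq_i(\type_i')$ and hence the marginal revenue value plugged into the maximization; the critical quantile $\quant_i^*$ is determined by $\types_{-i}$ and the ordering, \emph{not} by $\type_i$, so $\quant_i^*$ is unaffected by $i$'s report. Thus by reporting $\type_i$ versus $\type_i'$ agent $i$ merely chooses between the outcomes that the single-agent step mechanism $\stepmechi[\quant_i^*]$ assigns to $\type_i$ and to $\type_i'$ respectively; since $\stepmechi[\quant_i^*] = \RULE(\qallocstep[\quant_i^*])$ is itself incentive compatible (it satisfies the IC constraint in \eqref{prog:lottery-pricing}), $i$ weakly prefers the outcome for her true type $\type_i$. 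This holds for every $\types_{-i}$, which is exactly dominant-strategy incentive compatibility.

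The main obstacle I anticipate is making precise, and then using cleanly, the assertion that $\quant_i^*$ is independent of $\type_i$: this is the crux of the DSIC argument and it rests entirely on context-freedom (the ordering does not move when $\type_i$ changes, so the question ``what is the weakest quantile $i$ could have had and still win against $\types_{-i}$'' has an answer that does not reference $i$'s actual quantile). Everything else --- determinism, feasibility, and the reduction of $i$'s deviation to a deviation within a fixed IC single-agent mechanism --- is then bookkeeping, provided the tie-breaking rules in Steps~2 and~3 are stipulated to coincide.
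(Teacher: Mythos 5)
Your proof is correct and follows essentially the same route as the paper's: determinism and feasibility because the served set is exactly the marginal-revenue-maximizing feasible set, and DSIC because $\quant_i^*$ depends only on $\types_{-i}$ and the single-agent mechanism $\stepmechi[\quant_i^*]$ is itself incentive compatible. You fill in more detail than the paper does (in particular, using orderability to justify that the type-to-quantile map is context-free and noting the need for consistent tie-breaking between Steps 2 and 3), but the core argument is identical.
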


\begin{proof}
Because ties are broken consistently, critical values cannot fall in intervals where the revenue curve is
locally linear (and the marginal revenue curve is locally constant).  Therefore, the lottery pricings offered
to each agent are full lotteries; each type is deterministically served or not served.  Feasibility follows
as the set of agents that select service outcomes is exactly the marginal revenue maximizing set subject to
feasibility.  To verify the dominant strategy incentive compatibility consider any agent $i$'s perspective.
The parameter $\exquanti$ is a function only of the other agents' reports; the agent's outcome is determined
by the $\exquanti$ \optpricing\ which is incentive compatible for any $\exquanti$.
\end{proof}

\begin{proposition}
In service constrained environment with revenue-linear agents, the
marginal revenue mechanism obtains the optimal marginal revenue (which
equals the optimal revenue).
\end{proposition}

\paragraph{Testing Revenue Linearity}

Revenue linearity is computationally easy to test.  From the concavity
of $\Rev{\cdot}$ and equality of revenue and marginal revenue for
allocation constraints $\callocstep$ which are a basis for general
allocation constraints, it suffices to check the equality of revenue
and marginal revenue, i.e., $\Rev{\calloc} = \MargRev{\calloc}$, for
any allocation constraint $\calloc$ with positive derivative
($\calloc$ as a convex combination of $\callocstep$ has positive
density on each $\exquant$).  For example, $\calloc(\quant) =
1-\quant$ is such an allocation constraint.  Since the theorem
facilitates testing the property, we discretize the quantile space to
$Q_N = \{0, \tfrac 1 N \cdots, \tfrac {N - 1}{N}, 1\}$ for an
arbitrary integer $N > 0$.

\begin{theorem}
\label{thm:test-rev-lin}
Let $\calloc: Q_N \to [0, 1]$ be any strictly decreasing function, (e.g.,
$\calloc(\exquant) = 1 - \exquant$).  Then a
set of single-agent pricings are revenue linear (or, more precisely,
$\Rev{\cdot}$ is a
linear functional for non-increasing functions mapping $Q_N$ to $[0, 1]$), if $\Rev{\calloc} = \MargRev{\calloc}$.
\end{theorem}

\begin{proof}  Consider the $N + 2$ reverse step functions that ``steps
down'' from $1$ to~$0$ at a point in $Q_N$.  Any non-increasing
function mapping $Q_N$ to $[0, 1]$ is a convex combination of these
base functions, and a strictly decreasing function can be written
uniquely as such a convex combination.  Therefore $\Rev{\calloc} =
\MargRev{\calloc}$ amounts to saying that $\Rev{\cdot}$ is linear on
one interior point in a simplex, and the theorem states that
$\Rev{\cdot}$ is linear on the whole simplex.  If we shift
$\Rev{\cdot}$ by a linear functional such that it is zero on all the
base functions, then this theorem follows from the simple fact that,
if a concave function~$g$ is $0$ on all vertices of a simplex
\emph{and} one interior point~$A$, then $g$ is uniformly $0$ on the
simplex.  To see this, suppose on point~$B$ in the simplex, $g(B) \neq
0$.  By concavity, $g(B) > 0$.  $A$ can be written as a convex
combination of $B$ and vertices of the simplex with a strictly
positive coefficient on~$B$.  (E.g., connect $B$ and $A$ with a
straight line and extend it to intersect at one facet of the simplex
formed by $N-1$ vertices, then $A$ can be written as a convex
combination of $B$ and these $N-1$ vertices, where the coefficient
on~$B$ in the decomposition is strictly positive.)  But the concavity
of $g$ implies $g(A) > 0$, a contradiction.
\end{proof}

\begin{example}[A multi-dimensional revenue-linear example]
The example of the seller who can paint her car red or blue as she sells it to agents with independent and
uniform values for each color is revenue linear (proof given in
\autoref{sec:hypercube-linear}). Therefore, the
marginal revenue mechanism is optimal and its simple form can be derived from
\autoref{def:context-free-marg-rev-mech} as follows.  For a unit-demand agent with values for $m$ variants
of a service (i.e., possible colors of the car) distributed uniformly on $[0,1]^m$, we show that the ex ante optimal mechanism for constraint $\exquant$ is to post a price of $\sqrt[m]{1-\exquant}$ for any service. Notice that such a price will be accepted with probability $\exquant$, and therefore the revenue function is $\rev(\exquant) = \exquant\sqrt[m]{1-\exquant}$, and the marginal revenue function is $\rev'(\exquant) =
(1-\exquant)^{1/m-1}(1-\exquant - \exquant / m)$. The quantile of each type is
$\type=(\type^1,\ldots,\type^m)$ to be $\quant = 1-(\max_i\type^i)^m$. Notice
that both the mapping and the marginal revenue function are monotone. Therefore serving the agent with the
highest marginal revenue (\autoref{def:context-free-marg-rev-mech}) means serving the player with the highest
value for any kind of service and charging her the minimum she needs to bid to exceed the second-highest
value (subject to the reserve of $\sqrt[m]{\frac{1}{m+1}}$ which is where the marginal revenue becomes zero). Revenue-linearity implies that this mechanism is optimal.

\end{example}

\section{Implementation}
\label{sec:implementation}

The marginal revenue mechanism for agents with orderable types
(\autoref{def:context-free-marg-rev-mech}) does not extend to general
agents.  In this section we give two approaches for defining the
marginal revenue mechanism more generally.  The first approach assumes
that the parameterized family of $\exquant$ \optpricings\ satisfies a
natural monotonicity requirement: that the probability that an agent
with a given type is served is monotone in the ex ante
constraint~$\exquant$.  Key to this construction is a
\emph{randomized} mapping from an agent's types to quantiles that is
determined by the agent's type space and distribution alone, and is
therefore context free, i.e., unaffected by the presence of other
agents and the feasibility constraints.  Consequently, (a) the
resulting mechanism is dominant strategy incentive compatible but, (b)
the set of winners is generally a randomized function of the profile
of types.  The second approach is brute-force but easily computable
and completely general.  It results in a Bayesian incentive compatible
mechanism.  Both these mechanisms will differ from the marginal
revenue mechanism for orderable types only in the first (mapping types
to quantiles) and last (serving each agent if her quantile is at most
her critical quantile) steps; these changes can be mix-and-matched for
different agents in the same mechanism.


We conclude this section by describing a relevant class of agents for
which the \optpricings\ satisfy the monotonicity property required by
the first approach.  The example considers single-dimensional agents
with a public budget that constrains their maximum payment.

\subsection{Monotone \optpricings}
\label{sec:monotone}


We consider agents whose \optpricings\ satisfy the following natural
monotonicity property.

\begin{definition} 
\label{def:fu-monotone}
An agent has {\em monotone \optpricings} if, given her type, the
probability she wins in the $\exquant$ \optpricing\ is
monotone non-decreasing in $\exquant$.
\end{definition}

Suppose that the $\exquant$ \optpricing\ for an agent
each consists of a menu of full lotteries.  I.e., for any type of the
agent she will choose a lottery that either serves her with
probability one or zero.  In this case the monotone \optpricings\ assumption would require that the sets of types served for each
$\exquant$ are nested.  There is a simple deterministic mapping from
types to quantiles in this case: set the quantile of a type to be the
minimum $\exquant$ such that the $\exquant$ \optpricing\ serves the
type.  Below, we generalize this selection procedure to the case of
partial lotteries (where types may be probabilistically served).

Recall that the $\exquant$ \optpricing, as a function of the agent's
type, has an allocation and outcome rule $\allocstepopt[\exquant]$ and
$\toutcomestepopt[\exquant]$, respectively.  Fix the type of the agent
as $\type$ and consider the function $G_\type(\exquant) =
\allocstepopt[\exquant](\type)$ which, by the monotonicity condition
above, can be interpreted as a cumulative distribution function.
Recall that $\exquant$ \optpricing\ has probability of service
$\expect[\type]{\allocstepopt[\exquant](\type)} = \exquant$.
Therefore, if $\type$ is drawn from the type distribution and then
$\quant$ drawn from $G_\type$ then the distribution of $\quant$ is
uniform on $[0,1]$.

\begin{lemma} 
\label{lem:uniform-dist}
If $\type \sim \dist$ and $\quant \sim G_\type$ then $\quant$ is $U[0,1]$.
\end{lemma}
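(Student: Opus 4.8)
The statement is essentially a one-line consequence of the observation, already noted just above the lemma, that the $\quant$-step mechanism $\stepmech$ has ex ante service probability exactly $\quant$. The plan is to compute the CDF of $\quant$ directly. Fix an arbitrary threshold $\quant_0 \in [0,1]$ and write, conditioning on the draw of $\type$,
\begin{equation*}
\Prx[\type\sim\dist,\,\quant\sim G_\type]{\quant \le \quant_0} = \Ex[\type\sim\dist]{\Prx[\quant\sim G_\type]{\quant\le\quant_0}} = \Ex[\type\sim\dist]{G_\type(\quant_0)} = \Ex[\type\sim\dist]{\allocstepopt[\quant_0](\type)},
\end{equation*}
where the second equality is the definition of drawing $\quant$ from the distribution whose CDF is $G_\type$, and the third unfolds $G_\type(\quant_0) = \allocstepopt[\quant_0](\type)$.

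Next I would identify $\Ex[\type\sim\dist]{\allocstepopt[\quant_0](\type)}$ as the ex ante probability of service of the $\quant_0$-step mechanism, and argue this equals $\quant_0$. This follows from the feasibility constraints of the lottery-pricing program \eqref{prog:lottery-pricing} instantiated at the $\quant_0$-step constraint $\qallocstep[\quant_0]$: its cumulative allocation rule satisfies $\cumcalloc(1) = \int_0^1 \qallocstep[\quant_0](\quant)\,\dd\quant = \quant_0$, and the program's equality constraint forces the induced outcome rule to have ex ante service probability exactly $\cumcalloc(1) = \quant_0$. Hence $\Prx{\quant \le \quant_0} = \quant_0$ for every $\quant_0 \in [0,1]$, which is exactly the statement that $\quant$ is $U[0,1]$.

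The only points that require care — and they are minor — are (i) checking that $G_\type$ is a bona fide CDF on $[0,1]$: it is non-decreasing precisely by the monotone step mechanisms hypothesis, and its endpoint values $G_\type(0) = \allocstepopt[0](\type) = 0$ and $G_\type(1) = \allocstepopt[1](\type) = 1$ hold (the latter a.e.\ over $\type$) because the $0$-step and $1$-step programs have ex ante service probability $0$ and $1$ respectively, pinning down their allocation rules; and (ii) justifying the interchange of expectation over $\type$ with the probability over $\quant$ in the first display, which is immediate from Fubini since the integrand is bounded in $[0,1]$. I do not anticipate any genuine obstacle; the content is entirely in the identity "ex ante service probability of the $\quant$-step mechanism $= \quant$," which has already been established.
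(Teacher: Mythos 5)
Your proof is correct and follows essentially the same argument the paper gives in the paragraph immediately preceding the lemma: compute $\Prx{\quant\le\quant_0}=\Ex[\type]{G_\type(\quant_0)}=\Ex[\type]{\allocstepopt[\quant_0](\type)}$ and identify this with the ex ante service probability $\quant_0$ of the $\quant_0$-step mechanism. The extra care you take about $G_\type$ being a genuine CDF and about Fubini is fine but not a departure from the paper's route.
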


\begin{definition} 
\label{def:monotone-marg-rev-mech}
The {\em marginal revenue mechanism for agents with monotone \optpricings} works as follows.
\begin{enumerate}
\item 
\label{step:quantile-monotone}
Map reported types $\types = (\type_1,\ldots,\type_n)$ of agents
  to quantiles $\quants = (\quant_1,\ldots,\quant_n)$ by sampling
  $\quant_i$ from the distribution with cumulative distribution
  function $G_{\type_i}(\quant) = \allocstepopt_i(\type_i)$.
\item Calculate the marginal revenue of each agent $i$ as $\rev'_i(\quant_i)$.
\item For each agent $i$, calculate the maximum quantile $\exquant_i$
  that she could possess and be in the marginal revenue maximizing
  feasible set (breaking ties consistently).
\item
\label{step:outcome-monotone}
For each agent $i$, offer the $\exquant_i$ \optpricing\
conditioned so that $i$ is served if $\quant_i \leq \exquant_i$ and not served
otherwise.
\end{enumerate}
\end{definition}

The last step of the marginal revenue mechanism warrants an
explanation.  In the $\exquant_i$ \optpricing, the outcome that $i$
would obtain with type $\type_i$ may be a partial lottery, i.e., it
may probabilistically serve $i$ or not.  The probability that $i$ is
served is $\allocstepopt[\exquant_i]_i(\type_i) = 
\prob[\quant_i]{\quant_i \leq \exquant_i} = G_{\type_i}(\exquant_i)$ by our choice of
$\quant_i$.  When we offer agent $i$ the $\exquant_i$
\optpricing\ we must draw an outcome from the distribution given by
$\toutcomestepopt[\exquant_i]_i(\type_i)$.  Some of these outcomes
are service outcomes, some of these are non-service outcomes.  If
$\quant_i \leq \exquant_i$ then we draw an outcome from the
distribution $\toutcomestepopt[\exquant_i]_i(\type_i)$ conditioned
on service; if $\quant_i > \exquant_i$ then we draw an outcome
conditioned on no-service.  Notice that, while it may not be feasible
to serve all agents who receive non-trivial partial lottery, this
method coordinates across the partial lotteries which agents to serve
to maintain the right distribution on agent outcomes and ensure
feasibility.

\begin{proposition}
The marginal revenue mechanism for agents with monotone step
mechanisms is feasible and dominant strategy incentive compatible.
\end{proposition}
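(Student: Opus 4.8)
The proposition has two parts: feasibility and dominant-strategy incentive compatibility. I would handle feasibility first, since it is essentially built into the construction. The set of agents who select service outcomes in the last step is exactly the marginal-revenue-maximizing feasible set computed in Step 3 (agent $i$ is in it iff $\quant_i \leq \quant_i^*$), so feasibility is immediate by the definition of $\quant_i^*$ as the critical quantile for that feasible set. The subtlety, already flagged in the paragraph preceding the proposition, is that the $\quant_i^*$-step mechanism $\stepmech[\quant_i^*]_i$ may hand agent $i$ a partial lottery, so ``being in the feasible set'' and ``being served by the step mechanism'' are not the same event; the point is that we \emph{condition} the outcome on service (resp.\ no-service) according to whether $\quant_i \leq \quant_i^*$, and this conditioning is exactly what forces the realized allocation to coincide with the feasible set. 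So the feasibility argument is: the realized service set is deterministically the marginal-revenue-maximizing feasible set, hence feasible.

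Next I would verify dominant-strategy incentive compatibility, which is the more delicate half. Fix agent $i$ and an arbitrary profile $\types_{-i}$ of the other agents' reports. The key observation is that $\quant_i^*$ depends only on $\types_{-i}$: it is determined by the other agents' quantiles (which are sampled from $G_{\type_j}$, independent of $i$'s report) and the feasibility structure. So from $i$'s point of view, reporting $\type_i$ induces the following: first $\quant_i$ is sampled from $G_{\type_i}$, then $i$ receives the outcome drawn from $\toutcomestepopt[\quant_i^*]_i(\type_i)$ conditioned on service if $\quant_i \leq \quant_i^*$ and on no-service otherwise. I need to show that the net outcome distribution over $\outcomespace_i$ that $i$ faces, as a function of her \emph{reported} type $\type_i'$ (with true type $\type_i$ fixed), is maximized in utility at $\type_i' = \type_i$.

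The crux is to identify this net outcome distribution. By the choice of $\quant_i$, we have $\prob[\quant_i \sim G_{\type_i'}]{\quant_i \leq \quant_i^*} = G_{\type_i'}(\quant_i^*) = \allocstepopt[\quant_i^*]_i(\type_i')$, which is exactly the service probability that the $\quant_i^*$-step mechanism $\stepmech[\quant_i^*]_i$ assigns to a report of $\type_i'$. Combining this with the conditioning, the overall outcome distribution agent $i$ faces upon reporting $\type_i'$ is precisely $\toutcomestepopt[\quant_i^*]_i(\type_i')$ — the mixture of the service-conditioned and no-service-conditioned outcome distributions, weighted by $G_{\type_i'}(\quant_i^*)$ and $1 - G_{\type_i'}(\quant_i^*)$, reconstitutes the original (unconditioned) outcome distribution of the $\quant_i^*$-step mechanism on report $\type_i'$. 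In other words, agent $i$ with true type $\type_i$ is, for each fixed $\types_{-i}$, effectively just participating in the single-agent $\quant_i^*$-step mechanism $\stepmech[\quant_i^*]_i$ and reporting $\type_i'$ to it. Since $\stepmech[\quant_i^*]_i$ is incentive compatible for every value of $\quant_i^*$ (it is an optimal solution to a program that includes the IC constraint), truthful reporting maximizes $i$'s utility, and this holds for every $\types_{-i}$ — i.e., dominant strategy incentive compatibility.

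**Main obstacle.** The one step that requires care is the ``reconstitution'' identity: that sampling $\quant_i \sim G_{\type_i'}$ and then drawing from $\toutcomestepopt[\quant_i^*]_i(\type_i')$ conditioned on service vs.\ no-service (according to $\quant_i \lessgtr \quant_i^*$) yields exactly $\toutcomestepopt[\quant_i^*]_i(\type_i')$. This hinges on the fact that $G_{\type_i'}(\quant_i^*)$ equals the service probability $\allocstepopt[\quant_i^*]_i(\type_i')$, which is where the definition $G_{\type}(\quant) = \allocstepopt[\quant]_i(\type)$ and the monotone-step-mechanism assumption are used (monotonicity is what makes $G_{\type}$ a legitimate CDF so the sampling even makes sense). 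I would state this as the lemma ``$i$ faces exactly $\toutcomestepopt[\quant_i^*]_i(\type_i')$'' and then let IC of the step mechanisms finish the argument; everything else is bookkeeping over the fixed profile $\types_{-i}$.
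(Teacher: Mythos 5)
Your proof is correct and takes essentially the same approach as the paper's: feasibility because the realized service set is exactly the marginal-revenue-maximizing feasible set, and DSIC because $\quant_i^*$ is independent of $i$'s report and, conditioned on it, agent $i$ effectively faces the IC step mechanism $\stepmechi[\quant_i^*]$. Your ``reconstitution'' lemma makes explicit the identity $G_{\type'_i}(\quant_i^*) = \allocstepopt[\quant_i^*]_i(\type'_i)$ that the paper only asserts in the paragraph preceding the proposition, but this is a matter of expository detail rather than a different argument.
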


\begin{proof} 
Feasibility follows as the set of agents that select service outcomes
is exactly the marginal revenue maximizing set subject to feasibility.
To verify the dominant strategy incentive compatibility consider any
agent $i$'s perspective.  The parameter $\exquant_i$ is a
(randomized) function only of the other agents' reports; the agent's
outcome is determined by the $\exquant_i$ \optpricing\ which is
incentive compatible for any $\exquant_i$.
\end{proof}

\begin{theorem} 
The marginal revenue mechanism for agents with monotone \optpricings\ implements
marginal revenue maximization (\autoref{def:mrm}).
\end{theorem}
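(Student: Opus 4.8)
The plan is to verify directly that, for every agent $i$ and every type $\type_i \in \typespace_i$, the interim outcome rule the mechanism of \autoref{def:monotone-marg-rev-mech} presents to $i$ is exactly $\toutcomemr_i(\type_i) = -\int_0^1 \toutcomestepopt[\quant]_i(\type_i)\,\dd\callocmr_i(\quant)$; ex post feasibility and dominant-strategy incentive compatibility have already been established in the preceding proposition, so this interim identity is all that ``implements marginal revenue maximization'' requires. Fix such an $i$ and $\type_i$. By the preceding lemma each other agent's sampled quantile is distributed uniformly on $[0,1]$, and since the sampling in Step~1 is independent across agents, from $i$'s interim perspective the profile $\quants_{-i}$ is uniform on $[0,1]^{n-1}$ and independent of $i$'s own sampled quantile $\quant_i \sim G_{\type_i}$. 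Recall that $\quant_i^*$ depends on $\quants_{-i}$ alone, so I write it $\quant_i^* = \quant_i^*(\quants_{-i})$, and that by the definition of $\callocmr_i$ (running Steps~2--4 with uniform quantiles) we have $\callocmr_i(\quant) = \Prx[\quants_{-i}]{\quant \le \quant_i^*(\quants_{-i})}$; equivalently, $-\dd\callocmr_i$ is the law of $\quant_i^*(\quants_{-i})$.

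First condition on $\quants_{-i}$, which fixes a value $\quant_i^*$. In Step~4 the mechanism draws $i$'s outcome from $\toutcomestepopt[\quant_i^*]_i(\type_i)$ conditioned on a service outcome if $\quant_i \le \quant_i^*$ and conditioned on a non-service outcome otherwise. By the choice of $\quant_i$ in Step~1, the probability that $\quant_i \le \quant_i^*$ is $G_{\type_i}(\quant_i^*) = \allocstepopt[\quant_i^*]_i(\type_i)$ --- here the monotone-step-mechanisms hypothesis is precisely what makes $G_{\type_i}$ a genuine cumulative distribution function, so that this probability is its value at $\quant_i^*$. But $\allocstepopt[\quant_i^*]_i(\type_i) = \Ex{\ALLOC(\toutcomestepopt[\quant_i^*]_i(\type_i))}$ is exactly the probability that a sample from the distribution $\toutcomestepopt[\quant_i^*]_i(\type_i)$ is a service outcome. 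Hence the Step~4 outcome is obtained by splitting $\toutcomestepopt[\quant_i^*]_i(\type_i)$ into its service part and its non-service part and recombining them with weights equal to their own masses, which recovers the distribution $\toutcomestepopt[\quant_i^*]_i(\type_i)$ unchanged. So, conditioned on $\quants_{-i}$, agent $i$ with type $\type_i$ faces exactly the outcome distribution $\toutcomestepopt[\quant_i^*(\quants_{-i})]_i(\type_i)$.

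Taking expectations over $\quants_{-i}$ uniform on $[0,1]^{n-1}$, the interim outcome rule presented to $i$ is $\Ex[\quants_{-i}]{\toutcomestepopt[\quant_i^*(\quants_{-i})]_i(\type_i)}$. Since $-\dd\callocmr_i$ is the law of $\quant_i^*(\quants_{-i})$, this expectation equals $-\int_0^1 \toutcomestepopt[\quant]_i(\type_i)\,\dd\callocmr_i(\quant) = \toutcomemr_i(\type_i)$, which is what we wanted; together with feasibility and incentive compatibility from the preceding proposition this shows the mechanism implements marginal revenue maximization.

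The main obstacle is the conditioning argument in the middle step: one must keep straight that $\quant_i$ and $\quants_{-i}$ are independent (so conditioning on $\quants_{-i}$ leaves $\quant_i$ distributed as $G_{\type_i}$), that the monotonicity hypothesis is invoked exactly to guarantee $\Prx{}{\quant_i \le \quant_i^*} = G_{\type_i}(\quant_i^*)$, and that the service/non-service ``recombination'' in Step~4 is legitimate --- this last point is just the fact that conditioning a distribution on a measurable event and mixing the two conditionals back with the right weights is the identity, but it is the conceptual reason Step~4 is designed as it is. Boundary behaviour (atoms of $-\dd\callocmr_i$, including a possible atom at $\quant = 0$ when $i$ is sometimes excluded from every marginal-revenue-maximizing set, where $\toutcomestepopt[0]_i$ is the never-serve mechanism) is handled by the same identity and requires no separate argument.
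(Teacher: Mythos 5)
Your proof is correct and follows the same approach as the paper's: both argue that since each other agent's sampled quantile is uniform and independent, the critical quantile $\quant_i^*$ seen by agent $i$ has law $-\dd\callocmr_i$, so $i$ faces the intended mixture of step mechanisms. You go further than the paper's brief sketch by also verifying explicitly that the conditional service/non-service draw in Step~4 recombines to the full distribution $\toutcomestepopt[\quant_i^*]_i(\type_i)$ (via $\Prx{}{\quant_i \le \quant_i^*} = G_{\type_i}(\quant_i^*) = \allocstepopt[\quant_i^*]_i(\type_i)$), which is a genuine and useful detail the paper leaves implicit.
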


\begin{proof} 
From each agent $i$'s perspective, the other agents' quantiles are distributed
independently and uniformly on $[0,1]$ (\autoref{lem:uniform-dist}).  Therefore,
this agent faces a distribution over \optpricings\ that is identical to
the distribution of ``critical quantiles'' in the maximization of
marginal revenue, i.e., with density $\tfrac{\dd}{\dd\exquant}\distmr_i(\exquant)$.
\end{proof}

\subsection{General \optpricings}
\label{sec:nonmonotone}

For general agents for whom the \optpricings\ do not satisfy the
monotonicity condition (\autoref{def:fu-monotone}), we give in
\autoref{sec:nonmonotone-app} an simple procedure to implement the
marginal revenue mechanism (recall \autoref{def:mrm}).  This mechanism
is given by \autoref{def:general-marg-rev-mech} in
\autoref{sec:nonmonotone-app}.  The key to the proof of
\autoref{thm:nonmonotone-imp} is a variation of the technique of
vector majorization \citep{HLP-29}.
\begin{theorem} 
\label{thm:nonmonotone-imp}
For service constrained environments, there is a simple Bayesian
incentive compatible implmentation of the marginal revenue mechanism.
\end{theorem}

\IFSTOCELSE
{\section{Example: single-dimensional a\-gent with public budgets}}
{\subsection{Example: single-dimensional agents with public budgets}}
\label{sec:budget}

In this section we exhibit a class of single-agent problems with
non-linear utilities that has monotone \optpricings\
(\autoref{def:fu-monotone}).  Consider an agent with a
single-dimensional value for receiving a good but has a public budget
that limits the payment she could make.  Her utility is her value for
receiving the good minus her payment as long as her payment is at most
her budget.  We show that under standard conditions on the agent's
valuation distribution, this single agent problem has monotone
\optpricings.

The following proposition is a consequence of techniques developed by \citet{LR96} and \citet{PV08}; for completeness we provide a proof in \autoref{sec:budget-app} whose steps largely resemble the ones in these two references.

\begin{proposition}
\label{prop:budget}
For regular distribution $\dist$ with non-decreasing density, budget
$B$, and $\exquant \leq 1-\dist(B)$, the $\exquant$ \optpricing\ offers a single take-it-or-leave-it lottery for price $B$
that serves with probability $\pralloc$, where $\pralloc$ is the solution to the
equation $\exquant = \pralloc[1 - \dist(B / \pralloc)]$.  This lottery is bought by the agent when her value is at least $B/\pralloc$ which happens with
probability $1-\dist(B/\pralloc)$.
\end{proposition}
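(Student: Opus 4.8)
The plan is to characterize the optimal single-agent mechanism subject to an ex ante service constraint $\quant \le 1-\dist(B)$ directly, using the structure of a public budget. The key observation is that with a public budget $B$, any incentive-compatible mechanism is a menu of lotteries $(\pralloc, p)$ where $\pralloc$ is a probability of service and $p \le B$ is a payment; since utility is $\val \pralloc - p$ up to the budget cap, the agent with value $\val$ picks the menu entry maximizing $\val \pralloc - p$. I would first argue that raising any payment up to $B$ (and scaling service probability accordingly to preserve incentive constraints) can only help revenue, so without loss of generality every lottery actually used charges exactly the budget $B$; this collapses the menu to a one-parameter family indexed by the service probability $\pralloc$, where a type buys lottery $\pralloc$ iff $\val \pralloc - B \ge \val \pralloc' - B'$ for all offered $(\pralloc',B')$, i.e.\ iff $\val$ is large enough. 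The monotone-density and regularity hypotheses are what let me invoke the results of \citet{LR96} and \citet{PV08}: under these conditions the revenue-optimal budget-constrained mechanism (for the unconstrained ex ante problem, $\quant=1$) is a single take-it-or-leave-it posted price, and more generally the optimal mechanism for a smaller ex ante quantity is obtained by ironing/shrinking this to a single lottery at price $B$.

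Concretely, I would proceed as follows. First, reduce to menus of the form $\{(\pralloc_k, B)\}$ charging the budget as above. Second, for a target ex ante service probability $\quant$, observe that offering the single lottery $(\pralloc, B)$ with $\pralloc = B / \qtot$ where $\qtot$ is chosen so that $1 - \dist(B/\pralloc) = \quant$ — equivalently $\pralloc = B / \dist^{-1}(1-\quant)$ — yields a mechanism whose ex ante service probability is exactly $\quant$, since the agent buys iff $\val \ge B/\pralloc$. Third, I would show this single-lottery mechanism is optimal among all budget-feasible mechanisms with ex ante service probability $\quant$: the revenue is $B \cdot \quant$ (price $B$ collected with probability $\quant$), and I would argue via the regular/non-decreasing-density assumption that no menu with multiple lottery levels can beat collecting $B$ from the top $\quant$-mass of types. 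Here the condition $\quant \le 1-\dist(B)$ enters: it guarantees that the required service probability $\pralloc = B/\dist^{-1}(1-\quant) \le 1$, so the lottery is well-defined (one cannot serve with probability $> 1$), and it is precisely the regime where the budget binds. Finally, I would note the monotone-step-mechanism consequence: as $\quant$ increases, $\pralloc = B/\dist^{-1}(1-\quant)$ increases, and a type with value $\val$ is served iff $\val \ge B/\pralloc$, a threshold that decreases in $\quant$, so each type's service probability is non-decreasing in $\quant$ — which is exactly the monotonicity property the section needs.

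The main obstacle I expect is the optimality argument in the third step — showing the single posted-price-at-$B$ lottery beats all richer menus subject to the ex ante constraint. This is where the \citet{LR96} and \citet{PV08} machinery is genuinely used: one needs that, under regularity and non-decreasing density, the "revenue curve" as a function of service probability at the fixed price $B$ is concave enough that no convex combination of lottery levels improves on the single level, and that charging below $B$ is never worth it. Since the proposition explicitly defers to \autoref{sec:budget-app} for this part, in the body I would only sketch the reduction to menus charging $B$ and the verification that the stated lottery achieves ex ante probability $\quant$, citing the appendix (and \citet{LR96,PV08}) for the optimality claim. Everything else — the formula $\quant = 1 - \dist(B/\pralloc)$ and the threshold purchasing behavior — is a routine calculation from the definition of the budget-capped utility.
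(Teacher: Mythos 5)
Your high-level plan differs from the paper's: the paper sets up the revenue-maximization program with Lagrangian multipliers for the budget constraint, the ex ante constraint, the nonnegativity constraints $\intalloc(v)\geq 0$, and feasibility $\intalloc(\bar v)\leq 1$, derives the first-order conditions, shows that the multiplier $\delta$ on the budget constraint is negative, and then uses the regular/non-decreasing-density assumptions to derive a contradiction from the hypothesis that two distinct values $v,v'<\bar v$ both have strictly positive allocation probability. This leaves at most one positive interior point, which is then zeroed out without loss. Your proposal instead tries to reduce to price-$B$ menus up front and then cite \citet{LR96,PV08} for the optimality of a single lottery.

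The reduction in your first step is where the argument breaks down. You write that ``raising any payment up to $B$ (and scaling service probability accordingly to preserve incentive constraints) can only help revenue,'' so WLOG every lottery used charges $B$. This is not a valid transformation. A menu is a set of lotteries and the IC constraints couple them: if you replace one lottery $(\pralloc_1,p_1)$ with $p_1<B$ by some $(\pralloc_1',B)$, you change which types prefer it to every other lottery in the menu, and in general there is no scaling that ``preserves incentive constraints'' across types with different values. (Two types with values $v<v'$ have different marginal rates of substitution between probability and price, so one of them will switch its choice.) Moreover, the probability may already be $1$ and cannot be scaled up. The fact that the optimal mechanism under these assumptions uses only the price $B$ is the content of the proposition and is exactly what the paper's Lagrangian argument establishes; it cannot be dispatched as a routine preprocessing step. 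You do acknowledge in your closing paragraph that showing ``charging below $B$ is never worth it'' requires the \citet{LR96,PV08} machinery, but that concession contradicts your earlier framing of the reduction as a simple observation. Either the reduction should be presented as the theorem to be proved (and then proved via the first-order conditions, as the paper does), or the entire argument collapses to a bare citation. Your second and third steps --- the computation $\quant=1-\dist(B/\pralloc)$ from the budget identity $\pralloc\cdot (B/\pralloc)=B$, the observation that $\quant\leq 1-\dist(B)$ is what guarantees $\pralloc\leq 1$, and the monotonicity of the per-type allocation in $\quant$ --- are correct and match the intent of the section, but they rest on the unproved reduction.
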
  

For $\exquant > 1 - \dist(B)$, it is easy to see that the budget does not bind and the $\exquant$ \optpricing\ is the same as when there is no budget.

Notice that the allocation rule of the mechanism satisfying
\autoref{prop:budget} is a function that steps from 0 to $\pralloc$ at
value $B/\pralloc$.  The required payment $B$ can be viewed as ``the
area above the allocation curve'' which is given by a rectangle with
width $B/\pralloc$ and height $\pralloc$.  If $\pralloc$ increases,
$B/\pralloc$ decreases and more types are served and with a higher
probability; thus, the ex ante probability of service is increased.
Analogously, if we increase the ex ante probability of service, we
enlarge the set of types served and their probability of service.  We
conclude with the following consequence.

\begin{theorem}
\label{thm:budget-main}
An agent with value drawn from a regular distribution with
non-increasing density has monotone \optpricings.
\end{theorem}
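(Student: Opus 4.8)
The plan is to obtain \autoref{thm:budget-main} as an essentially immediate consequence of \autoref{prop:budget} together with the familiar budget-free description of the $\quant$-step mechanism for the remaining quantiles. For a single-dimensional agent with a public budget the type is just the private value $v$, so it suffices to show that for every fixed $v$ the function $G_v(\quant) := \allocstepopt[\quant](v)$ --- the probability that value $v$ is served by the $\quant$-step mechanism $\stepmech$ --- is non-decreasing in $\quant$.

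First I would pin down the $\quant$-step mechanism over the whole range $\quant \in [0,1]$. For $\quant \le 1 - \dist(B)$, \autoref{prop:budget} does this already: the mechanism is the single take-it-or-leave-it lottery at price $B$ serving with probability $\pralloc(\quant)$, bought exactly when $v \ge B/\pralloc(\quant)$, with $\quant = 1 - \dist\big(B/\pralloc(\quant)\big)$; solving this gives $B/\pralloc(\quant) = \dist^{-1}(1-\quant)$ and $\pralloc(\quant) = B/\dist^{-1}(1-\quant)$. For $\quant > 1 - \dist(B)$ the reserve value $\dist^{-1}(1-\quant)$ lies below $B$, so the budget-free optimal $\quant$-step mechanism --- which by regularity is the posted price $\dist^{-1}(1-\quant)$, serving the highest $\quant$ fraction of values with probability one --- is itself budget-feasible and hence remains optimal for the budgeted agent. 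Writing $r(\quant):=\dist^{-1}(1-\quant)$, both regimes are captured by the single formula
\[
G_v(\quant) \;=\; \min\!\Big\{1,\ \tfrac{B}{r(\quant)}\Big\}\cdot \mathbb{1}\big[v \ge r(\quant)\big],
\]
and the two descriptions agree at the seam $\quant = 1-\dist(B)$ (where $r = B$, so $\min\{1,B/r\}=1$), so this holds throughout $[0,1]$.

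The monotonicity is then two lines. Since $\dist$ is a distribution function, $r(\quant)=\dist^{-1}(1-\quant)$ is non-increasing in $\quant$; hence $B/r(\quant)$ is non-decreasing, so $\min\{1,B/r(\quant)\}$ is non-decreasing, and $\mathbb{1}[v \ge r(\quant)]$ is non-decreasing in $\quant$ since the threshold only drops. Both factors are non-negative, so their product $G_v(\quant)$ is non-decreasing in $\quant$ for every fixed $v$, which is exactly the monotone-step-mechanisms property. This is the rigorous version of the ``area above the allocation curve'' remark preceding \autoref{prop:budget}: forcing that rectangular area to equal $B$ pushes the service probability $\pralloc$ up and the threshold $B/\pralloc$ down as $\quant$ grows.

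The only work beyond quoting \autoref{prop:budget} --- and the step I would treat most carefully --- is the complementary range $\quant > 1-\dist(B)$ and the gluing at $\quant = 1-\dist(B)$: one must verify that once the reserve value falls below $B$ the unconstrained posted-price optimum is budget-feasible (so no lottery can beat it) and that the service probability and threshold match across the boundary. Given regularity this is routine, so I expect the argument to be short once the characterization of the step mechanism is assembled.
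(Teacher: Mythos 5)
Your proposal is correct and follows essentially the same route as the paper: split at the seam quantile $1-\dist(B)$, invoke \autoref{prop:budget} for the budget-binding regime and the standard posted-price characterization for the non-binding regime, then observe both the threshold $B/\pralloc$ and the service probability $\pralloc$ move monotonically with $\quant$. The only (small, positive) difference is that you make the paper's informal ``area above the allocation curve'' picture rigorous by writing the closed form $G_v(\quant)=\min\{1,\,B/\dist^{-1}(1-\quant)\}\cdot\mathbb{1}[v\ge\dist^{-1}(1-\quant)]$ and verifying agreement across the seam, a gluing step the paper leaves implicit.
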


\begin{proof}
The only case not argued by the text above is when $\exquant \geq
1-\dist(B)$.  In this case, the budget is not binding and the \optpricing\ posts price $p$ that
satisfies $\exquant =
1-\dist(p)$ and serves agents willing to pay this price with
probability one.  The \optpricings\ are monotone over these
quantiles as well.
\end{proof}

\begin{example}[Implementation with public budgets]
\label{ex:budget}
The following procedure implements the marginal revenue mechanism in a single item auction for bidders each with a publicly
known budget $B$ and value drawn uniformly from $[0, 1]$.  The auction is easier
to describe separately for the two cases when $B < 1/2$ and $B \geq 1/2$.  

For $B < 1/2$, if no bidder bids above~$B$, the item is not sold; if only one
bidder bids above~$B$, she wins the item and makes a payment of~$B$; if at least
two bidders bid above~$B$, the winner of the item will be decided among these
bidders by a random procedure described shortly.  The winner always makes a payment of~$B$.

For $B \geq 1/2$, if no bidder bids above~$B$, a second price auction is run
with a reserve price of~$1/2$; if one bidder bids above~$B$, she wins the item
and makes the same payment as in a second price auction with reserve~$1/2$; if
at least two bidders bid above~$B$, one of them is decided to be the winner by
a random procedure, but all bidders that bid above a randomly chosen threshold
also makes a payment of~$B$.

Now we describe the random procedure used to determine the winner in both cases.
Note again that only bidders who bid at least~$B$ will enter this procedure.
Each such bidder~$i$ draws a random number~$r_i$ uniformly from $[0, 1]$, and
her quantile $\quant_i$ will be $\max \{r_i,  B / \vali\} - B$.  Whichever
bidder $i^*$ having the smallest quantile is declared the winner.  The threshold
above which other bidders make the payment is $B / (B + \quant_{i^*})$.


\autoref{sec:budget-app}
gives the derivation showing that this is the instantiation of
\autoref{def:monotone-marg-rev-mech}.  

Note the role played by the random mapping in this example.  When multiple
bidders bid above~$B$, the highest bidder is not guaranteed to win the item.
Her higher value helps her obtain a lower quantile by posing a smaller $B /
\vali$, but with positive probability she may lose to a lower bidder.

\end{example}

\section{Approximation}
\label{sec:approx}

In previous sections, we have shown that for any service constrained
environment the marginal revenue mechanism can be implemented.  In
\autoref{sec:general} we have also shown that for revenue linear
agents, it obtains the optimal revenue.  In this section, we show
that, quite generally, the optimal marginal revenue is a good
approximation to the optimal revenue.

We will give two approaches for approximation bounds.  The first kind
of bound is based on the single-agent problem, i.e., the distribution
and type space of each agent: if for all allocation
constraints~$\calloc$, the marginal revenue $\MargRev{\calloc}$ is a
good approximation to the optimal revenue $\Rev{\calloc}$, then the
marginal revenue mechanism is a good approximation to the optimal
mechanism.  The second approach will derive approximation bounds from
the feasibility constraint.  With no feasibility constraint, marginal
revenue maximization is optimal; for matroid environments, it remains
a $1 - 1/e$ approximation; and for general downward-closed
environments with $n$ quasi-linear-preference agents, it gives an $O(\log
n)$ approximation.

Of course, if we are in an environment where agent-based arguments
imply an $\alpha$ approximation and feasibility-based arguments
imply a $\beta$ approximation, the marginal revenue mechanism is in
fact a $\min(\alpha,\beta)$ approximation.  For revenue linear agents,
$\alpha = 1$ (and the optimal marginal revenue gives the optimal revenue); the
approximation smoothly degrades in $\alpha$ as the environment becomes
less revenue linear until it reaches the approximation bound $\beta$
given by the feasibility constraint.

\subsection{Agent-based Approximation}
\label{sec:agent}

If, for all allocation constraints, the marginal revenue is close to
the optimal revenue, then marginal revenue maximization is
approximately optimal.  One approach to deriving such a bound is to
give a linear upper bound on the optimal revenue and a lower bound
through a class of, what we refer to as, \pseudopricings.  An
\pseudopricing\ respects an ex ante constraint but may not be optimal.
If for every ex ante service probability $\exquant$ the $\exquant$
\pseudopricing\ approximates the linear upper bound, then for all
allocation constraints~$\calloc$, the marginal revenue
$\MargRev{\calloc}$ approximates the optimal revenue $\Rev{\calloc}$.
Furthermore, these \pseudopricings\ can be directly optimized over and
the same approximation factor is obtained.  Such an approach might be
desirable if the \pseudopricings\ are better behaved than the
(optimal) \optpricings, e.g., if they are easy to compute, respect an
ordering on types (\`a la \autoref{def:orderable}), or are monotone
(\`a la \autoref{def:fu-monotone}).
This approach is formalized by the following sequence of definitions
and propositions.

\begin{proposition} 
\label{prop:approx-linear}
If for any agent~$i$ and allocation constraint $\calloc_i$, the
marginal revenue $\MargRev{\calloc_i}$ is at least an $\alpha$
approximation to the optimal revenue $\Rev{\calloc_i}$, then the
marginal revenue mechanism in the multi-agent setting is an $\alpha$
approximation to the optimal mechanism.
\end{proposition}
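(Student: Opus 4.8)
The plan is to set up the comparison agent-by-agent and then aggregate. First I would recall the two profiles of interim allocation rules that arise from marginal revenue maximization: the profile $(\callocmr_1,\ldots,\callocmr_n)$ induced by running the marginal revenue maximizing procedure on quantiles drawn uniformly from $[0,1]$, and the profile $(\calloc_1,\ldots,\calloc_n)$ that is used by an arbitrary optimal mechanism (again, viewed as single-agent constraints via the mapping of \autoref{def:NIAR}). The revenue of the marginal revenue mechanism is, by construction (\autoref{def:general-marg-rev-mech} and the preceding discussion), exactly $\sum_i \MARGREV(\callocmr_i)$, since agent $i$ faces a convex combination of step mechanisms with density $-\tfrac{\dd}{\dd\quant}\callocmr_i(\quant)$ and the revenue of such a mixture is, by definition, $\MARGREV(\callocmr_i)$. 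Meanwhile the optimal mechanism has revenue $\sum_i \REV(\calloc_i)$ for its induced constraints.

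Next I would exploit the defining property of marginal revenue maximization together with the hypothesis. Because the marginal revenue mechanism selects, for every quantile profile, the feasible set maximizing cumulative marginal revenue, and because the optimal mechanism's allocation also corresponds to some feasible selection rule on quantile profiles, we get $\sum_i \MARGREV(\callocmr_i) \geq \sum_i \MARGREV(\calloc_i)$ — the expected cumulative marginal revenue of the greedy-by-marginal-revenue rule dominates that of any other feasible rule, this being a pointwise (per quantile profile) inequality that survives taking expectations. Then the hypothesis applied to each $\calloc_i$ gives $\MARGREV(\calloc_i) \geq \alpha\,\REV(\calloc_i)$. Chaining: the marginal revenue mechanism's revenue is $\sum_i \MARGREV(\callocmr_i) \geq \sum_i \MARGREV(\calloc_i) \geq \alpha \sum_i \REV(\calloc_i) = \alpha \cdot \mathrm{OPT}$.

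The one point that needs care — and which I expect to be the main obstacle to making this fully rigorous — is the claim that the optimal mechanism's allocation corresponds to a feasible selection rule on quantile profiles whose expected cumulative marginal revenue is well-defined and bounded by that of the greedy rule. Concretely, one must argue that any interim-feasible profile of allocation rules is induced by some (possibly randomized) map from quantile profiles to feasible sets, so that for every realization of $\quants$ the marginal-revenue-greedy choice is at least as good, and that integrating $\MARGREV(\calloc_i) = -\int_0^1 \rev_i(\quant)\,\dd\calloc_i(\quant)$ is exactly the expectation of the per-profile cumulative marginal revenue $\sum_i \rev_i'(\quant_i)\,[\text{$i$ served}]$. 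This is essentially the content of the ``optimal marginal revenue'' definition in the preliminaries, so I would invoke that: the optimal marginal revenue (the expected cumulative marginal revenue of greedy-by-$\rev'$) upper bounds the marginal revenue $\sum_i \MARGREV(\calloc_i)$ of any feasible profile, and the marginal revenue mechanism achieves the optimal marginal revenue. The rest is the two-line chain above.
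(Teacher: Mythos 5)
Your proof is correct and is the natural argument; the paper itself states this proposition without proof, so there is nothing to compare it against. Your chain — marginal revenue mechanism revenue equals $\sum_i \MARGREV(\callocmr_i)$, which dominates $\sum_i \MARGREV(\calloc_i)$ because greedy-by-$\rev'$ is pointwise optimal over feasible sets for every quantile profile, and the hypothesis then gives $\sum_i \MARGREV(\calloc_i) \geq \alpha \sum_i \REV(\calloc_i) \geq \alpha \cdot \mathrm{OPT}$ — is exactly the argument the paper's ``optimal revenue'' and ``optimal marginal revenue'' paragraphs in the preliminaries set up. You are right to flag the one delicate step: that the optimal mechanism's normalized interim allocation rules $\calloc_i$ constitute an interim-feasible profile in quantile space, i.e.\ that they are induced by some procedure mapping uniform quantile profiles to feasible sets, so that the pointwise comparison applies. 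This is where the paper's definition of interim feasibility is doing the work, and it is worth noting that the normalization from types to quantiles (\autoref{def:NIAR}) preserves Border-type feasibility because the Border constraints depend only on cumulative allocations over prefixes, which are invariant under the monotone reordering; if you wanted to make the proof fully self-contained, that is the fact you would supply.
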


\begin{definition}
\label{def:linear-revenue-bound}
An {\em linear revenue bound}, $\UB$, is a function mapping an allocation constraint to a revenue, which is
\begin{enumerate}
\item linear in the allocation constraint, i.e., for all allocation
constraints $\calloc = \calloc^A + \calloc^B$, $\UB(\calloc) = \UB(\calloc^A) +
\UB(\calloc^B)$; and 
\item an upper bound on revenue for all allocation constraints, i.e.,
  $\forall \calloc,\ \UB(\calloc) \geq \Rev{\calloc}$, and 
\end{enumerate}
\end{definition}

\begin{definition} 
\label{def:pseudo-step-mechanism}
A {\em \pseudopricing} is one that respects an ex ante service probability
constraint but is not necessarily revenue optimal for such a constraint.  The
revenue of a $\exquant$ \pseudopricing\ 
is denoted $\pseudorev(\exquant)$; and the {\em pseudo marginal revenue}
for allocation constraint $\calloc$ is $\PSEUDOMargRev{\calloc} =
\expect{\pseudorev'(\exquant)\calloc(\exquant)}$.
\end{definition}

We can assume without loss of generality that the pseudo marginal
revenue $\pseudorev$ is concave.  If it is not we could always
redefine the class by taking its closure with respect to convex
combination and letting the $\exquant$ \pseudopricing\ be the
revenue-optimal lottery pricing in the closure that serves with ex
ante probability $\exquant$.  This construction is analogous to the
ironing method of \citet{M81}.

\begin{proposition}
\label{prop:pseudo-approx}
For a given linear revenue bound $\UB$, if for all $\exquant \in [0,
  1]$ the $\exquant$ \pseudopricing\ $\alpha$ approximates the
bound on the $\exquant$ ex ante constrained revenue
$\UB(\callocstep)$, then the pseudo marginal revenue
$\alpha$ approximates the optimal revenue for all allocation
constraints.
\end{proposition}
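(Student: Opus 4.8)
The plan is to reduce the statement to the same one-dimensional bookkeeping already used to identify $\MARGREV(\calloc)$ with $-\int_0^1\rev(\quant)\,\dd\calloc(\quant)$ in \autoref{remark:linear-opt-marrev}, but with the optimal revenue curve $\rev$ replaced by the pseudo-revenue curve $\pseudorev$ and the optimal revenue $\REV$ replaced by the linear revenue bound $\UB$. Since the conclusion quantifies over all allocation constraints, I would fix an arbitrary monotone non-increasing $\calloc$ and prove $\PSEUDOMARGREV(\calloc)\ge\alpha\,\REV(\calloc)$ for it. The hypothesis of the proposition, read literally, is the pointwise bound $\pseudorev(\quant)\ge\alpha\,\UB(\qallocstep[\quant])$ for every $\quant\in[0,1]$.

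First I would record two linear identities. Because the reverse $\quant$-step constraints form a basis for monotone allocation constraints, $\calloc=-\int_0^1\qallocstep[\quant]\,\dd\calloc(\quant)$, and since $\UB$ is linear in the allocation constraint (with the finite-convex-combination linearity of \autoref{def:linear-revenue-bound} upgraded to this integral representation by the standard continuity argument), we get $\UB(\calloc)=-\int_0^1\UB(\qallocstep[\quant])\,\dd\calloc(\quant)$, up to the boundary term $[\UB(\qallocstep[\quant])\,\calloc(\quant)]_0^1$, which vanishes under exactly the conventions discussed for $\MARGREV$ in \autoref{sec:prelim} (e.g.\ the bound is $0$ at $\quant=0$ for bounded-utility type spaces, or $\calloc(1)=0$ in downward-closed environments). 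Symmetrically, integrating the definition $\PSEUDOMARGREV(\calloc)=\int_0^1\pseudorev'(\quant)\calloc(\quant)\,\dd\quant$ by parts gives $\PSEUDOMARGREV(\calloc)=-\int_0^1\pseudorev(\quant)\,\dd\calloc(\quant)$, again discarding the vanishing boundary term $[\pseudorev(\quant)\calloc(\quant)]_0^1$. (This is also the point where the assumed concavity of $\pseudorev$ is used: it guarantees $\pseudorev'$ exists a.e.\ and $\pseudorev$ is the integral of it, legitimizing the integration by parts.)

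Now I would simply combine the two identities with the pointwise hypothesis. As $\calloc$ is monotone non-increasing, $-\dd\calloc(\quant)$ is a non-negative measure on $[0,1]$, so integrating $\pseudorev(\quant)\ge\alpha\,\UB(\qallocstep[\quant])$ against it preserves the inequality:
\begin{align*}
\PSEUDOMARGREV(\calloc)
&= -\int_0^1\pseudorev(\quant)\,\dd\calloc(\quant) \\
&\ge -\int_0^1\alpha\,\UB(\qallocstep[\quant])\,\dd\calloc(\quant)
= \alpha\,\UB(\calloc)
\ge \alpha\,\REV(\calloc),
\end{align*}
the last step being only that $\UB$ is an upper bound on revenue. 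Since $\calloc$ was arbitrary, this proves the proposition, and the identical argument with $\pseudorev$ replaced by the optimal step revenues $\rev$ recovers the already-noted fact $\REV(\calloc)\ge\MARGREV(\calloc)$ for free.

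The computation itself is routine; I expect the only genuinely load-bearing point to be the justification that linearity of $\UB$ with respect to finite convex combinations extends to the integral decomposition $\calloc=-\int_0^1\qallocstep[\quant]\,\dd\calloc(\quant)$, together with the matching control of the boundary terms. Both are handled exactly as in the treatment of $\MARGREV$ in the preliminaries — approximate $\calloc$ by finite mixtures of step constraints, apply finite linearity, and pass to the limit — so I would import that reasoning rather than re-derive it, and devote the explicit write-up to the short chain of (in)equalities above.
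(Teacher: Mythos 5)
Your proof is correct and follows the same route the paper intends: the paper's own proof is the single sentence ``This proposition follows from linearity of both the revenue bound and the pseudo marginal revenue,'' and your chain of identities is precisely the fleshed-out version of that remark. One small imprecision worth noting: the ``boundary term'' you attach to the identity $\UB(\calloc)=-\int_0^1\UB(\qallocstep[\quant])\,\dd\calloc(\quant)$ is not produced by an integration by parts (linearity of $\UB$ applied to the step decomposition involves no integration by parts); the residual there, if any, is the term $\calloc(1)\cdot\UB(\qallocstep[1])$ coming from the fact that $-\int_0^1\qallocstep[\quant]\,\dd\calloc(\quant)$ reconstructs $\calloc(\cdot)-\calloc(1)$ rather than $\calloc$ itself, and it vanishes under the same downward-closure / zero-utility-type conventions the paper invokes for $\MARGREV$, so the conclusion stands.
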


\begin{proof} 
This proposition follows from linearity of both the revenue bound and
pseudo marginal revenue.
\end{proof}

\begin{definition} 
The {\em pseudo marginal revenue mechanism} is the one that maximizes
pseudo marginal revenue via any of the approaches of
\autoref{def:context-free-marg-rev-mech},
\autoref{def:monotone-marg-rev-mech}, or
\autoref{def:general-marg-rev-mech} that applies.
\end{definition}

\paragraph{Pseudo \optpricing\ for downward-closed unit-demand agents}
\label{sec:unit-demand} 


We illustrate the methodology proposed above for the example of
downward-closed service-constrained environments and unit-demand
agents.  Recall for unit-demand agents a service outcome is one of $m$
alternatives.  An agent's type is described by the vector
$(\val^1, \ldots, \val^m)$, her valuations for each of the $m$
alternatives, and her utility for obtaining alternative~$j$ with
payment~$\payment$ is simply $\val^j - \payment$. The agent's type is
drawn from a product distribution over the distinct alternatives.

\citet{CHMS10,CMS-10} show, for a single-unit demand agent with values for distinct alternatives from a product distribution and no feasibility constraint, that
  individually pricing alternatives is a four approximation to optimal
  lottery pricing.  Our approach in this section will be to extend
  this result to settings with ex ante and interim allocation
  constraints.  Our generalization preserves the approximation bound
  of four and exposes the approximate linearity condition required by
\autoref{prop:approx-linear}.


Consider the syntactically-related problem of selling a single
item to one of $m$ single-dimensional agents with values drawn
from a product distribution, i.e., the value $\val_i$ of agent $i$ is
drawn independently from $\dist_i$.  As described earlier
(\autoref{sec:single-dimensional}), the optimal auction for this
single-dimensional problem is well understood.  Agent values are
mapped to virtual values (equivalent to each agent's marginal
revenue), and the agent with the highest positive virtual value is
selected as the winner of the auction.  We refer to this auction
environment as the single-dimensional {\em representative
environment}, the revenue obtained by the optimal auction as the {\em
optimal representative revenue,} and the agents participating in the
auction as {\em representatives}.

Notice that if these representatives were all colluding together the
problem would be identical to our original single-agent unit-demand
problem where the alternatives correspond to the identity of the
winning representative.  We refer to this original environment as the
{\em unit-demand environment} and the revenue of the optimal lottery
pricing as the {\em optimal unit-demand revenue}.  \citet{CHMS10,
CMS-10} considered quantifying the performance of optimal unit-demand
lottery pricings relative to the optimal representative revenue.  The
approach of \citet{CHMS10} is to set individual prices for each
alternative in the unit-demand environment so as to mimic the outcome
of the optimal auction for the representative environment.  As the
optimal auction in the representative environment orders
representatives by virtual values, a natural approach to pricing the
alternatives in the unit-demand environment is to set a uniform
virtual price, i.e., the price for each alternative has the same
virtual value (with respect to the distribution from which the agent's
value for that alternative is drawn).\footnote{As mentioned above, a
representative's virtual value is equal to their marginal revenue.
For clarity of discussion and to disambiguate the marginal revenue of
the unit demand agent versus that of his representatives we will refer
to a representative's marginal revenue as his virtual value.}  The
prices in value space are generally distinct when the agent's value
distributions for the alternatives are non-identical.  \citet{CHMS10}
show that the unit-demand revenue of such a pricing is a
2-approximation to the optimal representative revenue; \citet{CMS-10}
show that the optimal unit-demand revenue (e.g., from lottery
pricings) is at most twice the optimal representative revenue.
Combining these two results, {\em uniform virtual pricing} is a
4-approximation to the optimal unit-demand revenue.

We generalize the approach above to the single-agent problem of
serving an agent with independent values for $m$ alternatives subject
to an allocation constraint $\calloc$.  In particular, twice the
optimal representative revenue is a linear revenue bound
(\autoref{def:linear-revenue-bound}), 
and for any allocation constraint it upper
bounds the optimal (unit-demand) revenue.  We define a class
of \pseudopricings\ where the $\exquant$ \pseudopricing\ is given by a
uniform virtual pricing that sells with probability $\exquant$.  Since
the virtual values are weakly increasing in the representative agents'
values, the sets of types served by these \pseudopricings\ respect an
ordering on types (\autoref{def:orderable}).  Therefore, the pseudo
marginal revenue mechanism can be implemented via the marginal revenue
mechanism for orderable agents
(\autoref{def:context-free-marg-rev-mech}).  Finally, we show that for
all $\exquant$ the $\exquant$ \pseudopricing\ is a four approximation
to the linear upper bound given by twice the optimal representative
revenue.  This result, with \autoref{prop:pseudo-approx}, implies that
the pseudo marginal revenue mechanism is a four approximation to the
optimal revenue for any allocation constraint.  The proof
of \autoref{thm:unit-demand}, below, is a non-trivial but
straightforward extension of \citet{CHMS10,CMS-10} and we include it
in \autoref{app:unit-demand}.

\begin{definition} 
\label{def:unit-demand-pseudo-step}
The $\exquant$ \pseudopricing\ for a unit-demand agent with values for
alternatives drawn independently from $\dist^1,\ldots,\dist^m$ is
given by the pricing that sets a uniform virtual price for the
alternatives such that the probability that the agent buys any
alternative is equal to $\exquant$.  (If this class does not have a
concave pseudo revenue curve we take its closure with respect to
convex combination to make it concave; if this class does not have a
monotone non-decreasing pseudo revenue curve $\pseudorev(\cdot)$ we
invoke downward closure to make it monotone.)\footnote{For showing
approximate linearity for downward-closed environments it is expedient
to incorporate the downward closure into the outcome space by
duplicating each non-service outcome and relabeling the duplicate
outcome as a service outcome.  This transformation is allowed for
downward closed environments because we are always allowed to withhold
service to an agent who would otherwise be served and this withholding
will not violate the feasibility constraint.  Of course, with such a
transformation the revenue curves are non-decreasing.}
\end{definition}

\begin{theorem} 
\label{thm:unit-demand}
In downward-closed (service constrained) environments with unit-demand
agents, both the pseudo marginal revenue mechanism and the marginal
revenue mechanism give 4-approximations to the optimal revenue.
\end{theorem}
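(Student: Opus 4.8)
The plan is to verify the hypotheses of \autoref{prop:pseudo-approx} for the class of uniform virtual pricings, with the linear revenue bound $\UB$ taken to be twice the optimal representative revenue. There are three things to check: (i) that $\UB$ is a genuine linear revenue bound in the sense of \autoref{def:linear-revenue-bound}, i.e., linear in the allocation constraint and an upper bound on $\REV(\calloc)$ for every $\calloc$; (ii) that for every quantile $\quant$ the pseudo $\quant$-step mechanism (the uniform virtual pricing selling with ex ante probability $\quant$) obtains at least a $1/4$ fraction of $\UB$ evaluated on the $\quant$-step constraint; and (iii) that the resulting pseudo marginal revenue mechanism is actually implementable — which is immediate here, since virtual values are monotone in each representative's value, so the sets of types served by the pseudo step mechanisms are nested, and hence the marginal revenue mechanism for orderable agents (\autoref{def:context-free-marg-rev-mech}) applies. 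Given (i)--(iii), \autoref{prop:pseudo-approx} gives that the pseudo marginal revenue $4$-approximates $\REV(\calloc)$ for all $\calloc$, and then the first proposition of this subsection lifts this to a $4$-approximation of the optimal \emph{mechanism} in the multi-agent downward-closed environment.

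First I would set up the representative environment: given the unit-demand agent with independent item values $\val_1,\dots,\val_m$, introduce $m$ single-dimensional representatives, representative $j$ having value $\val_j \sim \dist_j$, competing to sell a single item. For an allocation constraint $\calloc$ on the unit-demand agent, the natural matching constraint on the representative side is that the item is sold to the $\quant$-strongest representatives in virtual-value order; writing $\UB(\calloc)$ for twice the optimal revenue of this ex-ante-constrained representative auction, linearity in $\calloc$ follows because the representative optimum is itself a virtual-surplus maximization whose value, under an ex ante constraint expressed through $\calloc$, decomposes over the step-constraint basis exactly as $\REV$ and $\MARGREV$ do in \autoref{remark:linear-opt-marrev}. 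The upper-bound direction — that $\REV(\calloc) \le \UB(\calloc)$ — is the \citet{CMS-10} half of the argument: any lottery pricing for the unit-demand agent respecting $\calloc$ can be simulated, losing at most a factor of two, by a mechanism in the representative environment respecting the induced constraint; the factor of two is the standard ``charge each sale to the representative whose item was (virtually) most valuable'' accounting.

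For step (ii), fix $\quant$ and let the pseudo $\quant$-step mechanism be the uniform virtual price $p$ for which the unit-demand agent buys some item with probability exactly $\quant$. This is the \citet{CHMS10} half: compare the revenue of uniform virtual pricing against the optimal representative auction that sells with ex ante probability $\quant$. One shows the unit-demand revenue of the uniform virtual pricing is at least half the representative revenue at the same ex ante service level, hence at least $1/4$ of $\UB$ on the $\quant$-step constraint; the extra subtleties relative to \citet{CHMS10} are purely that we must track the ex ante constraint through both sides, and that if the raw pseudo revenue curve $\pseudorev$ fails to be monotone or concave we repair it via downward closure and convex-combination closure as noted in the definition, which only helps. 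I expect the main obstacle to be precisely this bookkeeping: making the reduction to \citet{CHMS10,CMS-10} respect the allocation constraint $\calloc$ on \emph{both} sides simultaneously, and verifying that the ``closure'' operations used to concavify $\pseudorev$ do not disturb the per-quantile $4$-approximation. Since the excerpt states this proof is ``a rather straightforward extension of \citet{CHMS10,CMS-10}'' relegated to \autoref{app:unit-demand}, I would carry out (i)--(iii) in that appendix and here only record the reduction to \autoref{prop:pseudo-approx} and the orderability observation that makes the mechanism implementable.
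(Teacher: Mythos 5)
Your proposal matches the paper's approach essentially step for step: you invoke \autoref{prop:pseudo-approx} with $\UB$ taken to be twice the optimal representative revenue, split the work into the two factor-of-two lemmas (the \citet{CMS-10}-style bound $\REV(\calloc) \le 2\,\ORR$ and the \citet{CHMS10}-style bound $\pseudorev(\quant) \ge \tfrac12\,\ORR(\quant)$), and observe that uniform virtual prices give nested served-sets so the orderable-agent implementation of \autoref{def:context-free-marg-rev-mech} applies — which is exactly the paper's argument, with the two lemmas relegated to \autoref{app:unit-demand} just as you propose.
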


\subsection{Feasibility-based Approximation}

We now show that feasibility constraints imply approximation bounds.
As a first trivial observation, if there is no feasibility constraint
(e.g., for digital good environments) then marginal revenue
maximization is optimal.  With no feasibility constraint, each agent
can be considered separately.  For any agent $i$, suppose the revenue
optimal mechanism serves with probability~$\exquanti$, by definition
the revenue it obtains is equal to that of the $\exquanti$
\optpricing.  The optimal revenue $\sum_i\Rev{\callocstepi}$ is equal
to the marginal revenue $\sum_i \MargRev{\callocstepi} = \sum_i
\revi(\exquanti)$.  This observation approximately generalizes as
follows.  The marginal revenue mechanism is a an $e/(e - 1)$
approximation in service-constraind matroid environments and 
an $O(\log n)$ bound for downward-closed environments 
on $n$ quasi-linear-utility agents.

\paragraph{Matroid environments, by single-dimensional-agent reduction}
\label{sec:matroid}

Marginal revenue maximization is an $e/(e-1)$ approximation for
service-constrained matroid environments, i.e., when the feasibility
constraint is induced by independent sets of a matroid set system.
Multi-unit environments, where at most a fixed number $k$ of the
agents can be simultaneously served, are a special case of matroid
environments (corresponding to the $k$-uniform matroid).  For the k=1
unit environment, which corresponds to a single-item auction, the
bound remains $e/(e-1)$; for general $k$ the bound improves to
$\sqrt{2\pi k}/(\sqrt{2 \pi k} -1)$, as simplified by Stirling's
approximation.  These results follow by reduction to the correlation
gap theorem of \citet{Y11}.

Our approach is to reduce the question of approximation of the optimal
mechanism by the marginal revenue mechanism to a question of
approximation in the single-dimensional analog environment
(recall \autoref{def:optimal-marginal-revenue}).  In particular, we
consider relaxing the feasibility constraint to hold \emph{ex ante} instead
of ex post.  Such a relaxation potentially enables a higher revenue to
be obtained.  The single-dimensional-agent approximation question is
to quantify the extent to which the optimal mechanism for the ex post
feasibility constraint approximates the optimal mechanism for the ex
ante feasibility constraint.


\begin{definition}
\label{def:ex-ante-optimal-mechanism}
A profile of ex ante service probabilities $\exquants =
(\exquanti[1],\ldots,\exquanti[n])$ is {\em ex ante feasible} if there
is a distribution over feasible subsets of agents such that for each
$i$, $\exquanti$ is the (ex ante) probability agent $i$ is in the
subset.  The {\em ex ante optimal mechanism} is the one that maximizes
$\sum_i \revi(\exquanti)$ subject to ex ante feasibility of
$\exquants$.
\end{definition}

\begin{proposition}
\label{prop:exante-reduction}
The ex ante optimal revenues for a general service constrained
environment and its single-dimensional analog are equal and an upper
bound on the (ex post feasible) optimal revenues (which may not be
equal).  If the optimal mechanism is a $\beta$-approximation to the ex
ante optimal revenue in the single-dimensional analog environment,
then the marginal revenue mechanism is a $\beta$-approximation to the
optimal revenue in the original environment.
\end{proposition}

\begin{proof}
The ex ante optimal revenue is defined only in terms of revenue curves
and feasibility for the service constrained environment; therefore, a
general environment and its single-dimensional analog have the same ex
ante optimal revenue.  By \autoref{def:optimal-marginal-revenue} the
(ex post feasible) optimal revenue in the single-dimensional analog is
equal to the (ex post feasible) optimal marginal revenue of the
original environment.  To show the reduction, then, it suffices to
observe that the ex ante optimal revenue is an upper bound on the
optimal revenue in the original environment.  As every ex post
feasible mechanism is ex ante feasible (i.e., the latter is a relaxation
of the former), the observation holds.
\end{proof}


The following single-dimensional agent theorem is an immediate
consequence of results of \citet{Y11}; his results, in fact, gave a
specific (ex post feasible but non-optimal) mechanism that satisfies
the claimed bound.  Of course, then, the (ex post feasible) optimal
mechanism satisfies the bound too.  We obtain our desired result for
general agents as a corollary of this theorem
and \autoref{prop:exante-reduction}.





\begin{theorem}
For matriod environments with single-dimensional linear agents, the
(ex post feasible) optimal mechanism is an $e/(e-1)$ approximation to
the ex ante optimal mechanism; in any $k$-unit environment the bound
improves to $\sqrt{2\pi k}/(\sqrt{2\pi k} - 1)$.
\end{theorem}

\begin{corollary}
In any service constrained matroid environment, the marginal revenue
mechanism is an $e/(e-1)$ approximation to the optimal mechanism; in
any service constrained $k$-unit environment the bound improves to
$\sqrt{2\pi k}/(\sqrt{2\pi k} - 1)$.
\end{corollary}

\paragraph{Downward-closed environments}
\label{sec:downward}

In this section we show that in downward-closed environments and for a
large class of agent preferences, the optimal marginal revenue is a
logarithmic approximation, in the number of agents, to the optimal
revenue.  For example, this class includes quasi-linear preferences.
In contrast to \autoref{sec:agent} where we gave a four approximation
for unit-demand preferences with a product distribution (over
alternatives), the results here apply, for example, to agents with
correlated value distributions over alternatives and to quasi-linear
preferences beyond unit demand.

To show this result we will incorporate the downward closure of the
environment in the single-agent lottery pricing problems.
Specifically, it is without loss of generality for downward-closed
environments to duplicate every non-service outcome and label the
duplicate a service outcome.  This transformation implies that revenue
is monotone in the allocation constraint, i.e., weaker constraints give
no lower revenue.

A summary of the construction in the proof is as follows.  If we
consider allocation constraints with a minimum probability of $2^{-K}$
for allocating to any type, then the allocation constraint can be
partitioned into $K$~pieces such that the highest and lowest
probabilities of allocation in each piece are within a factor of two
of each other.  If the single-agent lottery pricing problems satisfy a
natural scalability property then the revenue of each piece can be
approximated by a $\exquant$ \optpricing\@ scaled appropriately so
that it is dominated by the original allocation constraint.  The
optimal revenue, then, is at most an $O(K)$ multiple of the revenue of
the best such scaled \optpricing.  By downward closure, the optimal
marginal revenue exceeds this revenue and is thus an $O(K)$
approximation.  We obtain a logarithmic approximation by observing
that attention can be restricted to allocation constraints for which
$K \approx \log n$.

Recall that the revenue operator $\Rev{\cdot}$ is concave in its
argument and therefore, for any $\gamma \in [0,1]$,
$\Rev{\gamma \calloc } \geq \gamma \Rev{\calloc} + (1-\gamma)\Rev{0}$,
where $\Rev{0} = \rev(0)$ is the optimal revenue when the agent is
never served.  We assume for simplicity of exposition that $\Rev{0}
= \rev(0) = 0$, i.e., that an agent who is not served generates no
revenue.  The revenue scalability property we need is the opposite of
this inequality, which, if the property holds, must therefore be an
equality.  In fact, revenue scalability can be viewed as a very
permissive relaxation of revenue linearity.

\begin{definition}
An agent is \emph{revenue scalable} if for any $\gamma \in [0,1]$ and
any allocation constraint $\calloc$, the optimal revenue for
$\gamma \calloc$ is equal to $\gamma$ times the optimal revenue for
$\calloc$.
I.e.,
$$
\Rev{\gamma \calloc} = \gamma \Rev{\calloc}.
$$
\end{definition}

For example, as we will show, quasi-linear agents satisfy revenue
scalability, but are not generally revenue linear.  Moreover, if
individual rationality is assumed, which usually implies that the
utility and payment of an agent for non-service outcomes is zero, then
even non-quasi-linear agents are revenue scalable.  These observations
are formalized in the following lemma.

\begin{lemma}
Both (a) quasi-linear agents with no value for non-service outcomes
and (b) agents with no utility and payment for any non-service outcome
are revenue scalable.
\end{lemma}

\begin{proof}
A key property of agents that are quasi-linear or have no utility and
payment for non-service outcomes is that their utility and payment for
any non-service outcome can be arbitrarily scaled upward.  If an
agent's utility and payment for a non-service outcome is zero, then
scaling it upwards is trivial; if an agent is quasi-linear then his
value for a non-service outcome is (minus) his payment and quasi-linearity
requires that scaled payments translate to scaled utility.  Thus, it
suffices to show that agents with scalable utility and payment for
non-service outcomes are revenue scalable.

Consider any allocation constraint $\calloc$ and the optimal lottery
pricing for the scaled constraint $\gamma \calloc$.  Denote by
$\lotteryset$ the set of priced lotteries.  As
$\gamma\calloc(\exquant) \leq \gamma$ for all $\exquant$, the
probability of a service outcome in any of the lotteries of $\lotteryset$
is at most $\gamma$.  The theorem holds if we can define an alternative
set of priced lotteries $\lotteryset'$ that meets the constraint
$\calloc$ where the utility and payment of any type for any lottery is
scaled upward a $1/\gamma \geq 1$ multiple.  This is achieved by
scaling the probability of any service outcome in any lottery upwards
by a $1/\gamma$ multiple (without changing its payment), scaling the remaining probability of
non-service outcomes down (so that the total probability is one), and
scaling the utility and payment for non-service outcomes so that it is
$1/\gamma$ multiple of that for the original lottery (which is possible
by the assumption on scalability for non-service outcomes).  Let
$\gamma\qalloc(\cdot)$ denote the optimal allocation rule for
constraint $\gamma \calloc$; the allocation rule from this
construction is $\qalloc$ and it is feasible for constraint $\calloc$.
\end{proof}

We now show that the marginal revenue approximates the optimal revenue
for revenue-scalable agents in downward-closed service-constrained
environments.

\begin{lemma}
For a revenue-scalable agent, any allocation constraint with minimum
allocation probability $\calloc(1) \geq 2^{-K}$ has revenue
$\Rev{\calloc}$ at most $2K \MargRev{\calloc}$.
\label{downwardlemma}
\end{lemma}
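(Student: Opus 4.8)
The plan is to cut the quantile interval $[0,1]$ into at most $K$ dyadic ``bands'' according to the magnitude of $\calloc$, to dominate $\calloc$ by a sum of $K$ scaled step constraints, and then to charge each scaled step against the marginal revenue of $\calloc$. Since $\calloc$ is non-increasing and $2^{-K}\le\calloc(1)$, every $\calloc(\quant)$ lies in $[2^{-K},1]$, so $[0,1]$ decomposes into the intervals $I_k=\{\quant:\calloc(\quant)\in(2^{-k},2^{-k+1}]\}$ for $k=1,\dots,K$; on each $I_k$ the allocation probability is determined up to a factor of two. Write $a_k=\sup\{\quant:\calloc(\quant)>2^{-k}\}$ for the right endpoint of $I_k$.

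First I would check the domination $\calloc\dominated\sum_{k=1}^{K}2^{-k+1}\callocstep[a_k]$: for $\quant\in I_{k'}$ the right-hand side equals $\sum_{k=k'}^{K}2^{-k+1}\ge 2^{-k'+1}\ge\calloc(\quant)$, so the inequality holds pointwise, hence in the dominance order. Since optimal revenue is monotone under dominance in downward-closed environments, $\REV(\calloc)\le\REV\!\bigl(\sum_{k=1}^{K}2^{-k+1}\callocstep[a_k]\bigr)$.

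Next I would pass to a sum of revenues, using (i) subadditivity of $\REV(\cdot)$ over monotone allocation constraints and (ii) the identity $\REV(c\,\callocstep[a])=c\,\rev(a)$ for $c\in(0,1]$ in downward-closed environments (the ``$\ge$'' direction dilutes the optimal $a$-step mechanism; the ``$\le$'' direction scales any mechanism respecting $c\,\callocstep[a]$, whose allocation probability is everywhere at most $c$, up by a factor $1/c$). This yields $\REV\!\bigl(\sum_{k}2^{-k+1}\callocstep[a_k]\bigr)\le\sum_{k=1}^{K}\REV(2^{-k+1}\callocstep[a_k])=\sum_{k=1}^{K}2^{-k+1}\rev(a_k)$. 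Finally I would observe that $\calloc(\quant)>2^{-k}$ for every $\quant<a_k$, so $2^{-k}\callocstep[a_k]$ is dominated by $\calloc$ (its cumulative $2^{-k}\min(\quant,a_k)$ never exceeds $\int_0^{\quant}\calloc$); since marginal revenue is linear in the constraint and monotone under dominance, $2^{-k}\rev(a_k)=\MARGREV(2^{-k}\callocstep[a_k])\le\MARGREV(\calloc)$. Hence each term is at most $2\MARGREV(\calloc)$, and summing over the at most $K$ bands gives $\REV(\calloc)\le\sum_{k=1}^{K}2^{-k+1}\rev(a_k)\le 2K\,\MARGREV(\calloc)$.

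The step I expect to be the main obstacle is the subadditivity of $\REV(\cdot)$: concavity of $\REV$ only yields the reverse inequality, so the revenue of a sum of constraints is not below the sum of the individual revenues for free. Establishing it seems to require exploiting the single-agent lottery-pricing structure together with downward closure — for instance, taking the optimal mechanism for a sum $\sum_k\calloc_k$, relabeling outcomes (which is free in a downward-closed environment) so that its allocation rule is exactly $\sum_k\calloc_k$, splitting each type's service probability into the pieces $\calloc_k(\quant)$, and charging the ``$k$-th piece'' of the revenue to a mechanism respecting $\calloc_k$; making this charging consistent, so that strong types are not over-exploited in the $k$-th sub-mechanism beyond what $\calloc_k$ permits, is the delicate point. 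The remaining ingredients — the dyadic decomposition, the geometric-series domination, and the marginal-revenue comparison — are routine.
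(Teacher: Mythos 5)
Your outline and the paper's proof start from the same idea -- dyadically band the quantile line so that within each band the allocation probability changes by at most a factor of two, then compare to scaled step constraints -- but they diverge at a step you yourself flag as the main obstacle, and that step is in fact where the gap is.

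Your argument needs $\REV$ to be subadditive over sums of allocation constraints, i.e.\ $\REV\bigl(\sum_k \calloc_k\bigr)\le\sum_k\REV(\calloc_k)$. This does not follow from the properties established in the paper: $\REV$ is concave with $\REV(0)=0$, which for a scalar concave function would yield subadditivity, but here $\REV$ lives on the cone of monotone allocation constraints, and concavity only controls convex combinations, not sums of constraints pointing in different ``directions.'' (Concretely, concavity gives $\REV(\calloc_a+\calloc_b)\ge\tfrac12\REV(2\calloc_a)+\tfrac12\REV(2\calloc_b)$ and $\REV(\calloc_a)\ge\tfrac12\REV(2\calloc_a)$, which bound both sides from below by the same quantity and give no comparison between them.) Monotonicity under dominance, likewise, only gives the useless direction $\REV(\calloc_a+\calloc_b)\ge\REV(\calloc_a)$. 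Your sketch for filling the hole -- take the optimal mechanism for the sum and split its allocation and revenue across the summands -- is exactly the delicate charging argument where the paper does its real work, and you leave it undone. There is also a minor formal issue: $\sum_{k=1}^K 2^{-k+1}\callocstep[a_k]$ is close to $2$ near $\quant=0$, so it is not a valid allocation constraint and $\REV$ of it is undefined; you would at least need to cap it at~$1$ before invoking monotonicity.

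The paper sidesteps subadditivity entirely by decomposing on the other side of the inequality. It bands not the constraint $\calloc$ but the optimal normalized allocation rule $\qalloc$ of the revenue-maximizing mechanism subject to~$\calloc$, and it additively decomposes the optimal revenue $\revscalar=\sum_j \revscalar_j$ by which types generate it (so the decomposition of $\revscalar$ is tautological, not an inequality to be proved). It then picks the single band $\optj$ carrying at least a $1/K$ fraction of $\revscalar$ and bounds $\revscalar_{\optj}$ directly by a scaled step revenue via an explicit lottery-pricing argument: restrict attention to the lotteries actually purchased by types in that band, prune from the cheapest up to hit the right measure, rescale allocation probabilities by $1/\qalloc(\quant_{\optj})$, and observe the resulting menu is feasible for a step constraint. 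This per-band charging plays the role that your would-be subadditivity lemma plays, and is where the proof genuinely uses the single-agent lottery structure. If you want to salvage your route, you would need to prove subadditivity of $\REV$ over (capped) sums of scaled step constraints in downward-closed service-constrained environments, and I do not see how to do that without essentially reproducing the paper's lottery-restriction-and-rescaling argument.
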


\NOTSTOC{\IFSTOCELSE
{
\section{Proof for downward-closed approximation.}
\label{sec:downward-app}

We now prove the main lemma needed for \autoref{thm:downward} which shows that the marginal revenue mechanism is an $O(\log n)$-approximation in $n$-agent downward-closed environments.
\begin{proof}[of \autoref{downwardlemma}]
}
{
\begin{proof}
}

Let $\revscalar = \Rev{\calloc}$ be the optimal revenue under
allocation constraint~$\calloc$. Let $\qalloc \dominated \calloc$ (where notation
$\qalloc \dominated \hat{\qalloc}$ denotes allocation rules whose
cumulative allocation rules satisfy $\cumalloc(\exquant) \leq
\cumcalloc(\exquant)$ for $\exquant \in [0,1]$; importantly, $\cumalloc(1) = \cumcalloc(1)$ is not required) be the allocation of optimal mechanism subject to $\calloc$. Therefore, $\Rev{\qalloc} = \Rev{\calloc}$. If we prove the claim for $\qalloc$, the proof for $\calloc$ follows because $\Rev{\calloc} = \Rev{\qalloc} \leq 2K \MargRev{\qalloc} \leq 2K \MargRev{\calloc}$, where the last inequality follows by definition of dominance and concavity of the revenue function. Therefore, in the rest of the proof we can assume without loss of generality that the optimal allocation subject to $\calloc$ is $\calloc$ itself.

 Define a sequence of quantiles $0
= \quant_0 \leq \quant_1 \leq \cdots \leq \quant_K=1$ such that
$\calloc(\quant_{j-1})\leq 2\calloc(\quant_{j})$, for $j \in \{1, \ldots,
K\}$.  Define $\revscalar_j$ to be the expected revenue from types that
are mapped to a quantile in $[\quant_{j-1}, \quant_j]$, where the
quantile of a type~$\type$ is the probability that a type drawn at
random has a higher probability of service than that of~$\type$ (as
per \autoref{def:allocation-rule} in \autoref{sec:general}). Therefore, the revenue of the
mechanism is $\revscalar=\sum_{j=1}^K \revscalar_j$. Then there must
exist $\optj$ such that $\revscalar \leq K
\revscalar_{\optj}$. In what follows, we define allocation rules $\qallocalt_j(\cdot)$ for all
$j$, such that $\qallocalt_j\dominated \calloc$ (where notation
$\qalloc \dominated \hat{\qalloc}$ denotes allocation rules whose
cumulative allocation rules satisfy $\cumalloc(\exquant) \leq
\cumcalloc(\exquant)$ for $\exquant \in [0,1]$; importantly, $\cumalloc(1) = \cumcalloc(1)$ is not required), and also $\revscalar_j \leq 2
\MargRev{\qallocalt_j}$. In
particular, for $\optj$ we will have
$\qallocalt_{\optj}\dominated \calloc$, and
$2\MargRev{\qallocalt_{\optj}} \geq \revscalar_{\optj} \geq \revscalar/K$,
which will imply that
\begin{align*}
2\max\nolimits_{\qallocalt \dominated \calloc} \MargRev{\qallocalt} \geq
2\MargRev{\qallocalt_{\optj}} \geq \revscalar/K.
\end{align*}

Define function $\qallocalt_j(\cdot)$ to be $\qallocalt_j(\quant)
= \calloc(\quant_{j+1})$ if $\quant\leq \quant_{j+1}-\quant_j$, and 0
otherwise. Notice that for any $\quant$, we have
$\qallocalt_j(\quant)\leq \calloc(\quant)$, and therefore
$\qallocalt_j\dominated \calloc$, by the definition of dominance in
downward-closed environments.

The main technical component of the proof is to show that, for $\qallocalt_j$ defined above, $\revscalar_j \leq
2\MargRev{\qallocalt_j}$. By construction of $\qallocalt_j$, and recalling that $\calloc(\quant_{j}) \leq 2 \calloc(\quant_{j + 1})$,
\begin{align*}
2\MargRev{\qallocalt_j}
   &= 2\int_{0}^1 \qallocalt_j(\quant)\rev'(\quant)\:\dd\quant \\
   &= 2\calloc(\quant_{j+1}) \rev(\quant_{j+1}-\quant_j) \\
   &\geq {\calloc(\quant_{j}) \rev(\quant_{j+1}-\quant_j)}
\end{align*}

It is therefore sufficient to show that
$\calloc(\quant_{j}) \rev(\quant_{j+1}-\quant_j)\geq \revscalar_j$. Recall
that $\revscalar_j$ is the revenue from types that are mapped to
quantiles in $[\quant_j,\quant_{j+1}]$. Any type in
$[\quant_j,\quant_{j+1}]$ is allocated in ${\calloc}$ with probability
at most $\calloc(\quant_j)$. Now define
$L$ to be the set of lotteries chosen by types in
$[\quant_j,\quant_{j+1}]$, and offer only these lotteries to the
agent.\footnote{Recall that, by the taxation principle, any incentive
compatible mechanism consists of a set of lotteries, from which the
agent chooses the one maximizing her utility.} Notice that types with
quantiles in $[\quant_j,\quant_{j+1}]$ choose the same lottery in
$\lotteryset$ as they did in ${\calloc}$ (whereas other types that
used to choose a lottery either switch to some lottery in~$L$ or no
longer choose one if none in~$L$ give them non-negative utility).  As
a result, the measure of the types that choose some lottery in
$\lotteryset$ is at least $\quant_{j+1}-\quant_j$. Now remove
lotteries from $\lotteryset$, from the one with lowest price, until
the measure of types that choose some lottery is exactly
$\quant_{j+1}-\quant_j$.\footnote{This requires continuity of the type space. We assume continuity for simplicity, but the proof can be easily generalized to handle discrete types.} Call this new set of lotteries
$\lotteryset'$.  Notice that the revenue from $\lotteryset'$ is at
least $\revscalar_j$.  Now recall that all the lotteries in
$\lotteryset$, and therefore $\lotteryset'$, allocate with probability
at most $\calloc(\quant_j)$. Revenue scalability implies that the
revenue of $\lotteryset'$ is at most
$\calloc(\quant_{j}) \rev(\quant_{j+1}-\quant_j)$.

To complete the proof, recall that for downward-closed environments
revenue curves are monotone non-decreasing and so marginal revenues are
non-negative.  Therefore, by the definition of marginal revenue and dominance,
$\MargRev{\calloc} \geq \MargRev{\qallocalt_j}$ for all $j$.
\end{proof}

\STOC{
\begin{proof}[of \autoref{thm:downward}]
Consider an alternative mechanism that runs the optimal mechanism with
probability $1/2$, and otherwise picks an agent at random and outputs
an arbitrary outcome that services that agent, regardless of his type
and without charging him.  Notice that the revenue of the alternative
mechanism is half the revenue of the optimal revenue. Let
$\qalloc_1, \ldots,\qalloc_n$ be the allocation rules for the
alternative mechanism.  Notice also that by construction of the
alternative mechanism, for each $i$ and $\quant\in [0,1]$ we have
$\qalloc_i(\quant) \geq 1/2n$.  Therefore we can
invoke \autoref{downwardlemma} with $K = \log 2n$ to conculde that
that the revenue of the alternative mechanism is at most
\IFSTOCELSE
{
$2\log{n}\sum\nolimits_i \MARGREV_i[\qalloc_i].$
}
{\begin{align*}
&2\log{n}\sum\nolimits_i \MARGREV_i[\qalloc_i]. \qedhere
\end{align*}}
\end{proof}
}
}

\begin{theorem}
\label{thm:downward}
In downward-closed revenue-scalable environments with $n$ agents, the
optimal marginal revenue is a $4\log{n}$ approximation to the optimal
revenue.
\end{theorem}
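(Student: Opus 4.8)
The plan is to run the by-now-standard multi- to single-agent reduction and then compare revenue to marginal revenue agent by agent using \autoref{downwardlemma}. Fix a revenue-optimal (BIC, IIR) mechanism and let $\calloc_i$ be the normalized interim allocation rule it induces for agent~$i$, so that the profile $(\calloc_1,\dots,\calloc_n)$ is interim feasible and the optimal revenue satisfies $\optrev \le \sum_i\REV(\calloc_i)$. The one thing standing in the way of invoking \autoref{downwardlemma} directly is that an optimal mechanism may serve the weakest types of some agent with tiny probability, so $\calloc_i(1)$ can be much smaller than $2^{-K}$ for the natural choice $K=2\log n$. First I would remove this obstruction by raising the floor of each constraint.

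Set $K=2\log n$ and $\bar\calloc_i(\quant)=\max\{\calloc_i(\quant),2^{-K}\}$. This is again a monotone allocation constraint; it dominates $\calloc_i$ in the downward-closed sense of dominance (it is pointwise at least $\calloc_i$), so $\REV(\calloc_i)\le\REV(\bar\calloc_i)$ by monotonicity of optimal revenue under dominance in downward-closed settings; and it has minimum probability $\bar\calloc_i(1)\ge 2^{-K}$, so \autoref{downwardlemma} gives $\REV(\bar\calloc_i)\le 2K\,\MARGREV(\bar\calloc_i)=4\log n\cdot\MARGREV(\bar\calloc_i)$. Because $\bar\calloc_i\le\calloc_i+2^{-K}$ pointwise and, in downward-closed environments, $\rev_i'(\quant)\ge 0$, I get $\MARGREV(\bar\calloc_i)\le\MARGREV(\calloc_i)+2^{-K}\rev_i(1)$, where $\rev_i(1)$ is the unconstrained single-agent optimal revenue of~$i$; and $\rev_i(1)\le\optrev$ since downward closure permits the designer to serve agent~$i$ alone. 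Summing over the $n$ agents, $\sum_i\MARGREV(\bar\calloc_i)\le\sum_i\MARGREV(\calloc_i)+n\,2^{-K}\optrev=\sum_i\MARGREV(\calloc_i)+\optrev/n$.

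Next I would observe that $\sum_i\MARGREV(\calloc_i)$ is at most the optimal marginal revenue (with the usual convention that the boundary terms in the marginal-revenue expression vanish in downward-closed environments). With each agent's quantile uniform on $[0,1]$, the optimal mechanism realizes the feasible interim profile $(\calloc_i)$, and its expected cumulative marginal revenue equals $\sum_i\int_0^1\rev_i'(\quant)\calloc_i(\quant)\,\dd\quant=\sum_i\MARGREV(\calloc_i)$; the marginal-revenue-maximizing procedure picks, for every quantile profile, the feasible set of largest cumulative marginal revenue pointwise, so its expected cumulative marginal revenue, which is the optimal marginal revenue, is only larger. Writing $M$ for the optimal marginal revenue and chaining the inequalities, $\optrev\le 4\log n\,(M+\optrev/n)$, and rearranging (valid once $4\log n<n$, i.e.\ for all but the smallest $n$) gives $\optrev\le 4\log n\cdot M$, up to a vanishing lower-order factor that one absorbs by taking the floor a hair below $2^{-K}$ without changing the $2K$ in \autoref{downwardlemma}.

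The step I expect to be the crux is exactly this handling of the low-probability tails, since \autoref{downwardlemma} is only available for constraints bounded away from zero. The alternative to raising the floor is to split $\REV(\calloc_i)$ into the revenue from types served with probability at least $2^{-K}$ (to which the lemma applies, after rescaling the quantile axis) and the revenue from types served with smaller probability; the latter is bounded by scaling those lotteries up by a factor $2^K$, which produces a legitimate single-agent menu selling with ex ante probability below one and hence with revenue at most $\rev_i(1)\le\optrev$, i.e.\ at most $\optrev/n$ after scaling back down. Either route contributes only an additive $O(\optrev/n)$; everything else is the direct composition of \autoref{downwardlemma} with the optimality of marginal-revenue maximization over feasible interim allocation profiles.
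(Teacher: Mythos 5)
Your proof is correct in substance and rests on the same key ingredient, \autoref{downwardlemma}, but the way you satisfy its hypothesis is genuinely different from the paper's. The paper guarantees the floor $\qalloc_i(1)\ge 2^{-K}$ by modifying the \emph{mechanism}: it mixes the optimal mechanism with probability $1/2$ with a ``give a uniformly random agent a free service'' mechanism, producing a feasible mechanism whose interim allocation rules satisfy $\qalloc_i(\quant)\ge\frac{1}{2n}$ and whose revenue is exactly $\optrev/2$. Because that modified mechanism is itself ex post feasible, its allocation profile is interim feasible, so $\sum_i\MARGREV(\qalloc_i)$ is immediately dominated by the optimal marginal revenue $M$, and the only slack is the up-front factor of $2$; the chain $\optrev/2 \le \sum_i\REV(\qalloc_i) \le 2\log(2n)\sum_i\MARGREV(\qalloc_i)\le 2\log(2n)\,M$ closes cleanly. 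You instead raise the floor on the \emph{constraints}, replacing $\calloc_i$ by $\bar\calloc_i = \max\{\calloc_i, 2^{-K}\}$. The price you pay for this is that the $\bar\calloc_i$ need not be jointly interim feasible, so you cannot compare $\sum_i\MARGREV(\bar\calloc_i)$ directly to $M$ and must instead route through the pointwise bound $\bar\calloc_i\le\calloc_i + 2^{-K}$, incurring the additive error $n\cdot 2^{-K}\rev_i(1)\le \optrev/n$, and then close a self-referential inequality $\optrev\le 4\log n\,(M + \optrev/n)$.

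One small caveat: the last clause of your argument is not quite right. Rearranging the self-referential inequality gives $\optrev\le \frac{4\log n}{1 - 4\log n/n}\,M$, a factor of $(4+o(1))\log n$, and your proposed fix --- ``take the floor a hair below $2^{-K}$ without changing the $2K$'' --- does not work as stated, because lowering the floor forces a larger $K$ in \autoref{downwardlemma}, so the $2K$ does change. This is a cosmetic issue only: the paper's own derivation produces $4\log(2n) = 4\log n + 4$ rather than the stated $4\log n$, so the displayed constant is already loose, and both routes give the same $\Theta(\log n)$ with leading constant $4$. But the paper's mechanism-mixing argument avoids the self-referential step and so gives a fixed, $n$-independent constant, which is the tidier way to land the theorem.
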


\NOTSTOC{
\begin{proof}
Consider an alternative mechanism that runs the optimal mechanism with
probability $1/2$, and otherwise picks an agent at random and outputs
an arbitrary outcome that serves that agent, regardless of his type
and without charging him.  This alternative mechanism is obviously incentive
compatible, and its revenue is half of the optimal. Let
$\qalloc_1, \ldots,\qalloc_n$ be the allocation rules for the
alternative mechanism.  Notice also that by construction, for each~$i$ and any $\quant\in [0,1]$ we have
$\qalloc_i(\quant) \geq 1/2n$.  Therefore we can invoke \autoref{downwardlemma} with $K = \log 2n$ to conclude that
the revenue of the alternative mechanism is at most
\IFSTOCELSE
{
$2\log{n}\sum\nolimits_i \MARGREV_i[\qalloc_i].$
}
{\begin{align*}
&2\log{n}\sum\nolimits_i \MARGREV_i[\qalloc_i]. \qedhere
\end{align*}}
\end{proof}
}

\section{Single Dimensional Extension Theorems}
\label{sec:simple}

The marginal revenue approach allows natural generalizations of
techniques developed for single-dimensional linear agent environments.
We will focus here on results for the approximation of optimal
mechanisms by simple mechanisms.  In such a study we are not free to
arbitrarily design the simple mechanism.  Instead, we show that
performance guarantees for simple mechanisms are often obtainable by
relating them to marginal revenue mechanisms.

Consider a variant of the red-or-blue car example from the
introduction.  There are $n$~agents, $k$~cars, and $m$~possible
colors.  The social surplus maximizing mechanism (a.k.a.\@ VCG; see
\citealp{vic-61}; \citealp{cla-71}; and \citealp{gro-73}) selects the
$k$~agents whose values for their favorite color are the highest,
serves these agents, and paints each car as the agent prefers.  We
consider this mechanism simple and practical, and we compare its
revenue against the optimal revenue \citep[cf.][]{HR09}.  Shortly we
will give conditions under which this mechanism is approximately
optimal.

One feature of the VCG mechanism in service constrained environments
is that, ex post, i.e., after agents make reports to the mechanism,
each agent faces a {\em uniform} price over the alternatives.  This
price is equal to the favorite-color value among the other agents.
Thus, in the interim stage each agent faces a distribution over
uniform prices.  We show that the VCG mechanism has near-optimal
revenue in two steps.  First, we show that the VCG revenue is close to
the revenue of the optimal mechanism that only offers agents uniform
prices.  Second, we show that the latter revenue is close to the
optimal revenue by any mechanism.  An important observation is that
the intermediate revenue in between these two steps is the optimal
pseudo marginal revenue with uniform ex ante pseudo pricings
(cf. \autoref{def:pseudo-step-mechanism}).




For the first step of the argument, the gap between the VCG revenue
and the optimal pseudo marginal revenue is governed by the
single-dimensional theory.  Both mechanisms opperate on the type space
given by projection of each unit-demand agent's multi-dimensional type
into the single-dimensional space given by his value for his favorite
alternative.  In particular, the VCG revenue for the unit-demand
agents is equal to its revenue under this single-dimensional
projection, and the optimal pseudo marginal revenue for the
unit-demand agents is equal to the optimal revenue for the projection.
For the second step in our argument, by the theory of agent-based
approximation we developed in \autoref{sec:approx}
(e.g., \autoref{prop:pseudo-approx}), we need only analyze how good the
uniform pricings are, as ex ante pseudo pricings.



We consider a concrete simple case before developing the general
theory.  In the above car-selling example, let $k$ be~$1$, and each
agent's value for each color be i.i.d.\@ (i.i.d.\@ among agents and
i.i.d.\@ across the alternatives).  Since each agent's values for the
alternatives are i.i.d., a uniform price for an agent is also a
uniform virtual price (see \autoref{def:unit-demand-pseudo-step}).
Thus, \autoref{thm:unit-demand} implies that the optimal pseudo
marginal revenue (with uniform ex ante pseudop pricing) is a four
approximation to the optimal revenue.  This constitutes the second
step of our planned argument.  For the first step, the standard
single-dimensional theory.  If the distribution of the i.i.d.\@
unit-demand agent's favorite-alternative value satisfies the
regularity condition of \citet{M81}, then the theorem of \citet{BK96}
implies that, for its single-dimensional analog, the second-price
auction is an $\tfrac{n}{n-1}$ approximation to the single-dimensional
optimal revenue, which is in turn equal to the optimal pseudo marginal
revenue for the unit-demand agents.  Combining the two steps, we have
shown that the VCG mechanism for unit-demand agents is a
$\tfrac{4n}{n-1}$ approximation.  This discussion is formalized and
generalized belowq.


\begin{definition}
\label{def:uniformlinear}
A unit-demand agent is {\em $\beta$ uniformly priceable} if, for any
allocation constraint $\calloc$, a distribution over uniform pricings
gives a $\beta$ approximation to the optimal lottery pricing.  The
{\em uniform ex ante pseudo pricings} are the optimal of these
pricings for ex ante constraints.
\end{definition}

\begin{definition} 
\label{def:favorite-alternative-mechanism}
The {\em favorite-alternative single-dimensional analog} of a
unit-demand service constrained environment is given by projecting the
values of each unit-demand agent to the value of his favorite
alternative.  The {\em favorite-alternative extension} of a
single-dimensional mechanism is a mechanism for unit-demand agents that
simulates the given single-dimensional mechanism on reported
values for favorite alternatives (ignoring the other values).  It serves the winners of the simulation their favorite alternatives at the prices of the simulation.
\end{definition}

\begin{proposition}
\label{lem:reduction}
For any service constrained environment,
unit-demand $\beta$-uniformly-priceable agents, and $\alpha$-approximation
mechanism~$\mech$ for the favorite-alternative single-dimensional analog
environment; the favorite-alternative extension of~$\mech$
is an $\alpha\beta$ approximation for the original environment.
\end{proposition}

\begin{proof}
By construction, the revenue of the favorite-alternative extension
of~$\mech$ in the original environment is equal to the revenue
of~$\mech$ for the favorite-alternative single-dimensional analog
environment.  By assumption of the proposition, this revenue is an
$\alpha$-approximation to the optimal revenue for the
favorite-alternative single-dimensional analog.  This
single-dimensional optimal revenue is equal to the optimal pseudo
marginal revenue (with uniform ex ante pseudo pricings) in the
unit-demand environment.  By the assumption that the unit-demand
agents are $\beta$ uniformly priceable, \autoref{prop:pseudo-approx}
implies that this optimal pseudo marginal revenue is a $\beta$
approximation to the unit-demand optimal revenue.  These bounds
combine to imply that the revenue of the favorite-alternative
extension is an $\alpha \beta$ approximation to the optimal revenue.
\end{proof}

To draw single-dimensional extension theorems as consequences to
\autoref{lem:reduction}, we first claim that unit-demand agents with 
values for each alternative independently drawn from (not necessarily identical) regular distributions are eight uniformly priceable.  After this, we apply 
tools from the single-dimensional theory to provide approximation mechanisms for
the favorite-alternative single-dimensional analog, and draw immediate corollaries.


\paragraph{Uniform Priceability}

As described above, the i.i.d.\@ special case of
\autoref{thm:unit-demand} implies that any unit-demand agent with
i.i.d.\@ values for distinct alternatives is four uniformly priceable.
This result approximately generalizes to non-i.i.d.\@ distributions
that satisfy the regularity condition of \citet{M81} as follows.

\begin{definition}
\label{def:regular} 
A distribution specified by distribution function $\dist$ and density
function $\dens$ is {\em regular} if $\val -
\frac{1-\dist(\val)}{\dens(\val)}$ is monotone non-decreasing in
$\val$.  A single-dimensional linear agent is regular if his value is
drawn from a regular distribution.\footnote{Single-dimensional
  regularity is equivalent (a) to $\rawrev(\exquant) = \rev(\exquant)$ for
  all $\exquant$ (see the proof of \autoref{t:sd=>lin}), and (b) to the
  revenue-optimal allocation rule for $\calloc$ being $\calloc$ itself
  (see \autoref{lem:X=Y}).  These properties of regular distributions
  enable approximation of optimal mechanisms by simple mechanisms in
  single-dimensional environments.} 
\end{definition}

\begin{lemma}
A unit-demand agent with values for alternatives drawn independently from
(not necessarily identical) regular distributions is eight uniformly
priceable.
\label{lem:uniformis8}
\end{lemma}

\begin{proof}[Proof Sketch]
The proof will follow the template given by
\autoref{prop:pseudo-approx} with the following main ingredients.
\begin{itemize}
\item
Twice the optimal revenue of the representative environment (where the
unit-demand agent is replaced by single-dimensional representatives
for each alternative) is a linear upper bound on the optimal revenue
for any allocation constraint (by
\autoref{lem:unit-demand-linear-ub}).
\item Uniform pricing in the representative environment with regular
  distributions gives a four approximation to the optimal
  representative revenue; the argument is as follows.  \citet{HR09}
  show that the second-price auction with a uniform (a.k.a.,
  anonymous) reserve price is a four approximation to the optimal
  revenue.  In fact, this result can be strengthened (a) using a
  prophet-inequality-like proof to give the same bound for uniform
  pricing and (b) to allow an ex ante constraint on the probability
  that any representative is served.  These extensions follow from a
  relatively straightforward modification of
  \autoref{lem:uvpvsoptimal} which we omit.
\item Uniform pricing in the original environment has the same revenue
  as uniform pricing in the representative environment. \qedhere
\end{itemize}
\end{proof}

Below we will make use of the following slight strengthening of the
regularity condition of \autoref{def:regular}.

\begin{definition}
 A unit-demand agent is {\em
  favorite-alternative regular} if the distribution of the agent's
value for favorite alternative is regular; a unit-demand agent is {\em
  individual-alternative regular} if the agent's value for each
alternative is regular; a unit-demand agent is {\em regular} if he is
both favorite- and individual-alternative regular.\footnote{Neither
  favorite-alternative nor individual-alternative regularity imply the
  other.}
\end{definition}

Note that \autoref{lem:uniformis8} requires only individual-alternative
regularity.

\paragraph{Monopoly and Anonymous Reserve Pricing} 
For a single-dimensional single-agent problem, the {\em monopoly
  price} is the price that optimizes revenue.  For single-dimensional,
i.i.d., regular, matroid environments the surplus maximizing mechanism
(a.k.a.\@ VCG) with the monopoly reserve price (for the distribution)
is revenue optimal.  \citet{HR09} approximately extend this result to
non-identical distributions.  They show that with regular
single-dimensional agents, the revenue of the surplus maximizing
mechanism with monopoly reserves is a two approximation to the optimal
revenue.  The following is a corollary of the above development and
their theorems.

\begin{corollary}
For independent unit-demand favorite-alternative-regular
$\beta$-linearly-priceable agents and matroid service-constrained
environments, the surplus maximizing mechanism with monopoly reserves
(for distributions of favorite alternatives) is a $2\beta$
approximation to the optimal revenue. For regular unit-demand agents,
$\beta = 8$.
\end{corollary}

For single-item environments a similar approximation bound holds for
an anonymous reserve price, i.e., one that is the same across the
distinct agents.  \citet{HR09} show that with regular
single-dimensional linear agents in single-item environments, the
revenue of the second-price auction with an appropriate anonymous
reserve is a four approximation to the optimal revenue.  From this
result we obtain the following corollary.

\begin{corollary}
For independent unit-demand favorite-alternative-regular
$\beta$-linearly-priceable agents and single-item service-constrained
environments, the surplus maximizing mechanism with a suitably choosen
anonymous reserve price is a $4\beta$ approximation to the optimal
revenue. For regular unit-demand agents, $\beta = 8$.
\end{corollary}

\paragraph{Market Expansion}

\citet{BK96} show that the revenue of the single-item second-price
auction for $n$ i.i.d.\@ regular agents is at least the revenue of the
optimal auction for $n-1$ agents. An interpretation of this result is
that the revenue loss of running the (surplus-optimal) second-price
auction instead of the revenue-optimal auction can be made up by
recruiting one more agent to the auction. This result generalizes to
matroid environments, see e.g., \citet{DRS09}, where the revenue of
the surplus maximizing mechanism is at least the revenue of the
optimal auction after removing a {\em base} of the matroid.\footnote{A
  base of a matroid is a feasible set with maximum cardinality.}  The
corollary, below, extends the ($k=1$) multi-unit result described
informally in the beginning of this section.

\begin{corollary}
For i.i.d.\@ unit-demand favorite-alternative-regular
$\beta$-linearly-priceable agents and matroid service-constrained
environments, the surplus maximizing mechanism is a $\beta$
approximation to the optimal revenue with any base of the matroid
removed. For regular unit-demand agents, $\beta = 8$.
\end{corollary}

\paragraph{Prior-Independent Mechanisms}

\citet{DRY10} show that, in regular single-dimensional matroid
environments, the surplus maximizing auction where each agent faces a
reserve price randomly drawn from his value distribution is a four
approximation to the optimal auction.  If the agents' values are
identically distributed then the approximation factor improves to two.
Moreover, as long as there are at least two agents with values drawn
from the same distribution, this approximation result can be obtained
by a prior-independent mechanism, i.e., one that is not parameterized
by the prior-distribution.  We summarize the consequences of the
i.i.d.\@ result in general service-constrained matroid environments as
follows.

\begin{corollary}
For i.i.d.\@ unit-demand favorite-alternative-regular
$\beta$-linearly-priceable agents and matroid service-constrained
environments, there is a prior-independent mechanism that is a
$2\beta$ approximation to the optimal revenue.  For regular
unit-demand agents (whose values for alternatives are drawn independently from
not necessarily identical distributions), $\beta = 8$.
\end{corollary}

These results are meant as examples of single-dimensional results with
automatic extensions to unit-demand service constrained environments.
Many other single-dimensional results also can be extended.

\bibliographystyle{apalike}
\bibliography{bibs}

\appendix
\section{Proofs from \autoref{sec:nonmonotone}}
\label{sec:nonmonotone-app}

We now give a procedure for implementing the marginal revenue mechanism
(\autoref{def:mrm}) with general agents.  Recall that in the marginal revenue
mechanism, each agent faces a distribution over \optpricings, where the
distribution is given by marginal revenue maximization over single-dimensional
analog agents having the same revneue curves.  This maximization over
single-dimensional analogs gives rise to an allocation constraint $\callocmr$,
and then the $\exquant$ \optpricing\ occurs with probability
$-\tfrac{\dd}{\dd\exquant} \callocmr(\exquant)$.  We show that this mixture of
\optpricings\ is implementable within the marginal revenue mechanisms family
(\autoref{d:marginal-revenue-family}), and the randomized mapping from type to
quantile (Step~\ref{step:quantile}) in this implementation is efficiently
computable.  

What properties are needed for such a mapping?  First, for each agent, we need
the quantile to be uniformly distributed over $[0, 1]$.  This way, the
distribution over marginal revenues faced by each agent is as if the competing
agents are single-dimensional linear agents with the same revenue curves.  This
guarantees that, if we map an agent's type to a quantile~$\quant$, the
probability that she revenue wins a service in the marginal revenue
mechanism is equal to $\callocmr(\quant)$.  In other words, this property would
designate an allocation probability $\callocmr(\quant)$ to each
quantile~$\quant$, and therefore in order to get the desired allocation rules
for types, we need only to come up with appropriate mappings of types to
quantiles.  Secondly, we
would like the allocation rules obtained by this procedure to match the
allocation rule given by the previously described mixture over \optpricings.  To be specific, recall that each \optpricing\ is derived from optimizing revenue subject to a step function constraint.  The resulting normalized allocation rule may not be a step
function and is in general weaker.  When these \optpricings\ are composed into
the mixture, the resulting allocation rule, which we denote by $\qallocmr$, is
dominated by and not necessarily equal to $\callocmr$.  It is the allocation
rule~$\qallocmr$, and not $\callocmr$, that we would like to produce.

Recall from the discussion of \autoref{def:mrm} that our goal is to implement
the outcome rule $\toutcomemr$.  If we order the types according to
$\ALLOC(\toutcomemr(\cdot))$, we get a natural mapping from
types to quantiles: $\ttoq(\type) \overset{\triangle}{=} \Prx[\type' \sim
\Dist]{\ALLOC(\toutcomemr(\type')) \geq \ALLOC(\toutcomemr(\type))}$.  This mapping
will have the first property\footnote{As before, we break ties appropriately.}, i.e., $\ttoq(\type)$ will be uniformly on $[0, 1]$,
but it does not have the second property.  This is because by definition the
probability of type~$\type$ winning in the marginal revenue mechanism with this
mapping is $\callocmr(\ttoq(\type))$.  Overall, we get the allocation rule
$\callocmr$ and not the weaker $\qallocmr$.  If we could keep the first
property,
then the problem
reduces to the following: given two non-increasing functions $\qallocmr,
\callocmr: [0, 1] \to [0, 1]$, such that $\qallocmr$ is weaker than $\callocmr$
(in the sense that $\cumcallocmr \geq \cumqallocmr$ pointwise), is there a
randomized function $g: [0, 1] \to [0, 1]$, such that $\Ex{\callocmr(g(\quant)}
= \qallocmr(\quant)$ for every $\quant \in [0, 1]$, and $g(\quant)$ is uniform
on $[0, 1]$ when $\quant$ is uniform on $[0, 1]$?  This is a problem addressed
by the theory of \emph{majorization} \citep[see, e.g.\@][]{HLP-29}, and
has a general solution.  In our context, we give a particularly simple interval
resampling procedure that gives this mapping $g$, which is to be composed with
$\ttoq(\cdot)$ for the eventual randomized mapping from types to quantiles.


\IFSTOCELSE { Our construction is related to the \citet*{HLP-29}
  algorithm for computing the doubly stochastic matrix corresponding
  to vector majorization.}
{
%
}

\begin{definition}
\label{def:interval-resampling}
For allocation constraint $\calloc$ and dominated allocation rule
$\qalloc$ satisfying $\cumcalloc(1) = \cumalloc(1)$ on
$\typespacesize$ discrete types, the {\em interval resampling sequence
construction} starts with $\qalloc\super 0 = \calloc$ and calculates
$\qalloc \super {j+1}$ from $\qalloc \super {j}$ while $\qalloc \super
j \neq \qalloc$ as follows.
\begin{enumerate}
\item Find the highest quantile $\quant$ where $\qalloc(\quant) \neq
  \qalloc \super j (\quant)$.
\item Let $\quant' > \quant$ be the quantile at which the line tangent
  to $\cumalloc$ at $\quant$ with slope $\qalloc(\quant)$ crosses
  $\cumalloc\super j$.\footnote{For discrete type, this intersection
    may happen at a quantile $\quant'$ that does not correspond to the
    boundary between two types.  When this happens split the type into
    two types each occurring with the same total probability and with
    the boundary between them at $\quant'$.}
\item The $j$th resampling interval is $[\quant,\quant']$.
\item Let $\qalloc \super {j+1}$ be $\qalloc \super j$ averaged on $[\quant,\quant']$.
\end{enumerate}
\end{definition}

\begin{proposition}
The interval sampling sequence construction gives a sequence of at
most $\typespacesize$ intervals such that the composition of $\calloc$
with the sequence of resamplings applied to $\ttoq(\cdot)$ is equal to $\qalloc$.
\end{proposition}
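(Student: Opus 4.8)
The plan is to analyze the sequence $\qalloc\super 0 = \calloc,\ \qalloc\super 1,\ \qalloc\super 2,\dots$ produced by \autoref{def:interval-resampling} by induction on $j$, maintaining two invariants. First, \emph{domination is preserved}: $\cumalloc\super j(\quant)\ge \cumalloc(\quant)$ for all $\quant\in[0,1]$, with $\cumalloc\super j(0)=\cumalloc(0)$ and $\cumalloc\super j(1)=\cumalloc(1)$, so each $\qalloc\super j$ is a legitimate allocation rule dominated by $\calloc$. Second, the \emph{agreement region grows}: the set $\{\quant:\qalloc\super j(\quant)=\qalloc(\quant)\}$ contains an interval anchored at one endpoint of $[0,1]$ whose length is strictly increasing in $j$ and whose moving boundary passes at least one breakpoint of the piecewise-linear curve $\cumalloc$ at each step. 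The base case is immediate from $\qalloc\super 0=\calloc$.

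For the inductive step I would use the following geometric observation. Let $\quant$ be the disagreement quantile selected at step $j$ and $\quant'$ the quantile where the supporting line $T(s)=\cumalloc(\quant)+\qalloc(\quant)(s-\quant)$ of the concave curve $\cumalloc$ at $\quant$ meets $\cumalloc\super j$. By the agreement invariant, $\cumalloc\super j$ and $\cumalloc$ already coincide on the agreement side of $\quant$; together with the known endpoint equalities this forces $\cumalloc\super j(\quant)=\cumalloc(\quant)$, so $T$ passes through $(\quant,\cumalloc\super j(\quant))$, and by construction it also passes through $(\quant',\cumalloc\super j(\quant'))$. Hence $T$ \emph{is} the chord of $\cumalloc\super j$ over $[\quant,\quant']$, and the resampling replaces $\cumalloc\super j$ on $[\quant,\quant']$ by $T$ and leaves it unchanged elsewhere. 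Since $T$ is a supporting line of the concave $\cumalloc$, we have $T\ge \cumalloc$ everywhere, which re-establishes domination (the values at $0$ and $1$ are untouched); and since $T$ coincides with $\cumalloc$ exactly along the maximal linear piece of $\cumalloc$ emanating from $\quant$ with slope $\qalloc(\quant)$, the agreement region is enlarged by at least that piece, so its moving boundary crosses a breakpoint of $\cumalloc$ (or the artificially split type of the footnote).

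Given the invariants, termination and the bound follow quickly: the process stops exactly when $\{\quant:\qalloc\super j(\quant)=\qalloc(\quant)\}=[0,1]$, i.e.\ $\qalloc\super j=\qalloc$; and since the moving boundary absorbs a fresh breakpoint of $\cumalloc$ at each step while $\cumalloc$ has at most $\typespacesize$ linear pieces — the at most one extra breakpoint created by a type split being absorbed in the same step it is created — there are at most $\typespacesize$ steps, hence at most $\typespacesize$ resampling intervals. Finally, unwinding the recursion shows $\qalloc$ is obtained from $\calloc=\qalloc\super 0$ by applying the interval resamplings $[\quant_0,\quant_0'],[\quant_1,\quant_1'],\dots$ in order, each being exactly ``average on $[\quant_j,\quant_j']$''; since an interval resampling acts on an allocation rule precisely by the averaging operation recorded just before \autoref{def:interval-resampling}, post-composing the type-to-quantile map $\ttoq(\cdot)$ (whose induced allocation rule is $\calloc$) with this sequence of quantile resamplings yields a randomized type-to-quantile map whose induced allocation rule is $\qalloc$, which is the claim. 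The main obstacle I expect is precisely this geometric inductive step: verifying that $\quant'>\quant$ exists, that the chord produced by resampling stays weakly above $\cumalloc$, and that the agreement region genuinely enlarges past a breakpoint, together with the bookkeeping ensuring the footnote's type splits do not push the interval count above $\typespacesize$.
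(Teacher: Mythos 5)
Your proof is correct and follows essentially the same inductive strategy as the paper's (induct on $j$, show the $j$th resampling extends agreement with $\qalloc$ past the next breakpoint of $\cumalloc$, using the tangent-line geometry and $\cumcalloc(1)=\cumalloc(1)$ to guarantee $\quant'\le 1$). You are more explicit than the paper about two points the paper leaves implicit — that domination is preserved at each step (needed so the tangent at $\quant$ stays above $\cumalloc$ and the crossing $\quant'$ exists) and that the per-step progress is measured against breakpoints of $\qalloc$ alone, which gives the clean $\typespacesize$-interval bound.
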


\begin{proof}
The proof is by induction on $j$ where the $j$th step assumes the
first $j-1$ types, in order of $\ttoq(\cdot)$, satisfy
$\qalloc\super{j-1}(\ttoq(\type)) = \qalloc(\ttoq(\type))$.  Consider
step $j$.  The assumption that $\cumcalloc(1) = \cumalloc(1)$ ensures
that the intersection of the tangent happens at a $\quant' \leq 1$.
The line segment connecting interval $[\quant,\quant']$ of $\cumalloc
\super j$ has slope equal to $\qalloc(\quant)$, by definition.
Therefore, the $j$th step in the construction leaves $\qalloc \super j
(\ttoq(\type)) = \qalloc(\ttoq(\type))$ for the $j$th type.  The
procedure is linear time as both $\calloc$ and $\qalloc$ are, without
loss of generality, piece-wise constant with $\typespacesize$ pieces,
and in each step $\quant$ and $\quant'$ are increasing and at least
one piece from $\calloc$ or $\qalloc$ is processed.
\end{proof}

The final ingredient in the construction of the marginal revenue
mechanism for agents with general types is in converting the
allocation rule back into an outcome rule.  This can be done exactly
as in \citet{AFHHM12}: if an agent with type $\type$ is served by the
allocation rule, sample from service outcomes of $\toutcomemr(\type)$,
otherwise sample from non-service outcomes of $\toutcomemr(\type)$.

\begin{definition}
\label{def:general-marg-rev-mech}
The {\em marginal revenue mechanism} for general agents works as
follows.
\begin{enumerate}
\item Map reported types $\types = (\type_1,\ldots,\type_n)$ of agents
  to quantiles $\quants = (\quant_1,\ldots,\quant_n)$ by, for each
  agent, composing the interval resampling transformation with $\ttoq(\cdot)$.
\item Calculate the marginal revenue of each agent $i$ as $\rev'_i(\quant_i)$.
\item Calculate the set of agents to be served by marginal revenue
  maximization.
\item Calculate outcomes for each agent $i$ as:
\begin{itemize}
\item sample $\outcome_i \sim \toutcomemr_i(\type_i)$ conditioned on $\ALLOC(\outcome_i) = 1$ if $i$ is to be served, or
\item sample $\outcome_i \sim \toutcomemr_i(\type_i)$ conditioned on $\ALLOC(\outcome_i) = 0$ if $i$ is not to be served.
\end{itemize}
\end{enumerate}
\end{definition}

\NOTSTOC{
Note that instead of calculating outcome rules by mixing over step
mechanisms we could, from the allocation constraint $\callocmr$ for an
agent, calculate the optimal mechanism subject to that constraint,
i.e., with outcome rule $\RULE(\callocmr)$ and revenue
$\Rev{\callocmr}$.  The construction above can be invoked with this
outcome rule in place of $\toutcomemr$ without modification; this
change generally improves revenue.}




\NOTSTOC{\section{Proofs from \autoref{sec:budget}}
\label{sec:budget-app}}

The technique for the proof of \autoref{prop:budget} largely comes
from \citet{LR96} and can be viewed as a consequence of that work.  We
remark that the condition of concavity of~$\dist$ (or, equivalently, the monotonicity of
$\dens$), which was not used in the original paper of \citet{LR96},
was in fact needed for their characterization, as correctly pointed
out by \citet{PV08}.

\begin{proof}[\NOTSTOC{Proof }of \autoref{prop:budget}]
For this proof we will only use allocations for types (instead of quantiles),
and to simplify notation we let $\alloc(v)$ be the allocation probability for type~$v$.
Without loss of generality, we assume that the highest valuation in the support
of~$\dist$ is~$1$.
The standard incentive compatibility condition for single-dimensional linear
preferences (monotonicity of the allocation rule and the payment identity) still holds.  
In particular, for $v > v'$, if $\alloc(v) > \alloc(v')$, then
the payment of $v$ is also strictly larger than that of $v'$.
Therefore, if the budget constraint is binding (as we assumed), then
there is a~$\bar v$ such that the allocation probability is a constant
for all types above~$\bar v$, and the payment for all these types is~$B$.
The proposition then states that, in $\exquant$ \optpricing, the allocation
for types smaller than~$\bar v$ is constantly~$0$.

Payment identity states that the payment of type~$v$ is $v \alloc(v)
- \int_{0}^v \alloc(z) \: \dd z$.  We therefore would like to
maximize the objective function
\begin{align}
\label{eq:obj-budget}
\max \int_{0}^{\bar v} \dens(v) \alloc(v) \virt(v) \: \dd v + [1 -
\Dist(\bar v)] \bar v \alloc(\bar v),
\end{align}
where $\virt(v)$ is the standard virtual valuation function $v -
\frac{1 - \Dist(v)}{\dens(v)}$, subject to the constraints:
\begin{align}
\label{eq:constr-budget}
\bar v \alloc(\bar v) - \int_{0}^{\bar v} \alloc(v) \: \dd v & = B, \\
\label{eq:constr-sell-prob}
\int_0^{\bar v} \alloc(v) \: \dd v + [1 - \Dist(\bar v)]
\alloc(\bar v) & = \exquant, \\
\label{eq:constr-nonneg}
\forall v, \quad \alloc(v) & \geq 0, \\
\label{eq:constr-feasible}
\alloc(\bar v) & \leq 1.
\end{align}

We consider the first-order conditions for the above program.  We use
$\delta$ for the Lagrangian variable for the budget
condition~\eqref{eq:constr-budget}; $\lambda$ for the ex ante selling
probability constraint~\eqref{eq:constr-sell-prob}; $\Pi_v$ for
condition~\eqref{eq:constr-nonneg} for each~$v$ ($\Pi_v \leq 0$); $\eta$ for
condition~\eqref{eq:constr-feasible} ($\eta \geq 0$).  The first order condition
gives
\begin{align}
\dens(v) \left[ \virt(v) + \lambda - \frac{\delta}{\dens(v)} \right] +
\Pi_v = 0, \quad \forall v < \bar v; \\
[1 - \Dist(\bar v)] \left[ \bar v + \lambda + \frac{\bar v \delta}{1 - \Dist(\bar
v)}\right] + \Pi_{\bar v} + \eta = 0.
\end{align}

By complementary slackness, for any $v$ such that $\alloc(v) > 0$,
we have $\Pi_v = 0$.  (In particular, $\Pi_{\bar v} = 0$.)  We 
next argue that $\delta$ is negative.  Assume there is a $v < \bar v$ such
that $\alloc(v) > 0$.  Then we have
\begin{align}
\virt(v) + \lambda - \frac{\delta}{\dens(v)} = 0;
\notag \\
\bar v + \lambda + \frac{\bar v \delta}{1 - \Dist(\bar
v)} + \frac{\eta}{1 - \Dist(\bar v)} = 0.
\notag
\end{align}
We can therefore solve for $\delta$:
\begin{align}
\delta = \left[\virt(v) - \bar v - \frac{\eta}{1 - \Dist(\bar v)}
\right]  \Big/ \left[ \frac{\bar v}{1 - \Dist(\bar v)} +
\frac{1}{\dens(v)} \right] < 0.
\end{align}

Now, if for two different $v, v' < \bar v$
such that their allocation probabilities are both strictly positive,
then $\Pi_v = \Pi_{v'} = 0$, and we will have
\begin{align}
\virt(v) - \frac{\delta}{\dens(v)} = \virt(v') -
\frac{\delta}{\dens(v')},
\notag
\end{align}
or
\begin{align}
\label{eq:budget-contradict}
\virt(v) - \virt(v') = \delta \left( \frac{1}{\dens(v)} -
\frac{1}{\dens(v')} \right).
\end{align}

Suppose $v < v'$, then $\dens(v) \geq \dens(v')$ by our assumption.
Since the distribution is regular, we have $\virt(v) \leq \virt(v')$.
Additionally, we know that $\delta < 0$, and so
\eqref{eq:budget-contradict} can hold only if $\dens(v) = \dens(v')$,
but then the equation says $\dens(v) (v - v') + \Dist(v) - \Dist(v') =
0$, which cannot be true since $\Dist(v) < \Dist(v')$.  Therefore
\eqref{eq:budget-contradict} cannot hold under our assumptions.  

So far we have shown that in the optimal solution to the above linear
program, there can be at most one value $v < \bar v$ such that
$\alloc(v) > 0$.  But then lowering $\alloc(v)$ to~$0$ affects
neither the objective function nor the constraints, and so we obtain
a monotone allocation rule. \footnote{As a standard practice, we have
relaxed the monotonicity condition in the formation of the linear
progrma, and only observe that the optimal solution satisfies the
monotonicity condition under the assumptions on the valuation
distribution.}  Therefore the solution to the program gives rise to
an incentive compatible mechanism, which satisfies \autoref{prop:budget}.
\end{proof}

\paragraph{Derivation of \autoref{ex:budget}}

We first derive the $\exquant$ \optpricings\ for $\exquant < 1 - \dist(B) = 1 -
B$.  By \autoref{prop:budget}, a lottery that costs~$B$ is offered, and, when
bought, it sells the item with probability $\pralloc = B + \exquant$.  A type
with value at least $B / (B + \exquant)$ will buy the lottery (and hence wins
with probability $B + \exquant$).  For $\exquant > 1 - B$, the budget does not
bind and the item is sold at a price of $1 - \exquant$; all types with $\val
\geq 1 - \exquant$ wins the item with certainty.  This immediately shows us the
shape of 
$G_{\val}(\exquant)$, the probability of allocation as a
function of~$\exquant$ for a fixed type~$\val$.  From the perspective of a given type $\val \geq B$, $G_{\val}(\exquant)$  jumps starts at
$\exquant = \tfrac B \val - B$, increases linearly with~$\exquant$ and saturates at $\exquant =
1 - B$.  This is depicted in \autoref{fig:largev}.  For $\val < B$, the budget
never binds, and the corresponding $G_{\val}(\exquant)$ is the familiar step
function (\autoref{fig:smallv}).

Calculating the revenue curve is straightforward.  For $\exquant < 1 - B$, $\rev(\exquant)$ is
$B \cdot (1 - \frac B {B + \exquant})$.  Its derivative, i.e., the marginal
revenue, $\frac{B^2}{(B+\exquant)^2}$, is strictly positive.  (Note that, for $B
< 1/2$, this is different from the linear preference case.  There, the marginal
revenue would be negative for $\exquant > 1/2$.)  For $\exquant \geq 1
- B$, $\rev(\exquant)$ is $\exquant(1 - \exquant)$.  Its derivative is positive
for $\exquant < 1/2$ and negative for $\exquant > 1/2$.

By Step~\ref{step:quantile-monotone} of \autoref{def:monotone-marg-rev-mech}, whenever $\val' < B$, its quantile $1 - \val'$ will be
larger with probability~$1$ than the quantile of a type $\val \geq B$, which is distributed between
$\tfrac B \val - B$ and $1 - B$.  Therefore, such smaller $\val'$'s are only
considered when there is no bidder bidding above~$B$; when this happens, since
the budget does not bind, the auction is the optimal one in the linear
preference case, i.e., a second price auction with reserve price~$\tfrac 1 2$.
When there are bidders bidding above~$B$, the way $\quant_i$ is computed in \autoref{ex:budget} is simply sampling by
$G_{\val}(\exquant)$ as stipulated in \autoref{def:monotone-marg-rev-mech}.  When
$i^*$ is the sole bidder bidding above~$B$, she should pay the \optpricing\ at
her critical quantile.  When $B < 1/2$, this critical quantile is $1 - B$, and
she pays $B$; when $B > 1/2$, the critical quantile is $\min \{1/2, 1 - \max_{i
\neq i^*} \vali\}$, and she pays $\max_{i \neq i^*} \{1/2, \vali\}$.  When there
are other bidders bidding above~$B$, the critical quantile for $i^*$ is always
smaller than $1 - B$, and she will pay $B$ in the corresponding \optpricing. A
losing bidder~$i$ bidding above~$B$ faces a critical quantile $\quant_{i^*}$.
We see from \autoref{fig:largev} that if $\quant_{i^*}$ is larger than the
quantile $\tfrac B {\vali} - B$ at which $G_{\vali}(\exquant)$ jump starts, she
will need to make a payment of~$B$.  This happens for $\vali \geq B /
(\quant_{i^*} + B)$.

\begin{figure}[!hb]
\centering
\subfloat[][$\val \geq B$ \label{fig:largev}]{
	\begin{pspicture}(-1, -1)(6,6)
	\psset{yunit=1.6cm,xunit=1.6cm}
	\psaxes[Dy=1, Dx=1, labels=none, ticks=none]{->}(0, 0)(0,0)(3.5, 3.5)
	\psline[linestyle=dashed, linewidth=0.5pt](0, 1)(1, 2)
	\psline[linecolor=red, linewidth=1.2pt](0,0)(1, 0)(1, 2)(2, 3)(3,3)
	\rput(1, -0.2){$ \frac B \val - B$}
	\rput(-0.2, 1){$B$}
	\psline[linestyle=dashed, linewidth=0.5pt](2,0)(2,3)
	\rput(2, -0.2){$1 - B$}
	\psline[linestyle=dashed, linewidth=0.5pt](0, 3)(2,3)
	\rput(-0.2, 3){$1$}
	\psline[linestyle=dashed, linewidth=0.5pt](3, 0)(3,3)
	\rput(3, -0.2){$1$}
	\psline[linestyle=dashed, linewidth=0.5pt](0, 2)(1,2)
	\rput(-0.2, 2){$\val$}
	\rput(3.5, -0.2){$\exquant$}
	\rput(-0.4, 3.5){$G_{\val}(\exquant)$}
	\end{pspicture}
}
\subfloat[][$\val < B$ \label{fig:smallv}]{
	\begin{pspicture}(-1, -1)(6,6)
	\psset{yunit=1.6cm,xunit=1.6cm}
	\psaxes[Dy=1, Dx=1, labels=none, ticks=none]{->}(0, 0)(0,0)(3.5, 3.5)
	\psline[linecolor=red, linewidth=1.2pt](0,0)(2.3, 0)(2.3, 3)(3,3)
	\rput(2.3, -0.2){$1 - \val$}
	\psline[linestyle=dashed, linewidth=0.5pt](0,3)(2.3, 3)
	\rput(-0.2, 3){$1$}
	\psline[linestyle=dashed, linewidth=0.5pt](3, 0)(3,3)
	\rput(3, -0.2){$1$}
	\rput(3.5, -0.2){$\exquant$}
	\rput(-0.4, 3.5){$G_{\val}(\exquant)$}
	\end{pspicture}
}
\caption{
The allocation rules for $\exquant$ \optpricings\ in \autoref{ex:budget}, for a
fixed type~$\val$.
\label{fig:budget-ex}}
\end{figure}

\section{Proofs for unit-demand approximation}
\label{app:unit-demand}

\autoref{thm:unit-demand} is a consequence of the two lemmas below and
\autoref{prop:pseudo-approx}.

\begin{lemma}
\label{lem:unit-demand-linear-ub}
Twice the optimal representative revenue is a linear
  upper bound on the optimal unit-demand revenue.
\end{lemma}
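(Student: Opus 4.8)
The plan is to define, for each allocation constraint $\calloc$, the bound $\UB(\calloc) = 2\,\REV_{\mathrm{rep}}(\calloc)$, where $\REV_{\mathrm{rep}}(\calloc)$ denotes the optimal revenue in the representative environment---one item sold among $m$ single-dimensional agents with independent values $\val_j \sim \dist_j$---subject to the item being sold according to an allocation rule dominated by $\calloc$. I would then verify the two clauses of \autoref{def:linear-revenue-bound}: that $\UB$ is linear in $\calloc$, and that $\UB(\calloc) \ge \REV(\calloc)$ for every $\calloc$.

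\emph{Linearity.} Viewed purely as the problem of deciding when to sell the single item subject to a selling constraint, the representative environment behaves like a single single-dimensional agent whose value is the grand ironed virtual value $\vv(\vals) = \max_j \vvi[j](\val_j)$, with $\vals \sim \prod_j \dist_j$. Indeed, by Myerson's lemma an incentive compatible single-item mechanism earns, in expectation, the sum over $j$ of $\vvi[j](\val_j)$ times the probability that $j$ wins; so subject to an ex ante selling probability $\quant$ the optimum awards the item to an agent maximizing $\vvi[j](\val_j)$ exactly on the profiles where $\vv(\vals)$ lies in its top $\quant$ quantile, with revenue $\rev_{\mathrm{rep}}(\quant)$ equal to the expected value of $\vv(\vals)$ on that event---the revenue curve of the single-dimensional agent $\vv$, which is concave since the space of mechanisms is convex. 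Since single-dimensional single-agent problems are revenue linear \citep{BR89}, applying \autoref{remark:linear-opt-marrev} to this agent yields $\REV_{\mathrm{rep}}(\calloc) = -\int_0^1 \rev_{\mathrm{rep}}(\quant)\,\dd\calloc(\quant)$, which is linear in $\calloc$; hence so is $\UB$.

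\emph{Upper bound.} Here I would extend the argument of \citet{CMS-10}. Let $\toutcomeopt = \RULE(\calloc)$ be the optimal unit-demand lottery pricing for $\calloc$, with revenue $R = \REV(\calloc)$ and normalized allocation rule $\qalloc$ dominated by $\calloc$; at type $\type$ it awards item $j$ with probability $x_j(\type)$, these summing over $j$ to $\talloc(\type)$, and charges $p(\type) := \PAYMENT(\toutcomeopt(\type)) \le \sum_j \val_j\, x_j(\type)$ by individual rationality. Reinterpret the unit-demand type $\type$ as the profile $\vals$ of the representatives' values. The representative mechanism that awards the single item to representative $j$ precisely when $\toutcomeopt$ awards item $j$ to the unit-demand buyer (using shared randomness; single-item feasible since $\toutcomeopt$ awards at most one item) sells at profile $\vals$ with probability $\sum_j x_j(\type) = \talloc(\type)$, so it induces the same allocation rule $\qalloc$ and is dominated by $\calloc$. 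What remains is to equip this mechanism, together with an auxiliary variant of it, with payments that are incentive compatible for the single-parameter representatives and together recover at least $R/2$; this is exactly the revenue split of \citet{CMS-10}, which writes $p(\type)$ as the portion attributable to the representative receiving the item accepting a posted price, plus a remainder, and bounds each portion by a representative mechanism's revenue. Combining gives $R \le 2\,\REV_{\mathrm{rep}}(\calloc) = \UB(\calloc)$.

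\emph{Main obstacle.} The content is entirely in the upper-bound step, and the one point beyond \citet{CMS-10} that needs care is threading the constraint $\calloc$ through their construction: one must check that every representative mechanism arising in the revenue split has an allocation rule dominated by that of $\toutcomeopt$---and hence, using downward closure and the monotonicity of $\REV_{\mathrm{rep}}$ under dominance, dominated by $\calloc$---so that each portion is bounded by $\REV_{\mathrm{rep}}(\calloc)$ rather than by the unconstrained representative revenue. With that in hand, the factor-two accounting of \citet{CMS-10} carries over essentially verbatim.
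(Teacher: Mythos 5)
Your linearity argument is essentially identical to the paper's: the representative revenue for constraint $\calloc$ is $\Ex[\quant]{\maxvv(\quant)\calloc(\quant)}$ where $\maxvv(\quant)$ is the quantile-$\quant$ value of the max (nonnegative) virtual value, and this is manifestly linear in $\calloc$. You correctly identify that the real content is the upper-bound step, and that the one new ingredient over \citet{CMS-10} is threading the allocation constraint $\calloc$ through the decomposition so that each piece of the split is dominated by $\calloc$.

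Where your sketch diverges from the paper---and where there is a genuine gap---is in the mechanism you actually write down before deferring to ``the revenue split of CMS-10.'' You propose a representative auction that awards the item to representative $j$ precisely when $\toutcomeopt$ awards item $j$ to the unit-demand buyer. That is \emph{not} the construction of \citet{CHMS10,CMS-10} or of the paper, and the deferral obscures a real difficulty: while each representative's allocation $x_j(\val_j,\vals_{-j})$ is indeed monotone in $\val_j$ (by IC of the unit-demand lottery pricing), the Myerson revenue of that mechanism, $\Ex{\sum_j \vvi[j](\val_j)\, x_j(\vals)}$, is not the quantity that $p(\type)$ splits into, and the ``posted price plus remainder'' accounting you gesture at does not attach to it. The paper instead builds an \emph{$L$-mimicking auction} that only ever serves the \emph{highest-valued} representative $j^*$, with probability $\pralloc^\ell_{j^*}$, charging $p^\ell - \sum_{j'\neq j^*}\pralloc^\ell_{j'}\val_{j'} + \threshutil(\vals^{(2)})$; the subtracted ``deficit'' $\sum_{j'\neq j^*}\pralloc^\ell_{j'}\val_{j'}$ is then recovered, in expectation, by an allocation-constrained second-price auction (revenue $\Ex[\quant]{\sos(\quant)\calloc(\quant)}$). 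Both auctions are IC because the winner's utility equals the unit-demand agent's utility minus a term depending only on $\vals_{-j^*}$, and both respect $\calloc$. So your plan is right---two representative auctions whose combined revenue dominates $\REV(\calloc)$, each respecting $\calloc$---but the specific auction you sketch isn't one of them, and making the split work requires the highest-value-only structure of the $L$-mimicking auction that you do not reproduce.
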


\begin{proof}
Linearity follows simply from the revenue linearity of
single-dimensional linear agents.  We consider the collection of representatives
as a whole (or, say, a single market), and we can ask what is the optimal
revenue from this market given an ex ante selling probability~$\exquant$ or an
allocation constraint~$\calloc$.  Both terms are easy to find.  
Consider the distribution of the maximum
virtual value (or zero if the maximum virtual value is negative) in
the representative environment.  Index this distribution by quantile
as $\maxvv(\exquant)$.  The optimal revenue for any allocation
constraint $\calloc$ is
$\expect[\exquant]{\maxvv(\exquant)\calloc(\exquant)}$ which is linear in
$\calloc$; this follows from the proof that the optimal revenue in
single-dimensional environments is the virtual surplus maximizer.

We now show that, under any allocation constraint, twice the optimal representative revenue upper bounds
the optimal unit-demand revenue.  To do this we will give two auctions
for the representative environment with the allocation constraint
$\calloc$ and show that the sum of these auctions' revenue upper bounds
the optimal unit-demand revenue for the same constraint.  Of course,
the optimal representative revenue in turn upper bounds each of these auctions'
revenue.

A mechanism for the unit-demand problem is simply a lottery pricing,
i.e., it is a set of lotteries~$L$ with a lottery for each type $\type$ taking the
form of $(\payment(\type),\pralloc^1(\type),\ldots,\pralloc^m(\type))$ with $\sum_j
\pralloc^j(\type) \leq 1$.  The semantics of a lottery is that the agent
pays the price~$\payment(\type)$ and then is allocated an alternative~$j$ at random with
probability~$\pralloc^j(\type)$; the semantics of the collection of
lotteries~$L$ is that the agent, upon drawing her type $\type$ from the
distribution, chooses the lottery $(\payment(\type),\pralloc^1(\type),\ldots,\pralloc^m(\type))$ that corresponds to her type.

Given any collection of lotteries~$L$ that satisfies the allocation
constraint $\calloc$ we define two auctions for the representative
environment that have combined revenue at least that of the collection
of lotteries in the unit-demand environment.

The {\em $L$~mimicking auction} considers the profile of values $\vals
= (\val^1,\ldots,\val^m)$ of the representatives and the lottery that would have been selected by the unit-demand agent with
these values.  It serves the representative~$j$ with the highest value
with probability $\pralloc^j(\type)$ and charges her (no matter whether we serve
her or not) $\payment(\type) -
\sum_{j' \neq j} \pralloc^{j'}(\type) \val^{j'} + \threshutil(\vals
\super 2)$ where $\threshutil(\vals \super 2)$ is the expected utility of
the unit-demand agent with valuation profile $\vals \super 2$ which is
$\vals$ with $\val_j$ replaced with $\max_{j'\neq j} \val^{j'}$.
Notice that the utility of the winning representative~$j$ in this
auction is exactly the same as the unit-demand agent less an amount
that is a function only of the values of the other representatives,
$\vals^{-j}$.  As the utility of the unit-demand agent is monotone in
her value for each alternative, the utility each representative has for
winning is positive when she is the highest valued representative and negative
if she is not (and were to misreport and pretend she were).
Therefore, this auction is incentive
compatible, has revenue at least $\payment(\type) - \sum_{j' \neq
j} \pralloc^{j'}(\type) \val^{j'}$ on valuation profile $\vals$ where
$j$ is the highest valued representative, and satisfies allocation
constraint $\calloc$.  For a given valuation profile, call the second
term in the winning agent's payment, $\sum_{j' \neq j} \pralloc^{j'}(\type)
\val^{j'}$, the {\em deficit} of the $L$~mimicking auction.

The motivation for the next auction is that we want to obtain back the
deficit lost by the $L$~mimicking auction.  Notice that the procedure
that charges the highest valued representative the second highest value and
serves with probability $\sum_j \pralloc^j(\type)$ satisfies the
allocation constraint $\calloc$ and more than balances the deficit;
however, it may not be incentive compatible.

The {\em allocation constrained second-price auction} sells to the
highest valued representative at the second highest representative's
value so as to maximize revenue subject to the allocation constraint
$\calloc$ that any representative is served.  Consider the
distribution of the second order statistic of values and let
$\sos(\quant)$ be the value that the $\quant$ quantile of this random
variable takes on.  The optimal revenue obtainable via a second price
auction with allocation constraint $\calloc$ is
$\expect[\quant]{\sos(\quant)\calloc(\quant)}$.  To obtain this
revenue, conditioning on the second highest value being $\val$, with
probability $\calloc(\sos^{-1}(\val))$ we serve the highest valued
representative and charge her $\val$ (only when we serve her).  This auction is incentive
compatible and revenue optimal (in expectation) among all
second-price procedures that meet the allocation constraint.
Therefore, it more than covers the expected deficit of the $L$
mimicking auction.

We have given two incentive compatible auctions for the representative
environment with combined expected revenue exceeding the revenue of
the lottery pricing $L$.  Therefore, twice the optimal representative
revenue is at least the optimal unit-demand revenue.
\end{proof}

\begin{lemma}
The pseudo revenue curve $\pseudorev(\cdot)$ from uniform virtual
pricings for a unit-demand agent 2-approximates the optimal
representative revenue curve (as a function of $\exquant$ for any
$\exquant$-step constraint).
\label{lem:uvpvsoptimal}
\end{lemma}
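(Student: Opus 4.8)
The plan is to exhibit a single virtual reserve value, depending on $\quant$, that simultaneously describes the $\quant$‑step value of the representative revenue curve and the pseudo $\quant$‑step mechanism, and then to run the uniform‑virtual‑pricing argument of \citet{CHMS10,CMS-10} with that reserve. First I would pin down the two sides. As recorded in the previous lemma, the optimal representative revenue for a constraint $\calloc$ is $\Ex[\quant]{\maxvv(\quant)\calloc(\quant)}$, so at the $\quant$‑step constraint it equals $\int_0^\quant \maxvv(s)\,\dd s$. I would observe that this value is attained by the single‑item auction that awards the item to the representative of highest (ironed) virtual value $\vvi[j](\val_j)$ subject to a uniform virtual reserve $\vv^* = \vv^*(\quant)\ge 0$ chosen so that $\Prx{\max_j \vvi[j](\val_j)\ge\vv^*}=\quant$; writing $\mathcal E$ for the event $\{\max_j \vvi[j](\val_j)\ge\vv^*\}$, its revenue is $\Ex{\max_j \vvi[j](\val_j)\cdot\mathbf{1}_{\mathcal E}}=\int_0^\quant\maxvv(s)\,\dd s$. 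Downward closure, already invoked in the definition of the pseudo step mechanism, lets us take $\vv^*\ge 0$: once $\quant$ exceeds the probability that some representative has nonnegative virtual value, both curves have plateaued at their monopoly‑reserve value.

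Next I would identify the pseudo $\quant$‑step mechanism. The uniform virtual pricing that sells with probability $\quant$ is precisely the pricing that posts on item $j$ the price $p_j = \vvi[j]^{-1}(\vv^*)$ for \emph{this same} $\vv^*$: the unit‑demand agent finds item $j$ affordable exactly when $\val_j\ge p_j$, i.e.\ when $\vvi[j](\val_j)\ge\vv^*$, so she buys some item on exactly the event $\mathcal E$, which has probability $\quant$. Denote its unit‑demand revenue by $\pseudorev(\quant)$. It then remains to show $\pseudorev(\quant)\ge\tfrac12\int_0^\quant\maxvv(s)\,\dd s = \tfrac12\,\Ex{\max_j\vvi[j](\val_j)\cdot\mathbf{1}_{\mathcal E}}$, which is exactly the statement that uniform virtual pricing $2$‑approximates the copies (representative) revenue.

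The case $\vv^*=0$, i.e.\ $\quant=\quant_{\max}$, is the theorem of \citet{CHMS10,CMS-10}, and I would run their argument carrying the reserve $\vv^*$ through. The argument turns only on three facts, all insensitive to the value of $\vv^*$: (i) on $\mathcal E$ the representative $\optj$ of highest virtual value is affordable to the unit‑demand agent, so the item $\hat{j}$ she actually buys (a surplus‑maximizer among affordable items) satisfies $\val_{\hat{j}}-p_{\hat{j}}\ge\val_{\optj}-p_{\optj}$; (ii) an (ironed) virtual value never exceeds its value, so on $\mathcal E$ one has $\max_j\vvi[j](\val_j)=\vvi[\optj](\val_{\optj})\le\val_{\optj}=p_{\optj}+(\val_{\optj}-p_{\optj})$; and (iii) Myerson revenue equivalence for one representative, $\Ex{\vv(\val)\cdot\mathbf{1}_{\val\ge p}}=p\,(1-\dist(p))$, together with the fact that a posted price at or above the monopoly reserve — which $p_j$ is, since $\vv^*\ge 0$ — can only lower single‑item revenue. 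Chaining (i)–(iii), the ``deficit'' of the collected price $p_{\hat{j}}$ against $\max_j\vvi[j](\val_j)$ is, in expectation, a surplus term (bounded by the agent's own expected surplus) plus a reserve‑revenue term, and each of these is at most $\pseudorev(\quant)$; this is where the factor of two enters. Intuitively the loss lives entirely ``above the reserve'': the part $\vv^*\cdot\quant$ of the representative revenue is captured with no loss at all, since every $p_j\ge\vv^*$ and the agent buys with probability $\quant$. Finally, if the resulting pseudo revenue curve fails to be monotone or concave, the definition of the pseudo step mechanism replaces it by its monotone (via downward closure) and then concave closure, both of which only raise $\pseudorev$, so the bound persists.

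The main obstacle is step (iii): making the charging argument tight with constant exactly two despite the fact that the item $\hat{j}$ the agent buys is not in general the copies winner $\optj$, so that the comparison with representative revenue cannot be made pointwise but only after conditioning on the other representatives' values and applying revenue equivalence — this is the technical heart of \citet{CHMS10} and is where the proof must be carried out in detail. Passing from the monopoly reserve to a general uniform virtual reserve $\vv^*$ adds only bookkeeping; the one point that matters is $\vv^*\ge 0$, which both keeps $p_j$ above the monopoly reserve (needed in (iii)) and preserves the non‑negative‑part manipulations of \citet{CHMS10}. The remaining items — ironing of non‑regular $\dist_j$, possible atoms of $\max_j\vvi[j](\val_j)$ at $\vv^*$ (handled by boundary randomization exactly as in the optimal single‑item auction), and the monotone/concave closure above — are routine. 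Note that linearity of (twice the) representative revenue in $\calloc$, which is what \autoref{prop:pseudo-approx} needs in order to pass from $\quant$‑step constraints to arbitrary allocation constraints, is the content of the companion lemma and is not required here.
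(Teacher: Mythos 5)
Your setup (uniform virtual reserve $\vv^*(\quant)$, the identification of the pseudo $\quant$-step mechanism as posting $p_j = \vvi[j]^{-1}(\vv^*)$, and the invocation of downward closure for the plateau region) matches the paper exactly. But the core of your argument --- the ``charging'' step (i)--(iii) culminating in ``each of these is at most $\pseudorev(\quant)$'' --- is not what CHMS10 do, and the crucial claim is false as stated. You decompose
$$\Ex{\max_j\vvi[j]\cdot\mathbf{1}_{\mathcal E}} \;\le\; \Ex{p_{\optj}\cdot\mathbf{1}_{\mathcal E}} + \Ex{(\val_{\hat j}-p_{\hat j})\cdot\mathbf{1}_{\mathcal E}},$$
and assert that the second term (the agent's surplus under the posted pricing) is at most $\pseudorev(\quant)$. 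There is no reason for consumer surplus to be bounded by revenue: take a single item with a heavy-tailed (even regular, after truncation) distribution approaching the equal-revenue distribution --- surplus at the monopoly reserve vastly exceeds revenue while the lemma itself trivially holds with $\pseudorev = \ORR$. A correct proof therefore cannot go through this surplus-vs-revenue charging; that inequality is not available.

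The paper's actual argument is a prophet-inequality-style threshold decomposition that sidesteps any comparison of surplus to revenue. Writing $\vv(\quant)$ for the reserve, it upper bounds $\ORR(\quant) = \quant\cdot\vv(\quant) + \Ex{\max_i(\vvi-\vv(\quant))^+} \le \quant\cdot\vv(\quant) + \sum_i \Ex{(\vvi-\vv(\quant))^+}$ by pulling the max out as a sum, and lower bounds $\pseudorev(\quant) \ge \quant\cdot\vv(\quant) + (1-\quant)\sum_i\Ex{(\vvi-\vv(\quant))^+}$ by restricting to the events $\eventi$ (only item $i$ is above reserve), each of which has probability at least $1-\quant$ by independence. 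Comparing the two requires $1-\quant\ge 1/2$, i.e.\ $\quant\le 1/2$; the range $\quant>1/2$ is handled separately by comparing $\ORR(1)$ to $\pseudorev(1/2)$ and invoking monotonicity of both curves (from downward closure). Your sketch contains neither the $(1-\quant)$ factor that makes the argument work nor the case split $\quant\lessgtr 1/2$ and monotonicity step, both of which are essential. You should replace the charging/surplus argument with this threshold argument; the part of your write-up that is salvageable is the reserve term ($p_j\ge\vv^*$ gives $\pseudorev\ge\quant\vv^*$), which is indeed the first summand of the paper's lower bound.
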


\begin{proof}
The proof closely follows the standard prophet inequality proofs (see for example \citet{CMS-10}). As in the proof to the previous lemma, we may view the representatives as one
entity and consider its optimal revenue under ex ante constraint on serving.
Denote the optimal representative revenue for the $\exquant$-step
constraint as a function of $\exquant$ by the revenue curve
$\ORR(\exquant)$.  Consider the outcome of the optimal auction for the
representative environment with ex ante service constraint~$\exquant$.
It sets a uniform virtual price (denoted $\vv(\exquant)$) and serves the
agent with the highest virtual value strictly bigger than $\vv(\exquant)$
with probability one.  If the probability that the largest virtual
value is equal to $\vv(\exquant)$ is strictly positive (which might happen if
any virtual value function is constant on an interval, e.g., from
ironing), it probabilistically accepts or rejects the maximum virtual
value when it is equal to $\vv(\exquant)$ so as to serve with the
desired ex ante probability $\exquant$.  The optimal representative
revenue can thus be calculated and bounded as follows.  Let
$(\vvi[1],\ldots,\vvi[m])$ denote the profile of virtual values of the
representatives.
\begin{align*}
\ORR(\exquant) &= \exquant \cdot \vv(\exquant) + \expect{\max\nolimits_i (\vvi-\vv(\exquant))_+}\\
&\le \exquant\cdot\vv(\exquant)+ \sum\nolimits_i\expect{(\vvi-\vv(\exquant))_+}.
\end{align*}
Above, the notation $(\vvi-\vv(\exquant))_+$ is short-hand for
$\max(0,\vvi-\vv(\exquant))$.

Now we show a lower bound on $\pseudorev(\exquant)$ for $\exquant$ that does not require probabilistic
acceptance in the optimal representative auction described above; denote by $Q
\subset [0,1]$ the set of all such
quantiles.  Let $\eventi$ denote the event that $\vvi[j] < \vv(\exquant)$ for all $j\neq i$; our lower bound
on the $\exquant$ \pseudopricing\ revenue will ignore contributions to the virtual surplus
from the case that more than one representative has virtual value at least $\vv(\exquant)$.
\begin{align*}
\pseudorev(\exquant) &\geq \exquant \cdot \vv(\exquant) + \sum\nolimits_i \expect{(\vvi
-\vv(\exquant))_+ \given \eventi} \cdot \prob{\eventi}\\
&\ge \exquant \cdot \vv(\exquant) + (1-\exquant) \cdot \sum\nolimits_i \expect{(\vvi-\vv(\exquant))_+ \given \eventi}\\
&= \exquant \cdot \vv(\exquant) + (1-\exquant) \cdot \sum\nolimits_i \expect{(\vvi-\vv(\exquant))_+}.
\end{align*}
The second inequality followed because $\prob{\eventi}$, the probability of the event that $\vvi[j] <
\vv(\exquant)$ for all $j\neq i$ is not less than the probability that $\vvi[j] < \vv(\exquant)$ for all $j$,
which is $(1-\exquant)$. To extend this lower bound on $\pseudorev(\exquant)$ from $\exquant \in Q$ to all
$\exquant \in [0,1]$, consider inserting a virtual value $\vv' = \vv(\exquant) + \epsilon$ with measure zero
in the distribution. The $\exquant'$ that corresponds to serving this virtual value or higher has revenue
bounded by the formula above but $\vv' \approx \vv(\exquant)$.  Keeping the virtual value constant and
varying $\exquant$ in the formula interpolates a line between the two revenues.
As the \pseudopricings\ are closed under convex combination, this line gives a
lower bound on the $\exquant$ \pseudopricing. Therefore, the bound above on $\pseudorev(\exquant)$ holds for all $\exquant$.

To bound $\ORR(\exquant)$ in terms of $\pseudorev(\exquant)$ we consider
two cases.  When $\exquant \leq 1/2$ these terms can be directly bounded
as the first terms in both bounds are the same and the second terms
are within a factor of two of each other (by assumption $1-\exquant \geq
1/2$).  To show the claim for $\exquant > 1/2$ notice that

\begin{align*}
\ORR(1) &=\expect{\max\nolimits_i (\vvi)_+} \\
&= \hat{v} + \expect{\max\nolimits_i (\vvi)_+ - \hat{v}} \\
&\le \hat{v} + \expect{(\max\nolimits_i (\vvi)_+ - \hat{v})_+} \\
&= \hat{v} + \expect{(\max\nolimits_i \vvi - \hat{v})_+} \\
&\le \hat{v} + \sum\nolimits_i\expect{(\vvi - \hat{v})_+},
\end{align*}

for any $\hat{v}$. As a result, by setting $\hat{v} = \vv(1/2)$ we get

\begin{align*}
\ORR(1) &\le \vv(1/2)+ \sum\nolimits_i\expect{(\vvi - \vv(1/2))_+} \\
&\le 2{\pseudorev(1/2)}.
\end{align*}

From monotonicity of $\ORR$ and $\pseudorev$ we then conclude that for any $\exquant > 1/2$, $\ORR(\exquant) \leq \ORR(1) \leq 2 {\pseudorev(1/2)} \leq  2 {\pseudorev(\exquant)}$.

\end{proof}

\section{Revenue Linearity for Unit Demand Valuations Uniform on Hypercubes}
\label{sec:hypercube-linear}

In this section we show that unit-demand quasi-linear-utility agents whose
values for $m$~alternatives are i.i.d.\@ drawn from $U[0,1]$ are revenue linear.  Recall from \autoref{app:unit-demand} that an incentive compatible mechanism
offers a menu of lotteries to the agent.  Each lottery takes the form of
$(\payment(\type), \pralloc^1(\type), \ldots, \pralloc^m(\type))$, where $\sum_j \pralloc^j(\type) \leq 1$,
with $\payment$ denoting the price of the lottery and $\pralloc_j$ the
probability with which alternative~$j$ is allocated to the agent.  We sometimes write
$\pralloc$ as the vector $(\pralloc^1, \ldots, \pralloc^m)$.
In this section we abuse the notation and use $\util$ to denote a mapping that
maps a type $\type \in \typespace = [0, 1]^m$ to the expected utility of this
type in an incentive compatible mechanism.  We use the following lemma first noted by \citet{Roc85}.

\begin{lemma}
\label{lem:convex-util}
For a quasi-linear-utility agent, a utility function $\util$ corresponds to an incentive compatible mechanism if and only if it is convex. In this case, $\payment(\type) = \nabla \util\cdot
\type - \util(\type)$, and $\pralloc(\type) = \nabla \util(\type)$.
\end{lemma}

In the above lemma $\nabla \util(\type)$ is the gradient of the function
$\util$. Since selling any alternative accounts as a service, by \autoref{lem:convex-util} the allocation of a
type $\type$ is $||\nabla \util(\type)||_1$, the $L_1$~norm of the vector $\nabla
\util(t)$. Let $W$ be the space of convex utility functions $u$, and $c$ the
cost of producing an alternative.  Using \autoref{lem:convex-util}, we can reformulate the problem of revenue maximization under allocation constraint~$\calloc$ as follows:

\begin{align*}
\mathrm{maximize} & \int_{\typespace} [\nabla \util(\type)\cdot \type -
\util(\type)]\dens(\type) \: \dd \type - c \int_{\typespace} \vec{1}\cdot\nabla
\util(\type) \dens(\type) \: \dd \type \\
\mathrm{s.t.} \qquad & \util \in W \\
 & \forall S \subseteq \typespace, \int_{S} ||\nabla \util(\type)||_1 \: \dd \type
\leq \cumcalloc(\dens(S)).
\end{align*}

Recall from \autoref{sec:single-dimensional} the definition of the cumulative
allocation constraint~$\cumcalloc$.  Note also that the second constraint
automatically guarantees the feasibility constraint: for all but a measure zero
set of types, $||\nabla \util(\type)||_1 \leq 1$.  By our assumption,
$\dens(\type)$ is~$1$ everywhere on $[0, 1]^m$.

For any $\type \in \typespace$, define a scaling function $\patht_\type:[0,1]
\rightarrow \typespace$ as $\patht_\type(\alpha) = \alpha\type$.  Then $\patht_\type(0) =
\vec 0$, and $\patht_\type(1) = \type$, for any~$\type$. We now use the gradient theorem and write
\begin{align*}
\forall \type, \util(\type) - \util(0) = \int_0^1 \nabla \util(\patht(\alpha))\cdot
\patht_\type'(\alpha) \: \dd \alpha.
\end{align*}

In a revenue optimal mechanism, $\util(0) = 0$. Also, by definition of~$\patht$,
$\patht'(\alpha) =
\type$. Therefore,
\begin{align*}
\util(\type) = \int_0^1 \nabla \util(\alpha \type)\cdot \type \: \dd \alpha,
\qquad \forall \type \in \typespace.
\end{align*}

Using this, we can rewrite the objective function as 
\begin{align*}
&\int_\typespace \left[ \nabla \util(\type)\cdot (\type-c\vec{1}) - \int_0^1
\nabla \util(\alpha \type)\cdot \type \: \dd \alpha \right] \: \dd \type \newline \\
=& \int_\typespace \nabla \util(\type)\cdot (\type-c\vec{1}) \: \dd \type - \int_0^1
\int_\typespace \nabla \util(\alpha\type)\cdot \type \: \dd \type \: \dd \alpha.
\end{align*}

In the second term, change variables by defining $v = \alpha\type \in [0, 1]^m$.
Notice that $\type = v/\alpha$, and $\dd v^j = \alpha \: \dd\type^j$ for any $1\leq j \leq m$.
Therefore $\dd v = \alpha^m \: \dd \type$.  Define $\typespace_\alpha$ to be the set
of $\type\in \typespace$ such that $\max_{j} \type^j \leq \alpha$. The objective
is now rewritten as
\begin{align*}
&\int_\typespace \nabla \util(\type)\cdot (\type-c\vec{1}) \: \dd \type -
\frac{1}{\alpha^m} \int_0^1 \int_{v\in \typespace_\alpha} \nabla \util(v)\cdot (v/\alpha) \: \dd v \: \dd \alpha \\
=& \int_\typespace \nabla \util(\type)\cdot (\type-c\vec{1}) \: \dd \type -
\int_{v\in \typespace} \nabla \util(v)\cdot v \int_{\alpha=\max_j{v^j}}^1
\frac{1}{\alpha^{m+1}} \: \dd \alpha \: \dd v \\
=& \int_\typespace \nabla \util(\type)\cdot (\type-c\vec{1}) \: \dd \type -
\int_{v\in \typespace} \nabla \util(v)\cdot v \left[ \frac{1}{m(\max_j v^j)^m}
- \frac{1}{m} \right] \: \dd v \\
=& \int_\typespace \nabla \util(\type)\cdot \left[ \type \left(\frac{m+1}{m} -
\frac{1}{m (\max_j \type^j)^m} \right) - c\vec{1}\right] \: \dd \type.
\end{align*}

Now, if we relax the convexity constraint, the optimization problem is expressed
solely in terms of the gradient of~$\util$.  Next we argue that the optimal
solution to this optimization problem takes a particularly simple form.  First
note that the function $\type^j (\tfrac{m+1}{m} - \tfrac{1}{m \type^j})$ is
increasing in~$\type^j$.  Consider any feasible solution $\nabla \util$ to the program
and its alteration $\nabla \tilde \util$ in the following manner: at any
type~$\type$ where $j^*$ is $\argmax_j \type^j$, let $\nabla^{j^*} \tilde \util$
be $\sum_j \nabla^j \util$, and $\nabla^{j} \tilde \util$ be $0$ for all $j \neq
j^*$.  Since this alteration keeps the $L_1$-norm of $\nabla \util$, $\nabla
\util^*$ still satisfies all the constraints (except that we are relaxing the
convexity constraint for now).  But the objective function is pointwise better
for $\nabla \tilde \util$ than for $\nabla \util$.  Therefore, it suffices to
consider solution gradients whose coordinates at each type~$\type$ are all zero except
the one coordinate where the valuation is maximized (ties can be broken
arbitrarily).  But then the problem degenerates, and the optimal utility
function is given by a simple greedy procedure, which grows, at each type, in the
direction of the maximum valued alternative as much as allowed by the allocation
constraint~$\calloc$.  Formally, the optimal utility function is given by


\begin{align*}
    \util^*(\type) &=
        \begin{cases}
            0, & \max_j \type^j \leq \hat{\type}^c \\
            \int_{\alpha=\hat{\type}^c}^{\max_j \type^j} \calloc(1-\alpha^m)\: \dd \alpha, & \max_j \type^j > \hat{\type}^c,
        \end{cases}
\end{align*}

where $\hat{\type}^c$ solves
\begin{align*}
\hat{\type}^c \left(\frac{m+1}{m} - \frac{1}{m (\hat{\type}^c)^m} \right) = c.
\end{align*}

In particular, $\hat{\type}^0 = \sqrt[m]{\frac{1}{m+1}}$. This utility function
$\util$ specified above is convex and linear in~$\calloc$.  By
\autoref{lem:convex-util}, it is easy to see that $\util^*$ being linear implies
that $\Rev{\cdot}$ is also linear (noting that integral is a linear functional).

To summarize, we have shown that the ex ante optimal mechanism for constraint
$\exquant$ is to post a price of $\sqrt[m]{1-\exquant}$ for any service. The
quantile of each type $\type=(\type^1,\ldots,\type^m)$ is $\quant = 1-(\max_i\type^i)^m$ (see \autoref{fig1}).

\begin{figure}
\centering
\begin{tikzpicture}[scale = 5, align=center]
\filldraw[color=blue!20] (0.6,0.6) -- (1,1) -- (1,0) -- (0.6,0) -- (0.6,0.6);
\filldraw[color=red!20] (0.6,0.6) -- (1,1) -- (0,1) -- (0,0.6) -- (0.6,0.6);
\draw[thick] (0,0) -- (1,0) -- (1,1) -- (0,1) -- (0,0);
\filldraw (0.6,0.3) circle(0.1mm) node[left]{$\type$};
\draw[thick] (0.6,0) -- (0.6,0.6) -- (0,0.6);
\draw[thick] (0.6,0.6)-- (1,1);
\end{tikzpicture}
\caption{The quantile of type $\type = (\type^1,\type^2)$ with $\type^1 \geq \type^2$ is $\quant = 1-(\type^1)^2$.}
\label{fig1}
\end{figure}

\section{Reverse Auctions}
\label{app:procurement}

Reverse auctions can naturally be modeled as service constrained environments. Different agents, here
sellers, have different costs for providing different services, and the auctioneer has possibly different
values for different services, and wishes to acquire at most one service, and to do so in order to maximize
the value for the service acquired minus the payment for the service. Notice that the goal of maximizing
value minus payment is equivalent to minimizing payment minus value. Such an objective can be modeled as a
forward auction in which the seller has possibly different costs for selling items. In \autoref{sec:hypercube-linear}
we solve the forward auction problem with uniform values and uniform costs, which implies the following
results for the reverse auction problem.

More formally, we can transform a reverse auction problem into a forward auction as follows. Consider a single seller that can provide $m$ services where each service $i$ costs $c_i$. Assume that the cost of each service is drawn uniformly at random from the interval $[0,1]$, and assume that the value of each service for the auctioneer is $1$ (the analysis generalizes to arbitrary distributions and valuations, but the general analysis is not required here). Let $\pralloc(c) = (\pralloc^1(c),\ldots,\pralloc^m(c))$ be the vector of the probabilities of purchasing each service when the cost vector is $c$, and $\payment(c)$ the payment made to the seller by the auctioneer. Now the objective is to maximize
\begin{eqnarray*}
\int_{c\sim U[0,1]^m} \vec{1}\cdot \pralloc(c) - \payment(c) \: \dd c.
\end{eqnarray*}

We can also write the incentive compatibility constraint as
\begin{eqnarray*}
\payment(c) - c \cdot \pralloc(c) \geq \payment(c') - c\cdot \pralloc(c')
\end{eqnarray*}

for all cost vectors $c$ and $c'$. Now define functions $\bar{\pralloc}$ and $\bar{\payment}$ to be
\begin{eqnarray*}
\bar{\pralloc}(c) &=& \pralloc(\vec{1}- c) \\
\bar{\payment}(c) &=& \vec{1} \cdot \pralloc(\vec{1}- c) - \payment(\vec{1}- c). 
\end{eqnarray*}

Using the above notation we can rewrite the objective to be
\begin{eqnarray}
\int_{c \sim U[0,1]^m} \bar{\payment}(c) \: \dd c.
\label{reversetoforwardrev}\end{eqnarray}

Also, 
\begin{eqnarray*}
\payment(c) - c \cdot \pralloc(c) &=& (\vec{1} - c) \cdot \bar{\pralloc}(\vec{1} - c) - \bar{\payment}(\vec{1} - c) .
\end{eqnarray*}

Therefore, the incentive compatibility constraint is equal to
\begin{eqnarray}
c \cdot \bar{\pralloc}(c) - \bar{\payment}(c) \geq c \cdot \bar{\pralloc}(c') -
\bar{\payment}(c'),  \qquad \forall c, c'.
\label{reversetoforwardic}\end{eqnarray}

Now notice that the optimization problem given by~\eqref{reversetoforwardrev} and \eqref{reversetoforwardic} is equal to the standard formulation of a forward auction. We can therefore solve the reverse auction problems by transforming them into forward auction problems, solving the problem using our framework, and then transforming the solution back to the reverse auction setting.

In a reverse auction problem, classical auction theory says that (a) the optimal way to buy an object
(henceforth: a bridge) with value $1$ from a single agent with cost drawn from a
 uniform distribution on $[0,1]$ is to offer a take-it-or-leave-it payment
of $1/2$, (b) the optimal way to buy a bridge with value $1/2$ from one of multiple agents with uniformly
distributed costs is to run a second-price reverse auction with reserve price $1/2$, in which the agent with
the lowest cost (if it is less than $1/2$) constructs the bridge and is payed the minimum of the second lowest
cost and 1/2. The above interpretation of the
 marginal revenue mechanism in i.i.d.\@ settings is one of the most important
result in classical auction theory.  Our theory generalizes this to
 multi-dimensional preferences as follows.  Consider instead buying a
bridge that can be built using technology 1 or technology 2. It says that (a) the optimal way to buy a bridge
with value $1$ from a single
 agent with costs for the different technologies each drawn independently
 and uniformly from $[0,1]$ is to offer a take-it-or-leave-it payment of
$1-\sqrt{1/3}$ for either technology, (b) the optimal way to buy a bridge with value~$1$ from one of
multiple agents each with i.i.d.\@ uniform costs for each technology is to run the second-price reverse
auction with reserve $1-\sqrt{1/3}$ and allow the winning agent to choose her favorite technology to build the
bridge.

\end{document}